\documentclass{article}
\usepackage[utf8]{inputenc}
\usepackage{subfiles}
\usepackage[margin=1in]{geometry}
\usepackage{amsmath,amsthm,amssymb}
\usepackage{graphicx}
\usepackage{enumitem}

\newtheorem{definition}{Definition}[section]

\newtheorem{proposition}[definition]{Proposition}
\newtheorem{lemma}[definition]{Lemma}
\newtheorem{theorem}[definition]{Theorem}
\newtheorem{corollary}[definition]{Corollary}
\newtheorem{observation}[definition]{Observation}

\newcommand{\Exp}{\mbox{Exp}}

\title{Product Forms for FCFS Queueing Models with Arbitrary Server-Job Compatibilities: An Overview}
\author{Kristen Gardner and Rhonda Righter }
\date{June 2020}

\begin{document}

\maketitle

\begin{abstract}

In recent years a number of models involving different compatibilities between jobs and servers in queueing systems, or between agents and resources in matching systems, have been studied, and, under Markov assumptions and appropriate stability conditions, the stationary distributions have been shown to have product forms. We survey these results and show how, under an appropriate detailed description of the state, many are corollaries of similar results for the Order Independent Queue. We also discuss how to use the product form results to determine distributions for steady-state response times.     
    
\end{abstract}

\section{Introduction}
\label{sec:intro}

Systems in which servers are flexible in the types of customers that they
can serve, and customers are flexible in the servers at which they can be
processed, are very common in a wide range of practical settings. In call
centers, service representatives may be trained to handle different subsets
of requests, or may speak different languages. A customer who speaks only
Spanish can be helped by a representative who speaks only Spanish, or by a
representative who speaks both Spanish and English, or by a representative
who speaks both Spanish and Mandarin. \ In computer systems, some jobs may
be able to run only on those servers that have the job's data stored
locally, other jobs may require a server with a particular combination of
resources, and still other jobs may be able to run on any server. In
ride-sharing systems, drivers will only be assigned to users that are
\textquotedblleft nearby\textquotedblright\ in some sense.

This type of model is called a skill-based server model in the call center
literature. In the scheduling literature, the compatibility constraints
between job classes and servers are called eligibility constraints or
processing set restrictions, and the models are typically deterministic. For
matching models, compatibilities may be location based. While the language
and notation used to describe these models differ across research
communities, the common idea in all of the above examples is that the system
consists of multiple servers and multiple classes of jobs, with a bipartite
graph structure indicating which classes of jobs can be served by which
servers.

The examples above, and more broadly the \textquotedblleft flexible
job/server\textquotedblright\ models that exist in the literature, vary in
precisely how the bipartite matching structure is used to assign servers to
jobs. We mainly consider two service models, which we call the
\textquotedblleft collaborative\textquotedblright\ and \textquotedblleft
noncollaborative\textquotedblright\ models. In the collaborative model,
multiple servers can work together, with additive service rate, to process a
single job. This matches the computer systems setting, in which the same
(replicated) job can run on several different servers at once. In the
noncollaborative model, a customer can only enter service at a single
server. This matches the structure of a call center, in which a single
customer cannot speak with multiple representatives at the same time. In
both cases, we think of there being a single central queue for all
customers. When a server becomes available, it begins working on the next
compatible job in the queue, in first-come first-served (FCFS) order. In the
noncollaborative case, we must also specify which server will serve an
arriving job that finds multiple idle compatible servers. We will consider
two policies: Assign Longest Idle Server (ALIS), which is analogous to FCFS,
and Random Assignment to Idle Servers (RAIS).

An additional feature of many of models of service systems with job/server
compatibilities is redundancy, or job replication, i.e., the possibility of
sending multiple copies of the same job to multiple servers. For example,
this is a common practice in computer systems to combat unpredictable system
variability, so the hope is that the job may experience a significantly
shorter response time at one of the servers. Similarly, one idea for
reducing wait times on organ transplant waitlists is to allow patients to
join the waitlist in multiple geographic areas at the same time. Patients
are restricted in which waitlists they can join based on travel time: should
an organ become available at a particular hospital, the patient must be able
to travel to that hospital within a relatively short time frame to receive
the transplant. Generally systems with redundancy are not modeled as a
central FCFS queue as described above. Instead, each server has its own
dedicated queue and an arriving job can join the queues of multiple servers.
In the collaborative case, multiple copies of the same job can run on
different servers at the same time, and when the first copy completes
service all other copies are removed immediately from other servers or
queues. This is called cancel-on-completion or late cancellation. In the
noncollaborative case, all other copies of a job are removed from the system
as soon as the first copy enters service. This is called cancel-on-start or
early cancellation, and it is also equivalent to sending a single copy to
the queue with the least work. In both cases, the cancellations occur
without penalty. While the central FCFS queue and the job redundancy model
describe very different system dynamics, the two views turn out to be
sample-path equivalent, provided that service times are exponentially
distributed and i.i.d. across jobs and servers. We will explore this
relationship, as well as other model equivalences, in what follows.

Throughout most of this paper, we will make a few key assumptions: that jobs
of each class arrive according to independent Poisson processes, that
service times are exponentially distributed and i.i.d. across jobs and
servers, and that the scheduling discipline is FCFS. Under these
assumptions, we will see that the models introduced above, as well as
related models, exhibit product-form stationary distributions. Indeed,
product forms hold for several different state descriptors, each of which
provides different advantages in understanding system behavior. We first
consider the most detailed state descriptor, which tracks the classes of all
jobs in the system. This description lends itself to a concise proof of the
product form for the collaborative model, due to the Order Independence (OI)
results of Berezner and Krzesinski \cite{BK}, \cite{K}. We show that for the
noncollaborative model, both the job queue and the idle server queue are OI
queues, resulting in a product of product forms. We extend these
arguments to collaborative and noncollaborative
models with abandonments. We also show that the same product-form
stationary distribution holds for several related models, 
including new results for two-sided matching models with arrivals of both jobs and servers, and for make-to-stock inventory models with back ordering. 

Following the development of the product-form stationary distributions, we turn to using these results to derive system performance metrics. We begin with class-based response time distributions. We show that if there is a job class that is compatible with all servers, that class has an exponentially distributed response time for the collaborative model; indeed,
the response time for that class is the same as it would be for the M/M/1
queue in which all jobs are fully flexible. For the noncollaborative model,
the queueing time for that fully flexible class is a mixture of a mass at 0
and an exponential random variable. We use this result to show response time
distributions for all job classes in the collaborative model, and queueing
time distributions for all classes in the noncollaborative model, when the
compatibility matching has a nested structure.

Product-form distributions for an alternative, partially aggregated state
descriptor have been derived in the literature; we show that these results also follow as corollaries to the detailed product forms. 
The partially aggregated state description allows us to derive per-class response time distributions, conditioned on the set of busy servers and the order of the jobs they are currently serving.

We briefly discuss a related queueing model in which the state description
is the number of jobs of each class in the system (per-class
aggregation). While this state space no longer yields a Markovian description of the systme evolution, it has the same steady-state per-class mean
performance measures (mean number in system, probability the system is
empty) as the collaborative system. This state descriptor yields a simple, recursive approach to derive the system load and mean response time for our models when they are not nested.

We note that the product forms discussed in this paper are not the same as those obtained in the well-known Jackson and Kelly networks~\cite{J,Kelly}.
The standard Jackson and Kelly product forms arise in networks of queues, where the state of the network can be expressed as a product of the states at each queue.
In contrast, in this paper we primarily concentrate on
the internal product-form structure of the steady-state distribution for a
single queue.
While most of our focus is on single nodes with flexible jobs and servers, these nodes are quasi-reversible under our modeling assumptions, so a
network of such nodes also has a product form stationary distribution. 
That is, in steady state the distributions of each node will be as if they were
operating independently, as is the case in Jackson and Kelly networks.

Throughout the paper we provide pointers to the relevant literature in
context.

\section{Model}
\label{sec:model}

\begin{figure}
	\centering
	\includegraphics[scale=0.5]{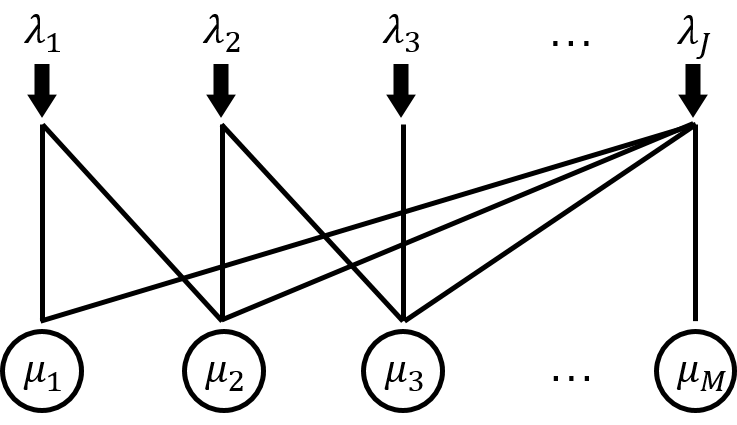}
	\caption{The system consists of $J$ classes of jobs, $M$ servers, and a bipartite matching structure indicating which job classes can be served by which servers.}
	\label{fig:bipartite}
\end{figure}

We note at the outset that our analysis requires a heavy dose of notation that we will often reuse and abuse in the interest of readability and ease of understanding. Notation that we use throughout the paper is summarized in Table~\ref{tab:notation}.

There are $J$ job classes with Poisson arrivals at rates $\lambda_{i}$, $M$ parallel servers with exponential service rates $\mu _{m}$, and a bipartite graph matching structure indicating which servers can serve which job classes (see Figure~\ref{fig:bipartite}).
For job class $i$, let $S_{i}=\{j:$ server $j$ can serve class $i\}$, and for a subset of job classes, $A$, let $S(A)= \bigcup\limits_{i\in A}S_{i}$ be the set of servers that can serve those classes. 
For example, for the system shown in Figure~\ref{fig:bipartite}, $S_1 = \{1,2\}$ and for $A = \{1,2\}$, $S(A) = \{1,2,3\}$.
For server $j$, let $C_{j} = \{ i:$ server $j$ can serve class $i \}$ be the set of job classes it can serve, and for a subset of servers, $B$, let $C(B)=$ $\bigcup\limits_{j\in B}C_{j}$ be the set of job classes that can be served by servers in $B$. 
For example, in Figure~\ref{fig:bipartite}, $C_3 = \{2,3,J\}$ and for $B = \{1,3\}$, $C(B) = \{1,2,3,J\}$.
For a subset of job classes, $A$, let $\mu (A)=\sum_{m\in S(A)}\mu _{m}$ and $\lambda(A)=\sum_{i\in A}\lambda _{i}$ be the total service rate and arrival rate for job classes in $A$, and, abusing notation, for a subset of servers, $B$, let $\mu (B)=\sum_{m\in B}\mu _{m}$ and $\lambda (B)=\sum_{i\in C(B)}\lambda_{i}$ be the total service rate and arrival rate for servers in $B$. 
It will be clear from the context whether the arguments of $\lambda $ and $\mu $ are job classes or servers. 
Finally, let $\mu =\sum_{m=1}^{M}\mu _{m}$ and $\lambda =\sum_{i=1}^{J}\lambda _{i}$ be the total system service rate and total system arrival rate respectively. 

Throughout, we will assume for stability that $\lambda (A) < \mu (A)$ for all subsets of job classes $A$.
We note that this condition is both necessary and sufficient for stability in the model described above (in particular, given i.i.d.\ exponential service times); absent this modeling assumption stability is a much more complicated question. (see Section~\ref{sec:related_work} for a more detailed discussion).

\begin{figure}
	\centering
	\includegraphics[scale=0.45]{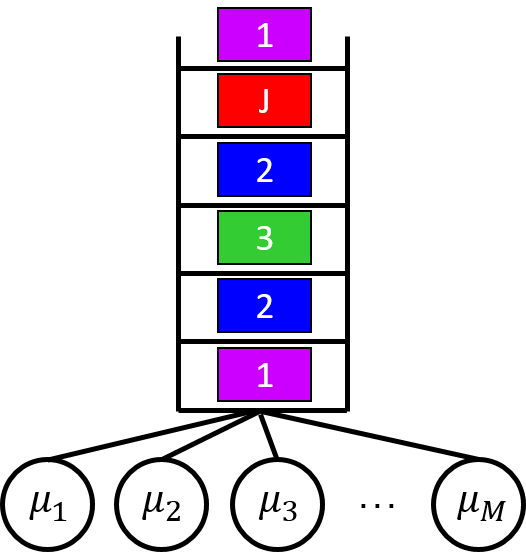} \hspace{0.3in}
	\includegraphics[scale=0.45]{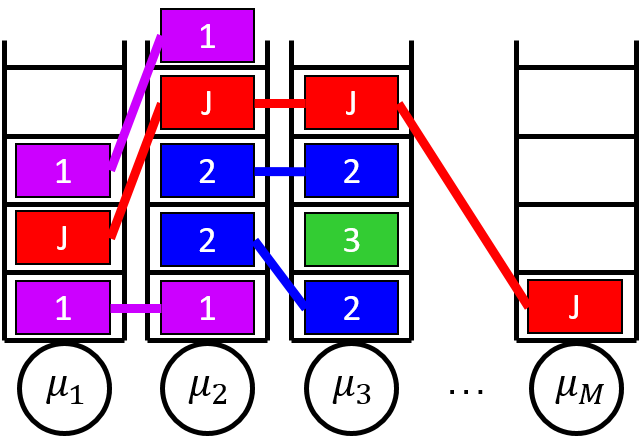} \hspace{0.3in}
	\includegraphics[scale=0.45]{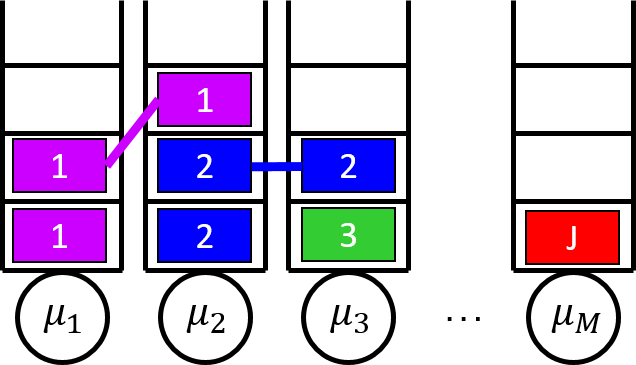} \\
	(a) Central FCFS queue \hspace{0.6in} (b) Collaborative model \hspace{0.6in} (c) Noncollaborative model
	\caption{The system can be viewed, equivalently, as (a) having a single FCFS queue, or as being a distributed system in (b) the collaborative model or (c) the noncollaborative model.}
	\label{fig:system_equivalences}
\end{figure}

We primarily consider two models of service: the collaborative model and the noncollaborative model.
In the noncollaborative model, a job can only be served by a single server.
When the first copy of a job enters service, all other copies are removed from the system immediately without penalty.
A job that arrives to the system and finds multiple idle compatible servers begins service on one of those servers, chosen according to some assignment rule.
We consider two assignment rules.
Under Assign Longest Idle Server (ALIS), the arriving job begins service on the compatible server that has been idle for the longest time.
Under Random Assignment to Idle Servers (RAIS), the arriving job chooses an idle server randomly; this selection must be drawn from a particular distribution, which we discuss in more detail in Section~\ref{sec:rais}.

In the collaborative model, a job may be in service at multiple servers at the same time.
When the first copy of a job completes service, all other copies are removed from the system immediately without penalty.
A job that is in service at a set of servers $B$ receives combined service rate $\mu(B)$, hence the job experiences an exponential service time with rate $\mu(B)$.
Unlike in the noncollaborative case, no assignment rule is needed for an arriving job that finds multiple idle compatible servers; such a job simply enters service on all of the idle compatible servers.
Note that in the collaborative case (but not in the noncollaborative case), we can assume without loss of generality that the set of job classes a server can serve is unique to that server, i.e., $C_i \neq C_j$ for $i\neq j$. 
This follows because of the FCFS and collaborative assumptions; if $C_i = C_j $ for $i\neq j$, then servers $i$ and $j$ will always be serving the same (oldest compatible) job, so they can be considered to be a single server with rate $\mu _{i}+\mu _{j}$. 
In the noncollaborative model we allow multiple servers that are identical
in their service rates and their sets of compatible job classes. 

There are two equivalent ways of viewing the system dynamics.
In the first, shown in Figure~\ref{fig:system_equivalences}(a), all arriving jobs join a single FCFS queue.
When a server $j$ becomes available, it begins working on the first job in the central queue that has class $i \in C_j$.
In the collaborative model, the ``queue'' contains all jobs in the system, including those currently in service, so that a newly available server may begin working on a job that is already in service at some other server.
In the noncollaborative model, the queue contains only those jobs that are not in service.
The second system view is that of a distributed system, in which each server has its own queue and works on the jobs in its queue in FCFS order.
Here, an arriving job of class-$i$ joins the queue at all servers in $S_i$.
In the collaborative model (Figure~\ref{fig:system_equivalences}(b)), multiple copies of the job may be in service at different servers.
For example, in Figure~\ref{fig:system_equivalences}(b) the class-1 job shown at the head of the queue at both servers 1 and 2 is in service at both servers.
In the noncollaborative model (Figure~\ref{fig:system_equivalences}(c)), only one copy of a job can be in service.
In our example, the class-1 job shown at the head of the queue at server 1 is in service at server 1, and its other copy has been removed from the queue at server 2.
Another equivalent model in the noncollaborative case is to assume, again, that each server has its own dedicated queue, and that each arriving job in class-$i$ is routed to the server in $S_i$ with the least work (i.e., Join-the-Shortest-Work among compatible servers)~\cite{ABV d,ABV}.


Throughout the remainder of this paper, we will rely primarily on the central-queue view of the system when developing our state descriptors.
We introduce here the notation used in the state descriptors.
This notation captures a great deal of information about the system, and each state descriptor uses a slightly different subset of this information to capture different aspects of the system dynamics.
We elaborate further on the specific state descriptors in the sections that follow.
Let $\vec{c}_n = (c_1,\dots,c_n)$ denote the classes of all jobs in the central queue, where $c_i$ is the class of the $i$th job in the queue in order of arrival (so $c_1$ is the class of the oldest job, and $c_n$ is the class of the most recent arrival).
As noted above, for the collaborative model the ``queue'' refers to all jobs in the system, including those in service, whereas for the noncollaborative model the ``queue'' refers to only those jobs that are not in service.
Let $\vec{b}_l = (b_1,\dots,b_l)$ be the vector of busy servers in the arrival order of the jobs that they are serving (so $b_1$ is serving the oldest job in the system, and $b_l$ is serving the most recent arrival among the jobs in service).
We use $\vec{z}_m$ to denote, in the noncollaborative model, an interleaving of $\vec{c}_n$ and $\vec{b}_l$ ordered by job arrival time, where the state tracks the job class for positions corresponding to jobs in the queue, and it tracks the busy server for positions corresponding to jobs in service.
Let $\vec{s}_k = (s_1,\dots,s_k)$ be a vector of idle servers in the order in which they became idle, where $l + k = M$.
We use $\vec{d}_{l} = (d_1,\dots,d_{l})$ to denote the classes of the jobs currently in service, where $d_i$ is the class of the job in service at server $b_i$.
The vector $\vec{n}_{l} = (n_1,\dots,n_{l})$ denotes the number of jobs waiting to be served ``between'' the jobs in service.
That is, $n_i$ gives the number of jobs that arrived after the job in service at server $b_i$ and before the job in service at server $b_{i+1}$.
Finally, $\vec{x}_J = (x_1,\dots,x_J)$ denotes the number of jobs of each class in the system; $x_i$ is the number of class-$i$ jobs in the system.

\begin{table}
	\centering
	\begin{tabular}{|l|l|}
		\hline
		\textbf{Notation} & \textbf{Definition}  \\ \hline 
		$J$ & Number of job classes \\ \hline
		$M$ & Number of servers \\ \hline
		$\lambda_i$ & Arrival rate of class-$i$ jobs \\ \hline 
		$\lambda = \sum_i \lambda_i$ & Total system arrival rate \\ \hline
		$\mu_j$ & Service rate at server $j$ \\ \hline
		$\mu = \sum_j \mu_j$ & Total system service rate \\ \hline
		$S_i$ & The set of servers that can serve class-$i$ jobs \\ \hline
		$S(A)$ & The set of servers that can serve any job class in subset $A$ \\ \hline
		$C_j$ & The set of job classes that can be served by server $j$ \\ \hline
		$C(B)$ & The set of job classes that can be served by any server in subset $B$ \\ \hline \hline 
		$\vec{c}_n$ & Classes of all jobs in the queue, in order of arrival \\ \hline
		$\vec{b}_{l}$ & Busy servers, in the order in which they became busy \\ \hline
		$\vec{s}_k$ & Idle servers, in the order in which they became idle \\ \hline
		$\vec{d}_{l}$ & Classes of jobs in service, in the order in which they entered service \\ \hline
		$\vec{n}_{l}$ & Number of jobs not in service in between consecutive jobs in service \\ \hline
		$\vec{x}_J$ & Number of jobs of each class in the system \\ \hline
		$\vec{z}_m$ & Interleaving of jobs in the queue and busy servers, in order of job arrival times \\ \hline 
	\end{tabular}
	\caption{Summary of notation. Top section: system notation. Bottom section: notation used in state descriptors.}
	\label{tab:notation}
\end{table}

\section{Detailed States and Product Forms}
\label{sec:detailed_states}

We first consider the most complete descriptions of the state: the detailed state descriptor tracks the classes of all jobs in the order of their arrival, denoted by $\vec{c}_n$.
In the noncollaborative case, the two assigment rules that we consider (ALIS and RAIS) also require us to track some information about the servers.
Under ALIS (Section~\ref{sec:alis}), the state descriptor includes the vector $\vec{s}_k$, which tracks all idle servers in the order in which they became idle.
Under RAIS (Section~\ref{sec:rais}), the state descriptor is $\vec{z}_m$, which is an interleaving of $\vec{c}_n$ and $\vec{b}_{\ell}$, where $\vec{b}_l$ tracks all busy servers ordered by the arrival times of the jobs they are serving.

For both the collaborative and noncollaborative (ALIS and RAIS) models, we show that the stationary distribution for the above state descriptor exhibits a product form.
We begin with the collaborative case, which is a special case of what are known as \textquotedblleft Order Independent\textquotedblright (OI) queues, so named because the total service rate given to all jobs in the queue depends only on their classes, not on their order. 

At the end of the section, we discuss related models that also
have product-form stationary distributions.

\subsection{The Collaborative Model and Order Independent Queues\label{OI}}

The system state is $\vec{c}_{n}=(c_{1},\ldots ,c_{n})$, where $c_{i}$ is the class of the $i$'th job in the system in order of arrival, including both jobs that are in the queue and jobs that are in service (possibly at more than one server). 
The subscript $n$ can take on the values $0,1,...$; we will generally leave this implicit. 
Let ${\cal C}$ be the set of all such states. 
Abusing notation, let $S(\vec{c}_{n})=S(\{c_{1},\ldots ,c_{n}\})$ be the set of servers that can serve at least one of the jobs in the queue, and let 
\begin{equation}
\mu (\vec{c}_{n}):=\mu (\{c_{1},\ldots ,c_{n}\})=\sum_{m\in S(\vec{c}%
	_{n})}\mu _{m}  \label{collab mu}
\end{equation}%
be the total rate of service to jobs in the queue. Also, define $\Delta _{j}(%
\vec{c}_{n})$ as the (marginal) rate of service given to the $j$'th job in
the queue, so $\sum_{j=1}^{n}\Delta _{j}(\vec{c}_{n})=\mu (\vec{c}_{n})$,
and 
\[
\Delta _{j}(\vec{c}_{n})=\sum_{m\in S(\vec{c}_{j})\backslash S(\vec{c}%
	_{j-1})}\mu _{m}=\sum_{m\in S(\vec{c}_{j})}\mu _{m}-\sum_{m\in S(\vec{c}%
	_{j-1})}\mu _{m}=\mu (\vec{c}_{j})-\mu (\vec{c}_{j-1}). 
\]%

\begin{figure}
	\centering
	\includegraphics[scale=0.5]{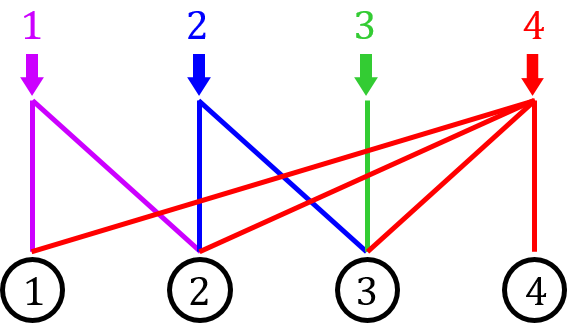} \hspace{1in}
	\includegraphics[scale=0.5]{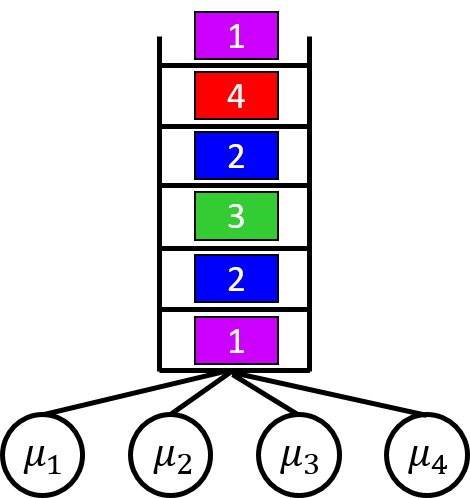}
	\caption{Left: a bipartite graph structure showing job/server compatibility constraints. Right: a system state in the collaborative model. In this example, the state is (1,2,3,2,4,1).}
	\label{fig:collab_ex}
\end{figure}

Figure~\ref{fig:collab_ex} shows an example of a possible state in the collaborative model.
In this example, the state is $\vec{c}_{n} = (1,2,3,2,4,1)$.
The class-1 job at the head of the queue is in service at both server 1 and server 2 ($\Delta_1(\vec{c}_{n}) = \mu_1 + \mu_2$), the class-2 job immediately behind it is in service at server 3 ($\Delta_2(\vec{c}_{n}) = \mu_3$), and the class-4 job is in service at server 4 ($\Delta_3(\vec{c}_{n}) = \Delta_4(\vec{c}_{n}) = 0$ and $\Delta_5(\vec{c}_{n}) = \mu_4$).
The total rate of service given to all jobs is $\mu(\vec{c}_n) = \mu_1 + \mu_2 + \mu_3 + \mu_4$.

Note that, for the collaborative model, the total service rate $\mu (\vec{c}_{n})$ is independent of the order of the jobs in the queue, and the service rate allocated to the $j$'th job doesn't depend on the jobs (if any) after job $j$ in the queue. 
That is, our collaborative model is a special case of an \emph{Order Independent} queue, defined as follows.

\begin{definition}\label{def:OI} A queue is said to be \emph{Order Independent} (OI) if it satisfies the following properties for all $\vec{c}_{n}$:
	\begin{enumerate}[label=(\roman*)]
		\item $\Delta _{j}(\vec{c}_{n})=$ $\Delta _{j}(\vec{c}_{j})$ for $j\leq n$,
		\item $\mu (\vec{c}_{n})$ is the same for any permutation of $c_{1},\ldots
		,c_{n}$,
		\item $\mu (c)>0$ for any class $c$.
	\end{enumerate}
\end{definition}

Properties (i)-(iii) are essentially the same as those defined by Krzesinski~\cite{K}, though Krzesinski's definition generalizes property (i) to also allow for an extra multiplicative service rate factor based on the number in queue. 
Our collaborative model can be generalized in this way to have {\em speed scaling}, i.e., a total service capacity of $\gamma (n)$ when there are $n$ jobs in the system.
Under this generalization, $\mu_{m}$ would be interpreted as the proportion of the total capacity used by server $m$, for $m\in S(\vec{c}_{n})$. 
The addition of a speed-scaling factor is straightforward, but complicates the notation, so we do not include it here. Similarly,
it is straightforward to include an arrival scaling (or rejection) factor, so that
arrivals of class $c$ occur according to a Poisson process with rate $r(n)\lambda_c$ when the number in queue is $n$, but, again, we do not include it for ease of exposition.

Property (iii) ensures irreducibility of the Markov chain. Property (ii) guarantees that the total rate of transitions out of any state $\vec{c}_{n}$ depends only on the set of customers in the queue and not on their order. A consequence of (i) and (ii) is that $\Delta _{j}(\vec{c}_{j})$ does not depend on the order of the first $j-1$ jobs. 
As Krzesinski shows, Properties (i)-(iii) are all that are needed to show that the stationary distribution has a product form. 
The proof below is essentially the same as Krzesinski's~\cite{K}; the special case for the collaborative model was shown by Gardner et al.~\cite{GZDMHS}.

We first recall the definition of quasi-reversibility.

\begin{definition}
	A queue is called quasi-reversible if its state at time $t$ is independent of
	
	\begin{itemize}
		\item arrival times after time $t$
		
		\item departure times before time $t$.
	\end{itemize}
\end{definition}

An equivalent definition is that the stationary distribution for the queue satisfies partial balance, i.e., for any state and any class $c$, the steady-state rate out of the state due to a class-$c$ arrival equals the steady-state rate into the state due to a class-$c$ departure, and the rate out of the state due to a departure equals the rate in due to an arrival.
Theorem~\ref{thm:collaborative pi} shows that the OI properties are sufficient for partial balance for the product-form distribution, and therefore for quasi-reversibility of the system.

\begin{theorem}
	\label{thm:collaborative pi}(Berezner, Kriel, Krzesinski~\cite{BKK}, Krzesinski~\cite{K}) 
	For any OI queue, including the collaborative model, the system is quasi-reversible and the stationary distribution is given by 
	\begin{equation}
	\pi^C (\vec{c}_{n})=\pi^C (\emptyset)\prod\limits_{i=1}^{n}\frac{\lambda _{c_{i}}}{\mu (\vec{c}_{i})}=\frac{\lambda _{c_{n}}}{\mu (\vec{c}_{n})}\pi^C (\vec{c}_{n-1})
	\label{product form}
	\end{equation}%
	as long as $G:=\sum_{n,\vec{c}_{n}\in {\cal C}}\prod\limits_{i=1}^{n}\frac{\lambda
		_{c_{i}}}{\mu (\vec{c}_{i})}<\infty $. Then $\pi^C \left( \emptyset \right) =1/G$ is
	the probability the system is empty.
\end{theorem}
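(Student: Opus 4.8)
The plan is to verify that the proposed distribution $\pi^C$ satisfies \emph{partial balance}, since, as noted just before the statement, partial balance is equivalent to quasi-reversibility and, being a refinement of the global balance equations, simultaneously certifies that $\pi^C$ is stationary. First I fix the transition structure of the Markov chain on ${\cal C}$: a class-$c$ arrival sends $\vec{c}_n \mapsto (\vec{c}_n, c)$ at rate $\lambda_c$, and a departure of the $j$th job sends $\vec{c}_n$ to the length-$(n-1)$ state obtained by deleting $c_j$, at rate $\Delta_j(\vec{c}_n)$. With these dynamics there are exactly two families of partial balance identities to check for each state $\vec{c}_n$: the \emph{departure} identity (the rate out due to all departures equals the rate in due to all arrivals), and, for each class $c$, the \emph{arrival} identity (the rate out due to a class-$c$ arrival equals the rate in due to a class-$c$ departure).

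The departure identity is immediate. Its left-hand side is $\pi^C(\vec{c}_n)\sum_{j=1}^n \Delta_j(\vec{c}_n) = \pi^C(\vec{c}_n)\,\mu(\vec{c}_n)$ by the defining property $\sum_j \Delta_j = \mu$, while the only state that reaches $\vec{c}_n$ by an arrival is $\vec{c}_{n-1}$ (appending $c_n$), giving right-hand side $\pi^C(\vec{c}_{n-1})\lambda_{c_n}$. These agree precisely because of the recursion $\pi^C(\vec{c}_n)\mu(\vec{c}_n)=\lambda_{c_n}\pi^C(\vec{c}_{n-1})$ built into \eqref{product form}.

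The arrival identity is the crux and the step I expect to be the main obstacle. Its left-hand side is $\pi^C(\vec{c}_n)\lambda_c$, while its right-hand side sums over the $n+1$ states that reach $\vec{c}_n$ via a class-$c$ departure, namely those obtained by inserting a class-$c$ job into each position $j=1,\dots,n+1$ of $\vec{c}_n$; denote such a state $\vec{c}_n^{\,+j}$, and note that these insertions correspond exactly to the distinct class-$c$ departures landing in $\vec{c}_n$. My plan is to expand $\pi^C(\vec{c}_n^{\,+j})$ using \eqref{product form} and simplify via the OI properties: by order independence (ii), every prefix of $\vec{c}_n^{\,+j}$ of length exceeding $j$ has $\mu$-value $\mu(\vec{c}_i\cup\{c\})$ independent of where $c$ was inserted, and by (i) together with the definition of $\Delta_j$ the departing rate is $\Delta_j(\vec{c}_n^{\,+j})=\mu(\vec{c}_{j-1}\cup\{c\})-\mu(\vec{c}_{j-1})$, depending only on the prefix up to the inserted job. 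Substituting these, each summand $\pi^C(\vec{c}_n^{\,+j})\Delta_j(\vec{c}_n^{\,+j})$ becomes $\pi^C(\emptyset)\lambda_c(U_j-U_{j-1})$ for an explicit quantity $U_j$ whose denominator uses $\mu(\vec{c}_i)$ below the insertion point and $\mu(\vec{c}_i\cup\{c\})$ above it. The key algebraic point is that the difference $\mu(\vec{c}_{j-1}\cup\{c\})-\mu(\vec{c}_{j-1})$ supplied by $\Delta_j$ is exactly what makes the sum telescope: the boundary term at $j=1$ survives intact because $\mu(\emptyset)=0$, the interior differences cancel, and the term at $j=n+1$ yields $U_{n+1}=\prod_{i=1}^n \lambda_{c_i}/\mu(\vec{c}_i)$, so that the whole sum collapses to $\pi^C(\vec{c}_n)\lambda_c$. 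Getting the index shifts and the cancellation of the $\mu(\vec{c}_{j-1}\cup\{c\})$ factor right is the main bookkeeping difficulty; an alternative is to prove the arrival identity by induction on $n$, peeling off $c_n$, but the telescoping is more transparent.

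Finally I would assemble the pieces. Summing the arrival identity over all classes $c$ shows the total rate into $\vec{c}_n$ from departures equals $\pi^C(\vec{c}_n)\lambda$, and adding this to the departure identity recovers the full global balance equation $\pi^C(\vec{c}_n)(\lambda+\mu(\vec{c}_n))=\pi^C(\vec{c}_{n-1})\lambda_{c_n}+\sum_{c}\sum_j \pi^C(\vec{c}_n^{\,+j})\Delta_j(\vec{c}_n^{\,+j})$, so $\pi^C$ is stationary once normalized by $G<\infty$ (which is assumed). Because the two families of identities are precisely the partial balance conditions, their validity is equivalent to quasi-reversibility, completing the proof; property (iii) enters only through irreducibility, guaranteeing that every $\mu(\vec{c}_i)$ in the denominators is positive so that $\pi^C$ is the unique stationary distribution.
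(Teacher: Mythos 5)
Your proposal is correct, and it follows the paper's overall strategy exactly: verify partial balance for the proposed product form, with the same two families of identities (the departure identity, which is immediate from the recursion $\mu(\vec{c}_n)\pi^C(\vec{c}_n)=\lambda_{c_n}\pi^C(\vec{c}_{n-1})$, and the class-$c$ arrival identity, which is the real work). The one genuine difference is how you dispatch the arrival identity. The paper proves it by induction on $n$: it strips the last original job $c_n$ out of every summand except the boundary term $j=n$, invokes the induction hypothesis for $\vec{c}_{n-1}$, and finishes with a short algebraic cancellation. You instead telescope directly over the insertion position $j$, writing each summand $\pi^C(\vec{c}_n^{\,+j})\Delta_j(\vec{c}_n^{\,+j})$ as $\pi^C(\emptyset)\lambda_c(V_{j-1}-V_{j-2})$ where $V_j$ uses denominators $\mu(\vec{c}_i)$ up to index $j$ and $\mu(\vec{c}_i\cup\{c\})$ beyond it; the ratio $V_{j-1}/V_{j-2}=\mu(\vec{c}_{j-1},c)/\mu(\vec{c}_{j-1})$ is exactly what the factor $\Delta_j(\vec{c}_n^{\,+j})/\mu(\vec{c}_{j-1},c)=1-\mu(\vec{c}_{j-1})/\mu(\vec{c}_{j-1},c)$ needs, the $j=1$ boundary term survives because $\mu(\emptyset)=0$, and the sum collapses to $\pi^C(\vec{c}_n)\lambda_c$. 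The two arguments encode the same cancellation, but yours makes the mechanism visible in one pass and avoids carrying an induction hypothesis, at the cost of slightly heavier index bookkeeping; the paper's induction is the version that generalizes most directly to the later noncollaborative/RAIS proofs, where the same inductive template is reused. One small point worth making explicit in a final write-up: distinct insertion positions $j$ can yield the same state when $c$ is inserted adjacent to an existing class-$c$ job, so the sum should be understood as running over (state, departing-position) pairs rather than over distinct states; the paper's formulation has the same feature and your sum is the correct one either way.
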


\begin{proof}
	We will show that the product form of equation (\ref{product form})
	satisfies partial balance. First note that equation (\ref{product form})
	immediately satisfies the condition that the rate out of any state $\vec{c}%
	_{n}$ due to a departure equals the rate into the state due to an arrival: $%
	\mu (\vec{c}_{n})\pi^C (\vec{c}_{n})=\lambda _{c_{n}}\pi^C (\vec{c}_{n-1})$. Now
	we show that under the product-form probabilities (\ref{product form}), the
	rate out of any state $\vec{c}_{n}$ due to a class-$c$ arrival equals the
	rate into the state due to a class-$c$ departure, $\forall c$:%
	\begin{align*}
	\pi^C (\vec{c}_{n})\lambda _{c} &=\sum_{j=0}^{n}\pi^C (c_{1},\ldots
	,c_{j},c,c_{j+1},\ldots ,c_{n})\Delta _{j+1}(c_{1},\ldots
	,c_{j},c,c_{j+1},\ldots ,c_{n}) \\
	&=\sum_{j=0}^{n}\pi^C (c_{1},\ldots ,c_{j},c,c_{j+1},\ldots ,c_{n})\Delta
	_{j+1}(\vec{c}_{j},c) \qquad \text{ (Property (i))}\\
	&=\frac{\lambda _{c_{n}}}{\mu (\vec{c}_{n},c)}\sum_{j=0}^{n-1}\pi^C
	(c_{1},\ldots ,c_{j},c,c_{j+1},\ldots ,c_{n-1})\Delta _{j+1}(\vec{c}_{j},c) \\
	&\qquad \qquad \qquad \qquad +\frac{\lambda _{c}}{\mu (\vec{c}_{n},c)}
	\pi^C(\vec{c}_{n})\Delta _{n+1}(\vec{c}_{n},c)  \qquad \text{((2) and Property (ii)).}
	\end{align*}%
	We will show this by induction on $n$. For $n=0$, $\pi^C (0)\lambda _{c}=\pi^C
	(c)\mu (c)$ is immediate, given property (iii). Assume partial balance holds
	for the product-form probabilities (\ref{product form}) for any $\vec{c}%
	_{n-1}$, i.e.,
	\[
	\pi^C (\vec{c}_{n-1})\lambda _{c}=\sum_{j=0}^{n-1}\pi^C (c_{1},\ldots
	,c_{j},c,c_{j+1},\ldots ,c_{n-1})\Delta _{j+1}(\vec{c}_{j},c). 
	\]%
	Then we need to show%
	\begin{align*}
	\pi^C (\vec{c}_{n})\lambda _{c} &=\frac{\lambda _{c_{n}}}{\mu (\vec{c}_{n},c)}\sum_{j=0}^{n-1}\pi^C
	(c_{1},\ldots ,c_{j},c,c_{j+1},\ldots ,c_{n-1})\Delta _{j+1}(\vec{c}_{j},c)+%
	\frac{\lambda _{c}}{\mu (\vec{c}_{n},c)}\pi^C (\vec{c}_{n})\Delta _{j+1}(\vec{c%
	}_{n},c)\text{ }
	\end{align*}
	From the induction hypothesis, and the definition of $\Delta _{n+1}(\vec{c%
	}_{n},c)$, the right-hand-side is:
	\begin{align*}
	&\frac{\lambda _{c_{n}}}{\mu (\vec{c}_{n},c)}\pi^C (\vec{c}_{n-1})\lambda
	_{c}+ \frac{\lambda _{c}}{\mu (\vec{c}_{n},c)}\pi^C (\vec{c}_{n})[\mu (\vec{c}%
	_{n},c)-\mu (\vec{c}_{n})] \\
	&=\frac{\lambda _{c_{n}}}{\mu (\vec{c}_{n},c)}\pi^C (\vec{c}_{n-1})\lambda
	_{c}+\lambda _{c}\pi^C (\vec{c}_{n})-\frac{\lambda _{c}}{\mu (\vec{c}_{n},c)}%
	\frac{\lambda _{c_{n}}}{\mu (\vec{c}_{n})}\pi^C (\vec{c}_{n-1})\mu (\vec{c}%
	_{n}) \\
	&=\pi^C (\vec{c}_{n})\lambda _{c}.
	\end{align*}
\end{proof}

In the collaborative example in Figure~\ref{fig:collab_ex},
recalling that $\mu$ is the total service rate, the stationary probability of the depicted state is
$$\pi^C(\vec{c}_n) = \pi^C(\emptyset) \left( \frac{\lambda_1}{\mu_2 + \mu_3}  \right) \left( \frac{\lambda_2}{\mu_1 + \mu_2 + \mu_3}
\right) \left( \frac{\lambda_3}{\mu_1 + \mu_2 + \mu_3} \right) \left( \frac{\lambda_2}{\mu_1 + \mu_2 + \mu_3} \right) \left( \frac{\lambda_4}{\mu} \right) \left( \frac{\lambda_1}{\mu}  \right).$$

We reiterate that the skill-based collaborative queue is a special case of an OI\ queue. 
Other queues that are OI are the (noncollaborative) M/M/K queue with heterogeneous
servers, the M/M/$\infty $ queue, the M/M/1 queue under processor sharing, and the Multiserver Station with Concurrent Classes of Customers (MSCCC) queue~\cite{K}. 
The MSCCC queue is a multi-class M/M/K/FCFS queue with the restriction that at most $B_{c}$ customers of class $c$ can be in service (noncollaboratively) at the same time. 
The M/M/1/LCFS queue is{\em \ not} an OI queue even though it is a symmetric queue in the sense of Kelly, and is therefore quasi-reversible.

The following corollaries, generalizing the OI queue, follow immediately from Theorem~\ref{thm:collaborative pi}.

\begin{corollary}
	The departure process from an OI queue is a Poisson process; thus a network of OI queues will have a product-form stationary distribution.
\end{corollary}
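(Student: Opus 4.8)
The plan is to obtain both assertions from the quasi-reversibility already established in Theorem~\ref{thm:collaborative pi}, so that no fresh analysis of the product form is needed. The Poisson-departures statement can be read off from the partial balance equations proved there, via a time-reversal argument, and the network product form is then an instance of the standard theory of quasi-reversible (Kelly) networks.

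For the departures, I would pass to the stationary time-reversed process $\tilde X(t)=X(-t)$, which is again Markov, with reversed rates $\tilde q(\vec c,\vec c')=\pi^C(\vec c')\,q(\vec c',\vec c)/\pi^C(\vec c)$ in terms of the forward rates $q$, and under which each forward class-$c$ departure appears as a class-$c$ arrival. The key computation is the total reversed class-$c$ arrival rate out of a state $\vec c_n$, namely $\pi^C(\vec c_n)^{-1}\sum_{\vec c'}\pi^C(\vec c')\,q(\vec c',\vec c_n)$, where $\vec c'$ ranges over the states $(c_1,\dots,c_j,c,c_{j+1},\dots,c_n)$ formed by inserting a class-$c$ job and $q(\vec c',\vec c_n)=\Delta_{j+1}(\vec c')$ is the rate at which that inserted job departs. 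By the class-$c$ partial balance identity established inside the proof of Theorem~\ref{thm:collaborative pi}, this inner sum equals $\pi^C(\vec c_n)\lambda_c$, so the reversed arrival rate is exactly $\lambda_c$, independent of the state $\vec c_n$.

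State-independence is the crux: a stationary Markov chain whose class-$c$ transitions occur at the same rate $\lambda_c$ from every state generates those transitions as a Poisson process of rate $\lambda_c$ that is independent of the current state. Applying this in the reversed chain and translating back, each forward per-class departure process is Poisson of rate $\lambda_c$; the distinct classes correspond to disjoint transition types and so form independent Poisson streams, and the aggregate departure process is their superposition, Poisson of rate $\lambda=\sum_c\lambda_c$. Note that the instantaneous forward aggregate rate $\mu(\vec c_n)$ is itself state-dependent; the argument goes through the reversed chain precisely to sidestep this. The only delicate point is the bookkeeping of the insertion positions $j+1$ so that the reversed rates match the forward rates $\Delta_{j+1}$ exactly, after which constancy of the rate is immediate.

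Finally, for a network of such OI nodes with Markovian class routing, I would invoke the quasi-reversible network framework: write the traffic equations for the effective per-class arrival rates at each node, take the candidate stationary distribution to be the product over nodes of the single-node product forms of Theorem~\ref{thm:collaborative pi} evaluated at those rates, and verify that this product satisfies partial balance for the whole network. The verification reduces node by node to the single-node partial balance already in hand, the link being that the Poisson-departures result lets the internal routing flows and the external arrivals combine into Poisson input streams at each node. Establishing that this superposition preserves per-node quasi-reversibility is the main obstacle, and it is exactly why the two assertions are packaged together as a single corollary.
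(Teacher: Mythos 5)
Your argument is correct and is exactly the route the paper intends: the corollary is stated as an immediate consequence of the quasi-reversibility (partial balance) established in Theorem~\ref{thm:collaborative pi}, and your time-reversal computation showing that the reversed class-$c$ arrival rate equals $\lambda_c$ in every state is the standard way to extract Poisson departures from that, with the network statement then following from the usual quasi-reversible-network machinery. One small caution: in a network with feedback the internal flows are generally \emph{not} Poisson, so the node-by-node verification of partial balance for the product-form candidate (which you do propose) is the load-bearing step, rather than the superposition-into-Poisson-input-streams heuristic.
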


Consider an order independent queue with abandonments, where a job of class $i$ abandons the system after an exponential time with rate $\gamma_i$.
This model also fits within the OI framework (i.e., properties (i)-(iii) are satisfied), so again has a product-form stationary distribution.

\begin{corollary}
	\label{abandon}
	In an order independent queue with abandonments,
	\begin{equation}
	\pi^C_A (\vec{c}_{n})=\pi^C_A (\emptyset)\prod\limits_{i=1}^{n}\frac{\lambda _{c_{i}}}{\mu (\vec{c}_{i})}=\frac{\lambda _{c_{n}}}{\mu (\vec{c}_{n})}\pi^C_A (\vec{c}_{n-1}),
	\end{equation}
	where
	\[
	\mu (\vec{c}_{j})=\sum_{i=1}^{j}\gamma _{c_{i}}+\sum_{m\in S(\vec{c}%
		_{j})}\mu _{m}
	\]%
	and
	\[
	\Delta _{j}(\vec{c}_{n})
	=\mu (\vec{c}_{j})-\mu (\vec{c}%
	_{j-1})=
	\gamma _{c_{j}}+\sum_{m\in S(\vec{c}_{j})\backslash
		S(\vec{c}_{j-1})}\mu _{m}\text{.} 
	\]%
\end{corollary}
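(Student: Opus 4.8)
The plan is to show that the abandonment model, after redefining the marginal departure rates $\Delta_j$ and the total rate $\mu(\vec{c}_n)$ as in the statement, is still an Order Independent queue in the sense of Definition~\ref{def:OI}, and then to invoke Theorem~\ref{thm:collaborative pi} directly. The essential observation is that in the OI framework $\Delta_j(\vec{c}_n)$ is nothing but the total rate at which the $j$th job leaves the queue, with the associated transition $\vec{c}_n \mapsto (c_1,\dots,c_{j-1},c_{j+1},\dots,c_n)$; the proof of Theorem~\ref{thm:collaborative pi} manipulates the $\Delta_j$ abstractly and never uses the fact that a departure is a service completion as opposed to some other cause of removal. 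Adding class-$i$ abandonments at rate $\gamma_i$ simply augments the rate of this same transition by $\gamma_{c_j}$, giving $\Delta_j(\vec{c}_n) = \gamma_{c_j} + \sum_{m\in S(\vec{c}_j)\setminus S(\vec{c}_{j-1})}\mu_m$, so no new transitions or state structure are introduced.

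First I would verify the three OI properties for these augmented quantities. For property (i), note that $\gamma_{c_j}$ depends only on the class of the $j$th job and the service term depends only on $\vec{c}_j$, so $\Delta_j(\vec{c}_n)=\Delta_j(\vec{c}_j)$ is immediate and independent of the jobs after position $j$. For property (ii), writing $\mu(\vec{c}_n)=\sum_{i=1}^n \gamma_{c_i}+\sum_{m\in S(\vec{c}_n)}\mu_m$, the first sum is symmetric in $c_1,\dots,c_n$ and the second depends only on the set $S(\vec{c}_n)=\bigcup_i S_{c_i}$; both are invariant under permutation. Property (iii) follows because $\mu(c)=\gamma_c+\sum_{m\in S_c}\mu_m \geq \sum_{m\in S_c}\mu_m>0$ by the corresponding property of the underlying collaborative queue, so irreducibility is preserved even when some $\gamma_c=0$. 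I would also check the telescoping consistency $\sum_{j=1}^n \Delta_j(\vec{c}_n)=\mu(\vec{c}_n)$, which holds by construction since $\Delta_j=\mu(\vec{c}_j)-\mu(\vec{c}_{j-1})$ and $\mu(\emptyset)=0$.

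With properties (i)--(iii) established, Theorem~\ref{thm:collaborative pi} applies verbatim to yield the claimed product form, with the empty-system probability again fixed by normalization. I do not expect a genuine obstacle here: the only point requiring care is the conceptual one of recognizing that abandonment and service are interchangeable from the point of view of the OI transition rates, so that the abstract partial-balance argument in the proof of Theorem~\ref{thm:collaborative pi}---which relies only on Property (i), Property (ii), and the definition $\Delta_{n+1}(\vec{c}_n,c)=\mu(\vec{c}_n,c)-\mu(\vec{c}_n)$---continues to go through unchanged under the new $\mu$ and $\Delta$.
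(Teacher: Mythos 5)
Your proposal matches the paper's own (very brief) justification exactly: the paper simply observes that the abandonment model still satisfies OI properties (i)--(iii) with the augmented $\mu(\vec{c}_j)$ and $\Delta_j$, so Theorem~\ref{thm:collaborative pi} applies, and you have correctly identified that the partial-balance argument there uses only the abstract transition rates, not the physical cause of a departure. Your verification of the three properties and the telescoping identity is a sound and complete filling-in of the details the paper leaves implicit.
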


As Berezner and Krzesinski \cite{BK} show, the product form result for OI queues also extends easily to OI loss models, where, following their terminology, we use the term loss in the general sense that arriving jobs may be rejected or lost,
depending on the current state. For the product-form
to continue to hold, the acceptance, or truncated, region
must satisfy the {\em truncation property}:
the job acceptance/rejection decision is also order independent and
rejection is more likely with more jobs. In particular, letting $%
{\cal C}_{T}$ comprise the states $(\vec{c}_{n},c)$ in which jobs of class $c$
are accepted when the state just before their arrival is $\vec{c}_{n}$, we have the following.

\begin{definition}\label{def:truncation}
	A set of states ${\cal C}_T$ satisfies the \emph{truncation property} if:
	\begin{description}
		\item[(i)] $\vec{c}_{n}$ $\in {\cal C}_{T}\Rightarrow {\cal P}(\vec{c}_{n}) \subseteq {\cal C}_{T}$, 
		where ${\cal P}( \vec{c}_{n} )$ denotes the set of permutations of $\vec{c}_{n}$, and
		
		\item[(ii)] $\vec{c}_{n}$ $\in {\cal C}_{T}\Rightarrow \vec{c}_{n-1}$ $\in {\cal C}_{T}$. \ 
		That is, using part (i) of the truncation property, if a job would be accepted with a given
		set of jobs in the queue, it will still be accepted if any job is removed
		from that set.
	\end{description}
\end{definition}

Letting $\vec{x}=(x_{1},...,x_{J})$ be the per-class aggregated state for $\vec{c}%
_{n}$ (which is sufficient for the acceptance/rejection decision because of
its OI property), the truncation property means the acceptance region for $x$
is coordinately convex. That is, the rejection decision is a threshold
decision, such that arrivals of type $c$ are rejected if $%
x_{c}>t(x_{1},...,x_{c-1},x_{c+1},...x_{J})$ for some function $t$. Simple
examples include having an upper bound on the total number of jobs, or having
upper bounds on the number in each job class.

The product form, now for $\vec{c}_{n}$ $\in {\cal C}_{T}$, is exactly the same,
except for the normalizing constant. In other words, the stationary
probability of being in a state in ${\cal C}_{T}$ for the loss model is the same
as the conditional probability of being in that state in the model without
losses, given that the state is in ${\cal C}_{T}$.
Let $\vec{C}\in{\cal C}$ be the random variable representing the state of the original collaborative system, with no rejections, in steady state, i.e., $\vec{C} \sim \pi^C$. Let 
$\vec{C}_{T}\in{\cal C}_{T}$ and $\pi^C_{T}$ be similarly defined for the model with rejections.

\begin{corollary}
	\label{loss}%
	For an OI queue with job rejection, if the acceptance region ${\cal{C}}_{T}$
	satisfies the truncation property, then 
	\[
	P\{\vec{C}_{T}=\vec{c}_{n}\}=
	P\{\vec{C}=\vec{c}_{n}|\vec{C}\in {\cal C}_{T}\}=\pi^C _{T}(\vec{c}_{n})=\pi^C
	_{T}(\emptyset )\prod\limits_{i=1}^{n}\frac{\lambda _{c_{i}}}{\mu (\vec{c}%
		_{i})}=\frac{\lambda _{c_{n}}}{\mu (\vec{c}_{n})}\pi^C (\vec{c}_{n-1})\text{
		for }\vec{c}_{n}\in {\cal C}_{T}\text{,}
	\]%
	where $\pi^C _{T}(\emptyset )=\pi^C (\emptyset)/P\{\vec{C}\in {\cal C}_{T}\}$.
\end{corollary}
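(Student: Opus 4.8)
The plan is to show that the renormalized restriction of the product form to $\mathcal{C}_T$ satisfies partial balance on the truncated state space, which—together with irreducibility of the truncated chain—identifies it as the unique stationary distribution of the loss model. Because $\pi^C_T(\vec{c}_n) = \pi^C(\vec{c}_n)/P\{\vec{C} \in \mathcal{C}_T\}$ differs from $\pi^C(\vec{c}_n)$ only by the positive constant $P\{\vec{C} \in \mathcal{C}_T\}$, every balance equation that held for $\pi^C$ and that involves only states lying in $\mathcal{C}_T$ will continue to hold for $\pi^C_T$ after dividing through by this constant. The real content of the proof is therefore to check, using the truncation property, that all the transitions appearing in the relevant balance equations stay inside $\mathcal{C}_T$.

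First I would verify the departure balance $\mu(\vec{c}_n)\,\pi^C_T(\vec{c}_n) = \lambda_{c_n}\,\pi^C_T(\vec{c}_{n-1})$ for every $\vec{c}_n \in \mathcal{C}_T$. Truncation does not affect service, so the total departure rate out of $\vec{c}_n$ is still $\mu(\vec{c}_n)$, and by part (ii) of the truncation property $\vec{c}_{n-1} \in \mathcal{C}_T$, so that the unique arrival inflow to $\vec{c}_n$—a class-$c_n$ arrival from $\vec{c}_{n-1}$—is an accepted transition between states of $\mathcal{C}_T$. This identity is then just the corresponding identity from the proof of Theorem~\ref{thm:collaborative pi}, divided by $P\{\vec{C} \in \mathcal{C}_T\}$.

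The main work is the arrival balance: for each $\vec{c}_n \in \mathcal{C}_T$ and each class $c$, the rate out of $\vec{c}_n$ due to a class-$c$ arrival must equal the rate into $\vec{c}_n$ due to a class-$c$ departure. I would split into two cases according to whether $(\vec{c}_n, c) \in \mathcal{C}_T$. If it is, the arrival is accepted, and by part (i) of the truncation property every permutation of $(\vec{c}_n, c)$—in particular each source state $(c_1,\ldots,c_j,c,c_{j+1},\ldots,c_n)$ from which a class-$c$ departure produces $\vec{c}_n$—also lies in $\mathcal{C}_T$, so the arrival-balance equation established in the proof of Theorem~\ref{thm:collaborative pi} carries over intact after renormalization. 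If $(\vec{c}_n, c) \notin \mathcal{C}_T$, the arrival is rejected and the outflow is zero; I would then argue the inflow is zero as well, since if any source state $(c_1,\ldots,c_j,c,c_{j+1},\ldots,c_n)$ were in $\mathcal{C}_T$, part (i) would force its permutation $(\vec{c}_n, c)$ into $\mathcal{C}_T$, a contradiction.

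I expect the rejected-arrival case to be the main obstacle, since it is the only place where the loss dynamics genuinely differ from those of the original OI queue, and it is precisely here that permutation invariance (part (i)), rather than prefix closure (part (ii)), does the work. Finally, I would note that the truncated chain is irreducible on $\mathcal{C}_T$: from any state one reaches $\emptyset$ by departures, which remain in $\mathcal{C}_T$ by parts (i) and (ii), and $\emptyset$ reaches any $\vec{c}_n \in \mathcal{C}_T$ along the accepted arrival sequence guaranteed by part (ii). Hence partial balance pins down $\pi^C_T$ uniquely, and the stated value $\pi^C_T(\emptyset) = \pi^C(\emptyset)/P\{\vec{C} \in \mathcal{C}_T\}$ follows from the requirement that $\pi^C_T$ sum to one over $\mathcal{C}_T$.
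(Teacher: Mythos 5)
Your proposal is correct and follows essentially the same route as the paper: partial balance equations involving only states of $\mathcal{C}_T$ carry over from Theorem~\ref{thm:collaborative pi} after renormalization, and for a rejected class-$c$ arrival both sides vanish because permutation closure (part (i) of the truncation property) forces every source state $(c_1,\ldots,c_j,c,c_{j+1},\ldots,c_n)$ out of $\mathcal{C}_T$ whenever $(\vec{c}_n,c)\notin\mathcal{C}_T$. The only cosmetic differences are that you add an explicit irreducibility check and obtain the normalizing constant from summability rather than from the direct computation of the conditional probability $P\{\vec{C}=\vec{c}_n\mid \vec{C}\in\mathcal{C}_T\}$; both yield $\pi^C_T(\emptyset)=\pi^C(\emptyset)/P\{\vec{C}\in\mathcal{C}_T\}$.
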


\begin{proof}
	To see that $\pi_{T}$ has the given product form, note
	that for states and transitions to states in ${\cal C}_{T}$ the same partial balance equations hold as for the
	original OI\ queue, and for transitions where some of the states are not in $%
	{\cal C}_{T}$, the partial balance equations are easily seen to reduce to $0=0$
	because of the truncation property.
	For example, if $\vec{c}_{n}\in {\cal C}_{T}$, but $(\vec{c}_{n},c)\notin 
	{\cal C}_{T}$, then the rate out of $\vec{c}_{n}$ due to a class-$c$ arrival
	is 0, and the rate into $\vec{c}_{n}$ due to a class-$c$ departure is also
	0, because $\pi^C _{T}(\vec{c}_{n},c)=0$ for all permutations of $(\vec{c}%
	_{n},c)$. Also, for 
	$\vec{c}_{n} \in {\cal C}_{T}$,
	
	\[
	P\{\vec{C}=\vec{c}_{n}|\vec{C}\in {\cal C}_{T}\}=\frac{\pi^C (\emptyset
		)\prod\limits_{i=1}^{n}\frac{\lambda _{c_{i}}}{\mu (\vec{c}_{i})}}{%
		\sum_{j}\sum_{\vec{c}_{j}\in {\cal C}_{T}}\pi^C ( \vec{c}_{j}) }%
	=G\prod\limits_{i=1}^{n}\frac{\lambda _{c_{i}}}{\mu (\vec{c}_{i})}
	\]%
	where $G= \pi^C (\emptyset )/P\{\vec{C}\in {\cal C}_{T}\}$ is a
	normalizing constant, and, because of the form of $\pi^C _{T}$, $G=\pi^C_{T}(\emptyset )$.
\end{proof}

The following special cases will be useful later. Let the subscript $-A$
represent the system where all job classes in $A$ are removed. Let the
subscript ${\vdash B}$ represent a reduced system in which all the servers in set $B
$ are removed, as well as all job classes that are compatible with those
servers, i.e., job class $i$ is removed if $S_{i}\cap B\neq \emptyset $.
Note that if the original system is stable, such subsystems will also be stable.

\begin{corollary}
	\label{reduced}%
	\begin{description}
		\item[(i)] For all $\vec{c}_{n}\in {\cal C}_{-A}$%
		\[
		P\{\vec{C}=\vec{c}_{n}|\vec{C}\in {\cal C}_{-A}\}=P\{\vec{C}_{-A}=\vec{c}%
		_{n}\}=\pi^C _{-A}(\vec{c}_{n}). 
		\]
		
		\item[(ii)] For all $\vec{c}_{n}\in {\cal C}_{\vdash B}$%
		\[
		P\{\vec{C}=\vec{c}_{n}|\vec{C}\in {\cal C}_{\vdash B}\}=P\{\vec{C}_{\vdash B}=\vec{c}%
		_{n}\}=\pi^C _{\vdash B}(\vec{c}_{n}). 
		\]
	\end{description}
\end{corollary}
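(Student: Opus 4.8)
The plan is to recognize both parts as instances of the loss/truncation result in Corollary~\ref{loss}, where ``removing'' a set of classes is implemented by rejecting every arrival of those classes. For part~(i), I would take the acceptance region to be ${\cal C}_{-A}$, the set of states $\vec{c}_n$ none of whose entries lie in $A$; rejecting a class-$c$ arrival precisely when $c\in A$ produces exactly this region. I then verify the two conditions of Definition~\ref{def:truncation}: permutation closure holds because membership in ${\cal C}_{-A}$ depends only on the \emph{set} of classes present, and prefix closure holds because deleting the most recent job cannot introduce a forbidden class. Since part~(i) removes no servers, the service-rate function $\mu(\vec{c}_i)=\sum_{m\in S(\vec{c}_i)}\mu_m$ is literally unchanged, so Corollary~\ref{loss} applied to ${\cal C}_{-A}$ gives $P\{\vec{C}=\vec{c}_n\mid\vec{C}\in{\cal C}_{-A}\}=\pi^C_{-A}(\vec{c}_n)$ directly.

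For part~(ii), I would set $A=\{i:S_i\cap B\neq\emptyset\}$, the classes compatible with some server in $B$, so that as a set of allowed states ${\cal C}_{\vdash B}$ coincides with ${\cal C}_{-A}$; the truncation property then holds by the same argument as in part~(i). The one genuinely new point---and the step I expect to be the main obstacle---is that the $\vdash B$ system also \emph{deletes the servers in} $B$, which a priori alters the service-rate function to $\mu_{\vdash B}(\vec{c}_i)=\sum_{m\in S(\vec{c}_i)\setminus B}\mu_m$. The key observation is that this alteration is invisible on the acceptance region: for every state $\vec{c}_n\in{\cal C}_{\vdash B}$, each present class $c$ satisfies $S_c\cap B=\emptyset$ by the definition of $A$, whence $S(\vec{c}_i)\cap B=\emptyset$ and $S(\vec{c}_i)\setminus B=S(\vec{c}_i)$ for every $i\le n$. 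Thus $\mu_{\vdash B}(\vec{c}_i)=\mu(\vec{c}_i)$ on all of ${\cal C}_{\vdash B}$, so the product-form weights in the reduced and original systems agree term by term.

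With the service-rate function confirmed to be unchanged on ${\cal C}_{\vdash B}$, the truncated original chain and the genuinely reduced $\vdash B$ chain have identical transition rates on ${\cal C}_{\vdash B}$---the surviving arrival rates match, and by the observation above so do the service rates---so they share a stationary distribution. Corollary~\ref{loss} identifies that distribution as the product form with normalizing constant $\pi^C(\emptyset)/P\{\vec{C}\in{\cal C}_{\vdash B}\}$, which is exactly $\pi^C_{\vdash B}$. Everything outside the service-rate invariance in part~(ii) is routine bookkeeping, so I would present that observation as the heart of the proof and treat the truncation-property checks as immediate.
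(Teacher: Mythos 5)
Your proof is correct and matches the paper's intended derivation: the paper states this corollary without proof as an immediate special case of Corollary~\ref{loss} (with the truncation property of Definition~\ref{def:truncation} checked exactly as you do), which is precisely your route for both parts. Your additional observation for part~(ii) --- that $\mu_{\vdash B}(\vec{c}_i)=\mu(\vec{c}_i)$ on all of ${\cal C}_{\vdash B}$ because no surviving class is compatible with a server in $B$, so the product-form weights of the truncated and genuinely reduced systems agree term by term --- is exactly the detail needed to identify the conditional distribution with $\pi^C_{\vdash B}$, and you are right to single it out as the heart of the argument.
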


We mention here a recent extension of the OI queue by Comte and Dorsman~\cite{CD}, the "pass and swap" queue. In their model there is an undirected graph linking the classes of the OI queue, such that an edge between two classes
indicates that they are ``swappable.'' The service process satisfies
the conditions of the OI queue, but now completing (or replaced) jobs replace later, swappable jobs in the queue. A job that completes or is replaced and that finds no later swappable job leaves the system. They show that the same product form steady-state distribution holds for the pass and swap queue.

\subsection{The Noncollaborative Service Model: Assign Longest Idle Server}
\label{sec:alis}

We now turn to the noncollaborative model, in which a job is only allowed to enter service on one server and services are completed nonpreemptively. 
For this model we must also specify which server is used when an arriving job finds multiple idle and compatible servers; in this section we assume that this is according to Assign Longest Idle (compatible) Server (ALIS), and we will use the superscript $ALIS$ for the stationary distribution.

For the noncollaborative ALIS model we define the state as $(\vec{c}_{n},\vec{s}_{k})$
where $c_{i}$ is the class of the $i$'th oldest job that is {\em not}
receiving service, and $s_{i}$ is the idle server that has been idle $i$'th
longest (out of $k$ that are idle). 
Note that unlike in the collaborative model, here the $\vec{c}_n$ vector includes \emph{only} those jobs that are in the queue waiting for service (we will call this the job queue) and \emph{not} jobs that are in service.

\begin{figure}
	\centering
	\includegraphics[scale=0.5]{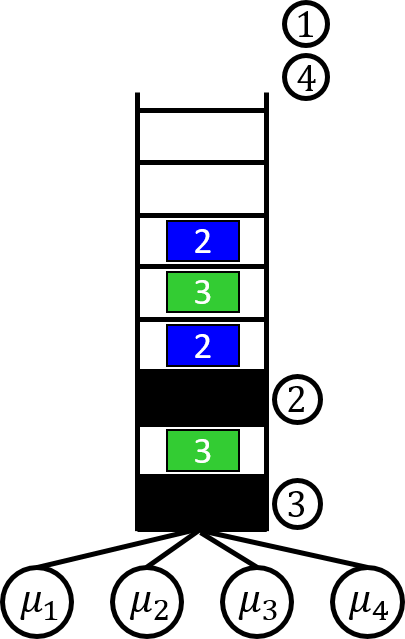}
	\caption{A system state in the noncollaborative model. In this example, the state is $(3,2,3,2;4,1)$.}
	\label{fig:noncollab_ex}
\end{figure}

Figure~\ref{fig:noncollab_ex} shows an example of a possible state in the noncollaborative ALIS model.
The state here is $(3,2,3,2;4,1)$. The class-3 job at the head of the queue is waiting to enter service on server 3, and the class-2 job immediately behind it is waiting to enter service on server 2 or server 3.
Servers 4 and 1 are idle, and server 4 became idle before server 1.
While our state does not explicitly record the positions of the busy servers within the job queue, we can infer that, for any job class $c$ that appears in the job queue, all servers in $S(c)$ are serving jobs that arrived earlier than that class-$c$ job.
For example, we can tell from the state of the job queue that server 2 is serving a job that arrived earlier than the first class-$2$ job in the queue.

We define the set of valid states, ${\cal X}^{ALIS}$, as those states $(\vec{%
	c}_{n},\vec{s}_{k})$ such that $s_{i}\notin S(\vec{c}_{n})$, $i=1,\ldots ,k$%
. That is, ${\cal X}^{ALIS}={\cal C}_{\vdash \vec{s}_{k}}^{C}\times {\cal S}$%
, where ${\cal S}$ is the set of all permutations of all subsets of $\{1,...,M%
\}$, and ${\cal C}_{\vdash \vec{s}_{k}}^{C}$ is the set of valid states for
the system queue (including jobs in service) for the reduced
collaborative model with the servers in $\vec{s}_{k}$ removed. Defining, as before, 
\[
\mu (\vec{c}_{n})=\sum_{m\in S(\vec{c}_{n})}\mu _{m}
\]%
we now have that $\mu (\vec{c}_{n})$ is the rate at which one of the first $n
$ jobs in the {\em job queue} leaves the queue (and enters service), and $%
\Delta _{j}(\vec{c}_{n})=\Delta _{j}(\vec{c}_{j})=\mu (\vec{c}_{j})-\mu (%
\vec{c}_{j-1})$ is the rate at which the $j$'th job in the job queue leaves
the queue (and enters service). The OI properties (i)-(iii) given in
Definition~\ref{def:OI} continue to hold for $\Delta _{j}(\vec{c}_{n})$ and $%
\mu (\vec{c}_{n})$. Indeed, given $\vec{s}_{k}$, the job queue is an OI loss
queue. In state $(\vec{c}_{n},\vec{s}_{k})$ an arrival of class $c$ will be
rejected from the job queue (and it will remove a server from the
idle-server queue) if $s_{i}\in S(c)$ for some $i=1,...,k$. The
state-dependent acceptance region for the job queue, ${\cal S}(\vec{s}_{k})$%
, satisfies the truncation property of the OI loss queue given in Definition~%
\ref{def:truncation}. 

We now consider the idle server queue.
Let $\lambda (\vec{s}_{j})$ be the rate of arrivals of jobs that are
compatible with one of the first $j$ (idle) servers, i.e., the rate of
departures from the idle server queue when it is in state $\vec{s}_{j}$. For 
$k\geq j$, let 
\[
\Delta _{j}^{\lambda }(\vec{s}_{k})=\lambda (\vec{s}_{j})-\lambda (\vec{s}%
_{j-1})=\sum_{i\in C(\vec{s}_{j})\backslash C(\vec{s}_{j-1})}\lambda
_{i}\geq 0
\]%
be the rate at which the $j$'th idle server will become busy (leave the idle
server queue). Note that we have the same OI properties (i)-(iii) for $\lambda (%
\vec{s}_{k})$ and $\Delta _{j}^{\lambda }(\vec{s}_{n})$ as we had for $\mu (%
\vec{c}_{n})$ and $\Delta _{j}(\vec{c}_{n})$:

(i) \ $\Delta _{j}^{\lambda }(\vec{s}_{k})=\Delta _{j}^{\lambda }(\vec{s}%
_{j})$ for $j\leq k$,

(ii) $\lambda (\vec{s}_{j})$ is the same for any permutation of $%
s_{1},\ldots ,s_{j}$ (order independence),

(iii) $\lambda (s)>0$ for any server $s$.

That is, given $\vec{c}_{n}$, the idle server queue is also an OI loss
queue, where we can think of servers of type $s$ arriving according to a Poisson
process at rate $\mu _{s}$, but if the server is already in the queue in
state $\vec{s}_{k}$, or if it will remain busy serving another job, i.e., if 
$s\in \vec{s}_{k}\cup S(\vec{c}_{n})$, then the arrival is rejected. Hence,
the acceptance region for the idle-server queue, given $\vec{c}_{n}$, also
satisfies the truncation property. 


The stationary distribution of the noncollaborative ALIS model has a
\textquotedblleft product of product forms\textquotedblright\ distribution,
with a product form component for the job queue and one for the idle server
queue.

\begin{theorem}
	\label{ALIS product form} (Adan et al. \cite{AKRW}) For the noncollaborative
	model, under FCFS for jobs and ALIS for servers, and given the stability
	condition, for $(\vec{c}_{n},\vec{s}_{k})\in {\cal X}$, 
	\begin{align}
	\pi ^{ALIS}(\vec{c}_{n},\vec{s}_{k})& =\pi ^{ALIS}(\emptyset ,\emptyset
	)\prod\limits_{i=1}^{n}\frac{\lambda _{c_{i}}}{\mu (\vec{c}_{i})}%
	\prod\limits_{j=1}^{k}\frac{\mu _{s_{j}}}{\lambda (\vec{s}_{j})} \\
	& =\frac{\lambda _{c_{n}}}{\mu (\vec{c}_{n})}\pi ^{ALIS}(\vec{c}_{n-1},\vec{s%
	}_{k})=\frac{\mu _{s_{k}}}{\lambda (\vec{s}_{k})}\pi ^{ALIS}(\vec{c}_{n},%
	\vec{s}_{k-1}),
	\end{align}%
	where $\pi ^{ALIS}(\emptyset ,\emptyset )$ is a normalizing constant equal
	to the probability that all servers are busy and that there are no jobs
	waiting in the queue.
\end{theorem}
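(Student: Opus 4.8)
The plan is to verify that the proposed product form
$\pi^{ALIS}(\vec{c}_n,\vec{s}_k)=\pi^{ALIS}(\emptyset,\emptyset)\,P(\vec{c}_n)\,Q(\vec{s}_k)$, with $P(\vec{c}_n)=\prod_{i=1}^n \lambda_{c_i}/\mu(\vec{c}_i)$ and $Q(\vec{s}_k)=\prod_{j=1}^k \mu_{s_j}/\lambda(\vec{s}_j)$, satisfies global balance, by decomposing the balance equation at each valid state into two pieces that are each resolved by the OI loss-queue results already in hand (Theorem~\ref{thm:collaborative pi} and Corollary~\ref{loss}). The recursive identities in the statement then follow immediately by peeling off the last factor of $P$ or of $Q$.

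First I would classify the transitions out of a valid state $(\vec{c}_n,\vec{s}_k)$ into four types and observe that each changes exactly one of the two components: (A1) a class-$c$ arrival with $S(c)\cap\vec{s}_k=\emptyset$ appends $c$ to the job queue; (A2) a class-$c$ arrival with $S(c)\cap\vec{s}_k\neq\emptyset$ removes the longest-idle compatible server from $\vec{s}_k$ (this is where ALIS enters); (B1) a busy server $b\in S(\vec{c}_n)$ completes and, by FCFS, pulls the oldest compatible job out of the queue; and (B2) a busy server $b\notin S(\vec{c}_n)$ completes and appends itself to the end of $\vec{s}_k$. Using validity (no idle server is compatible with any queued job, so all of $S(\vec{c}_n)$ is busy), I would check that the aggregate rate at which the $j$-th queued job enters service is $\Delta_j(\vec{c}_n)=\mu(\vec{c}_j)-\mu(\vec{c}_{j-1})$ and, symmetrically, that the rate at which the $j$-th idle server becomes busy is $\Delta^\lambda_j(\vec{s}_k)=\lambda(\vec{s}_j)-\lambda(\vec{s}_{j-1})$; these are exactly the marginal rates of the two OI loss queues identified just before the theorem.

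Next I would write down two separate balance relations. The job-queue relation groups the transitions (A1) and (B1), which leave $\vec{s}_k$ fixed; after factoring $Q(\vec{s}_k)$ out of every term it reduces to the global balance equation of the OI loss queue on $\vec{c}_n$ with acceptance region ${\cal S}(\vec{s}_k)=\{c:S(c)\cap\vec{s}_k=\emptyset\}$, whose stationary solution is proportional to $P(\vec{c}_n)$ by Corollary~\ref{loss}, since that region satisfies the truncation property. The idle-queue relation groups (A2) and (B2), which leave $\vec{c}_n$ fixed; after factoring $P(\vec{c}_n)$ out it becomes the global balance of the OI loss queue on $\vec{s}_k$ in which server $s$ ``arrives'' at rate $\mu_s$, is ``served'' at aggregate rate $\lambda(\vec{s}_k)$, and is rejected when $s\in S(\vec{c}_n)$; with the roles of $\lambda$ and $\mu$ interchanged, its stationary solution is proportional to $Q(\vec{s}_k)$, again by Corollary~\ref{loss}.

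Finally I would argue that these two relations sum to full global balance. Every transition out of $(\vec{c}_n,\vec{s}_k)$ lies in exactly one group, and, crucially, every transition into $(\vec{c}_n,\vec{s}_k)$ also falls into exactly one group: inflows via (A1)-reverse (from $(\vec{c}_{n-1},\vec{s}_k)$) and (B1)-reverse (from a state with one extra queued job) change only the job queue, while inflows via (A2)-reverse and (B2)-reverse (from $(\vec{c}_n,\vec{s}_{k-1})$) change only the idle queue, and the two inflow sets are disjoint and exhaustive. Adding the two balanced relations therefore reproduces the global balance equation at $(\vec{c}_n,\vec{s}_k)$. I expect the main obstacle to be bookkeeping rather than conceptual: one must confirm that the combined-system rates coincide with the OI loss-queue marginal rates $\Delta_j$ and $\Delta^\lambda_j$ (this is precisely where FCFS, ALIS, and the validity constraint do the real work) and that the inflow partition is clean, so that the factored equations are genuinely the two OI loss-queue balances rather than merely consequences of them.
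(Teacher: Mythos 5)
Your proposal is correct and takes essentially the same approach as the paper: both decompose the transitions at a valid state $(\vec{c}_n,\vec{s}_k)$ into those that change only the job queue and those that change only the idle-server queue, recognize each component as an OI loss queue whose acceptance region satisfies the truncation property, and reuse the balance argument of Theorem~\ref{thm:collaborative pi} (the paper phrases this as partial balance per class/server rather than two summed global-balance relations, but the substance is identical).
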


\begin{proof}
	Note that if there is an arrival to or departure from the {\em job queue},
	the state of the idle servers do not change, and if there is an arrival to
	or departure from the set of idle servers, the state of the job queue does
	not change. Thus, the proof that the product form with $\pi ^{ALIS}(\vec{c}%
	_{n},\vec{s}_{k})=\frac{\lambda _{c_{n}}}{\mu (\vec{c}_{n})}\pi ^{ALIS}(\vec{%
		c}_{n-1},\vec{s}_{k})$ satisfies partial balance for job queue arrivals and
	departures, for fixed ${\vec s}_k$, is exactly the same as for our earlier proof for Theorem~\ref%
	{thm:collaborative pi}, using the truncation property
	of our acceptance region for job queue arrivals.
	For the idle server queue, fixing $\vec{c}_n$, the proof that the
	rate out of $(\vec{c}_{n},\vec{s}_{k})$ due to departures of idle servers
	equals the rate in due to arrivals of idle servers is immediate from $\pi
	^{ALIS}(\vec{c}_{n},\vec{s}_{k})=\frac{\mu _{s_{k}}}{\lambda (\vec{s}_{k})}%
	\pi ^{ALIS}(\vec{c}_{n},\vec{s}_{k-1})$. Finally, the proof that the rate of leaving
	state $\pi ^{ALIS}(\vec{c}_{n},\vec{s}_{k})$ due to server $s$ becoming idle
	(arriving to the idle server queue) equals the rate of entering the state
	due to server $s$ becoming busy (leaving the queue), for $s\notin \vec{s}_{k}
	$, is also very similar to our earlier induction proof. Note that to
	transition out of state $(\vec{c}_{n},\vec{s}_{k})$ due to the arrival of
	server $s$ to the idle server queue, $s$ must both be busy and not have any
	compatible jobs in the job queue, i.e., $s\notin \vec{s}_{k}\cup S(\vec{c}_{n})$.
\end{proof}

In the example shown in Figure~\ref{fig:noncollab_ex}, the stationary
probability is 
\[
\pi^{ALIS}(\vec{c}_n,\vec{s}_k) = \pi^{ALIS}(\emptyset,\emptyset) \left( 
\frac{\lambda_3}{\mu_3} \right) \left( \frac{\lambda_2}{\mu_2 + \mu_3}
\right) \left( \frac{\lambda_3}{\mu_2 + \mu_3} \right) \left( \frac{\lambda_2%
}{\mu_2 + \mu_3} \right) \left( \frac{\mu_4}{\lambda_4} \right) \left( \frac{%
	\mu_1}{\lambda_1 + \lambda_4 } \right).
\]

\subsection{The Noncollaborative Service Model: Random Assignment to Idle Servers}
\label{sec:rais}

We next consider the noncollaborative model where, instead of using ALIS to choose an idle server among compatible servers for an arriving job, servers are chosen randomly among idle compatible servers with appropriate probabilities that depend only on the set of busy (or idle) servers~\cite{VAW}. 
The results given in~\cite{VAW} use the partially aggregated state space we discuss in Section~\ref{sec:partial_agg}, but, as we will show, a product-form result also holds for a detailed state descriptor similar to the one used for the collaborative model.
Unlike under ALIS, under RAIS the order of the idle servers no longer matters.
Instead, for this version of the model we keep track of the busy servers, $\vec{b}_{l}$, where there are $l$ busy servers, and where the servers are ordered by the arrival times of the jobs they are serving.
To obtain a product-form stationary distribution, we need our state to be even more detailed: we must track not only the order of the busy servers, but the positions of the busy servers within the job queue.
Our detailed state description for RAIS is thus $\vec{z}_m$, where $m$ denotes the number of jobs in the system (including both jobs in the queue and jobs in service), and $z_{i}$ is associated with the $i$'th job in the system in order of arrival.
This is similar to the state $\vec{c}_n$ state used for the collaborative model, with one key difference: the $\vec{z}_m$ state does not track the classes of jobs that are in service, instead it tracks
the servers that are serving them.
That is, we let $z_{i}=c$ if the $i$'th job in the system has not started service (it is in the job queue), and $z_{i}=b$ if the $i$'th job in the system is in service on server $b$. Note that $\vec{z}_m$ consists of an interleaving of the states of the job queue, $\vec{c}_n$, and of the busy server queue, $\vec{b}_{l}$. The possible states for $\vec{z}_{m}$, $\mathcal{X}^{RAIS}$, are such that $\vec{b}_{l}\subseteq \{1,...,M\}$, and for any position $i$, if $z_{i}=c$, all compatible servers are serving earlier arrivals, i.e., $S(c)\subseteq \vec{z}_{i-1}$, because of the FCFS service discipline.

In order to completely define the RAIS policy, we must specify the probability that an arriving job enters service at compatible idle server $s \notin \{b_1,\dots,b_l\}$.
When the set of ordered busy servers is $\vec{b}_{j}$, let $\lambda _{s}^{a}(\vec{b}_{j})$ represent the \emph{activation rate} of idle server $s\notin \{b_{1},...,b_{j}\}$ (the rate of going from state $\vec{b}_{j}$ to $(\vec{b}_{j},s)$ for any $\vec{c}_{n}$). 
We allow the activation rates to depend only on the set of busy servers, not on their order. Indeed, as Visschers et al. showed for their aggregated state description of this model \cite{VAW}, in order for the stationary distribution to have a product form, we need the following stronger condition, called the \emph{assignment condition}. 
Let $\Pi _{\lambda }(\vec{b}_{l})=\prod\limits_{j=1}^{l}\lambda _{b_{j}}^{a}(\vec{b}_{j-1})$.
The assignment condition requires that the probabilities for routing to compatible idle servers be chosen so that $\Pi_{\lambda }(\vec{b}_{l})$ depends only on the set of busy servers, not on their order (i.e., so that $\Pi_{\lambda }(\vec{b}_{l})$ is the same for any permutation of $b_{1},\ldots ,b_{l}$).
Visschers et al. show that it is always possible to choose assignment probability distributions so that the assignment condition holds~\cite{VAW}; the derivation involves solving a max flow problem for each subset of busy servers.

One way to interpret the assignment condition is to consider the loss system in which customers are not allowed to queue, 
so that the state is just $\vec{b}_{l}$, the set of busy servers. 
Then the assignment condition, along with the fact that $\mu (\vec{b}_{l})$ doesn't depend on the order of busy servers, 
reduces to Kolmogorov's criterion for reversibility of Markov chains, namely that the product of the transition probabilities
along any path from a state back to itself is the same if the states are traversed in the reverse order. 
For example, consider the path traversing the states $\emptyset ,(u),(u,v),(v),\emptyset $, where $u$ and $v$ are two servers.
Then the probability of traversing that path is $C\lambda
_{u}^{a}(\emptyset )\lambda _{v}^{a}(u)\mu _{u}\mu _{v}$ where $C$ is an
appropriate normalizing constant, and the probability for the reverse path,
in which $v$ is activated first and finishes first, is $C\lambda
_{v}^{a}(\emptyset )\lambda _{u}^{a}(v)\mu _{u}\mu _{v}$. These are the
same, given the assignment condition: $\lambda _{u}^{a}(\emptyset )\lambda
_{v}^{a}(u)=\lambda _{v}^{a}(\emptyset )\lambda _{u}^{a}(v)$. Indeed, Adan, Hurkens, and Weiss showed that the loss model
(under the assignment condition) is reversible, and has a product-form stationary distribution~\cite{AHW}.

Let $\mu (\vec{z}_{i})=\sum_{j=1}^{i}I(z_{j})\mu _{z_{j}}$, where $I(z_{j})$ is an indicator that is 1 if $z_{j}$ corresponds to a busy server and 0 if it corresponds to a job in the job queue. 
Note that $\mu (\vec{z}_{m})$ satisifies the conditions for order independence, with $\Delta _{j}(\vec{z}_{m})=\Delta_{j}(\vec{z}_{j})=\mu (\vec{z}_{j})-\mu (\vec{z}_{j-1})$. 
Let $\lambda _{i}^{z}(\vec{z}_{i})=\lambda _{i}^{z}(\vec{z}_{m})=\lambda _{c}
$ if $z_{i}=c$ for some job class $c$, and, if $z_{i}=b$ for some busy
server $b$, let $\lambda _{i}^{z}(\vec{z}_{i})=\lambda _{i}^{z}(\vec{z}%
_{m})=\lambda _{b}^{a}(\vec{b}_{k(i)})$, where $k(i)=\sum_{j=1}^{i-1}I(z_{j})
$ is the number of busy servers in the first $i-1$ positions (the number of
servers serving the first $i-1$ arrivals). 
Note that $\lambda _{m}^{z}(\vec{z}_{m})$ is the same for any permutation of $\vec{z}_{m-1}$ regardless of whether $z_{m}$ is a waiting job or a busy server, and, from the assignment condition, for any $\vec{z}%
_{m}\in {\cal X}^{RAIS},$ $\lambda_{m-1} ^{z}(\vec{z}_{m-1})\lambda_m ^{z}(\vec{z}%
_{m})=\lambda_{m-1} ^{z}(\vec{z}_{m-2},z_{m})\lambda_m ^{z}(\vec{z}%
_{m-2},z_{m},z_{m-1})=\lambda_{m-1} ^{z}(\vec{z}_{m-2},z)\lambda_m ^{z}(\vec{z}_{m})$.

\begin{theorem}
	\label{thm:noncollab_rais_detailed} For the noncollaborative model, under
	FCFS for jobs and random assignment to idle servers, and under the
	assignment condition and the stability condition, for $\vec{z}_{m}\in {\cal X}$, 
	\[
	\pi ^{RAIS}(\vec{z}_{m})=\pi ^{RAIS}(\emptyset )\prod\limits_{i=1}^{m}\frac{%
		\lambda _{i}^{z}(\vec{z}_{i})}{\mu (\vec{z}_{i})}=\frac{\lambda _{m}^{z}(%
		\vec{z}_{m})}{\mu (\vec{z}_{m})}\pi ^{RAIS}(\vec{z}_{m-1}), 
	\]%
	where $\pi ^{RAIS}(\emptyset )$ is a normalizing constant that represents
	the probability that the system is empty (i.e., that there are no busy
	servers and no jobs in the queue).
\end{theorem}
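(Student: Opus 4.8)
The plan is to show that the proposed product form satisfies partial balance, exactly as in the proofs of Theorem~\ref{thm:collaborative pi} (collaborative) and Theorem~\ref{ALIS product form} (ALIS); partial balance gives the global balance equations, so $\pi^{RAIS}$ is the stationary distribution with $\pi^{RAIS}(\emptyset)$ fixed by normalization. The whole apparatus of the order-independent argument is already in place: $\mu(\vec{z}_i)$ satisfies the OI properties of Definition~\ref{def:OI}, so $\Delta_j(\vec{z}_m)=\Delta_j(\vec{z}_j)=\mu(\vec{z}_j)-\mu(\vec{z}_{j-1})$ and $\mu$ is permutation-invariant, and the assignment condition has been arranged precisely so that $\Pi_\lambda(\vec{b}_l)$, and hence the full product $\prod_{i}\lambda_i^z(\vec{z}_i)$, is invariant under reordering. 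The factor $\lambda_i^z$ thus plays the role of the (state-dependent) arrival rate $\lambda_{c_i}$ from the collaborative proof. The first partial balance---rate out due to a service completion equals rate in due to an arrival---is then immediate: the only arrival-predecessor of $\vec{z}_m$ is $\vec{z}_{m-1}$ (an arrival always appends the newest job, whether it activates an idle server or joins the job queue) and the total completion rate is $\mu(\vec{z}_m)$, so both sides equal the recursion $\mu(\vec{z}_m)\pi^{RAIS}(\vec{z}_m)=\lambda_m^z(\vec{z}_m)\pi^{RAIS}(\vec{z}_{m-1})$.

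The substantive balance is the complementary one: the rate out of $\vec{z}_m$ due to an arrival, $\lambda\,\pi^{RAIS}(\vec{z}_m)$, must equal the total rate into $\vec{z}_m$ due to a service completion. I would prove this by induction on $m$, paralleling the inductive argument in Theorem~\ref{thm:collaborative pi}. The base case $m=0$ reduces to $\lambda\,\pi^{RAIS}(\emptyset)=\sum_s \mu_s\,\pi^{RAIS}(s)$, which holds because $\pi^{RAIS}(s)=(\lambda_s^a(\emptyset)/\mu_s)\,\pi^{RAIS}(\emptyset)$ and the activation rates out of the empty state sum to the total arrival rate, $\sum_s\lambda_s^a(\emptyset)=\lambda$. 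For the inductive step I would expand each incoming completion using the recursion and collapse the resulting sum with two facts: property (ii), which makes $\mu(\vec{z}_m,\cdot)$ independent of where the extra in-service position sits, and the assignment-condition identity $\lambda_{m-1}^z(\vec{z}_{m-1})\lambda_m^z(\vec{z}_m)=\lambda_{m-1}^z(\vec{z}_{m-2},z)\lambda_m^z(\vec{z}_m)$, which composes into full permutation-invariance of $\prod_i\lambda_i^z(\vec{z}_i)$ and so plays the role that the trivial invariance of $\prod\lambda_{c_i}$ plays in the collaborative proof.

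The main obstacle is the completion transition itself. Unlike the ALIS state $(\vec{c}_n,\vec{s}_k)$, where a completing server immediately takes the oldest compatible waiting job, stays busy, and leaves the idle-server list untouched---cleanly decoupling the job queue from the server queue---the RAIS state $\vec{z}_m$ records busy servers together with their positions in arrival order and forgets the classes of the jobs they serve. Hence a completion at server $b$ deletes $b$'s old position and ``moves'' the symbol $b$ forward to the position of the waiting job it picks up, relabeling that job from its class to $b$; so the reverse of a completion is not the clean insert-and-remove of an OI customer but an insert-with-relabel that also sums over the forgotten in-service class. Reconciling this is exactly where the assignment condition earns its keep: because $\Pi_\lambda(\vec{b}_l)$ is order-independent, relocating $b$ from one position to another leaves the product of activation factors unchanged, so the many incoming completion terms recombine into the single product-form expression, just as permutation-invariance of $\mu$ collapses the service factors. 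I would organize the induction so that each term is matched, via this invariance, to a term of the collaborative sum, at which point the telescoping closes. An equivalent and perhaps cleaner route is to lift to the fully detailed state that also records in-service classes, apply Theorem~\ref{thm:collaborative pi} there, and then sum out the in-service classes compatible with each server label; the assignment condition is again precisely what makes that summation collapse to the stated product form.
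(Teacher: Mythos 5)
Your strategy is the paper's: verify that the product form satisfies partial balance, with the easy direction (rate out due to completions equals rate in due to an arrival, i.e.\ $\mu(\vec{z}_m)\pi^{RAIS}(\vec{z}_m)=\lambda_m^z(\vec{z}_m)\pi^{RAIS}(\vec{z}_{m-1})$) read off the recursion, and the arrival-side balance handled by induction on $m$ using the OI properties of $\mu$ and the assignment condition. You also correctly locate the crux: a completion that picks up a waiting job is an ``insert-with-relabel'' move rather than a clean OI insertion. But that is exactly the step you do not carry out, and it is where essentially all the work lies. The paper splits the arrival-side balance into two separate partial balances and, for each, pins down precisely which pre-states can reach $\vec{z}_m$ by a completion: (a) for server $b\notin\vec{b}_l$ becoming idle, the pre-states are $(z_1,\dots,z_j,b,z_{j+1},\dots,z_m)$ for $j=0,\dots,m$, and the telescoping requires the assignment-condition identity $\lambda_m^z(\vec{z}_{m-1},b)\,\lambda_{m+1}^z(\vec{z}_{m-1},b,z_m)=\lambda_m^z(\vec{z}_m)\,\lambda_{m+1}^z(\vec{z}_m,b)$; (b) for a class-$c$ job leaving the job queue, one must first argue that the completing server $b$ is \emph{forced} to be the server of $S(c)$ occupying the last position $\kappa$ among servers of $S(c)$ in $\vec{z}_m$, that the tagged class-$c$ job sat at position $\kappa+1$ in the pre-state while $b$ sat at some position $j+1\le\kappa$, and that the incoming sum runs only over those $j$; the inner algebra then commutes the symbol $c$ past $z_{m-1}$ using permutation-invariance of $\mu$ and of the activation factors. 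None of this enumeration appears in your sketch; ``each term is matched, via this invariance, to a term of the collaborative sum'' asserts the conclusion rather than proving it. A smaller inaccuracy: the completion rate at $b$ is just $\mu_b$ (or $\Delta_{j+1}(\vec{z}_j,b)$), and there is no ``sum over the forgotten in-service class''---that class is not part of the state and does not parametrize the incoming transitions. Aggregating all classes into a single balance $\lambda\,\pi^{RAIS}(\vec{z}_m)=\text{(rate in via completions)}$ is logically sufficient for stationarity, but it does not spare you the per-type enumeration, since the right-hand side can only be evaluated type by type.

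Your proposed ``cleaner route''---lift to a fully detailed state that also records the classes of the jobs in service, apply Theorem~\ref{thm:collaborative pi} there, and sum out---does not work as stated. Theorem~\ref{thm:collaborative pi} is a statement about OI queues, in which the state is a list of job classes and a departure simply deletes an entry. The refined noncollaborative chain is not an OI queue: a completion at server $b$ deletes one entry \emph{and} relabels another (the waiting job $b$ picks up), so its transition structure does not satisfy Definition~\ref{def:OI}, and no product form for that refinement is established anywhere. The state $\vec{z}_m$, which deliberately forgets in-service classes, is chosen precisely because it is the level of aggregation at which partial balance can be verified; one cannot obtain its product form by projecting from a finer product form that has not itself been proved.
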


Before proving Theorem~\ref{thm:noncollab_rais_detailed}, we give a brief example of the system state and stationary probability under RAIS. 
Consider the example in Figure 4. The state under RAIS is $\vec{z}_m = (3_b, 3_c, 2_b, 2_c, 3_c, 2_c)$, where we use a subscript of $b$ or $c$ to indicate whether the entry in $\vec{z}_m$ corresponds to a job that is waiting for service in the job queue (c) or to a busy server (b). Note that for jobs in the job queue the notation $i_c$ indicates that the job in this position is class-$i$, whereas for busy servers the notation $j_b$ indicates that server $j$ is in this position (that is, we do not track the classes of jobs in service). The state $\vec{z}_m$ is an interleaving of $\vec{c}_n = (3,2,3,2)$ and $\vec{b}_l = (3,2)$. The stationary probability for this state is
\[
\pi^{RAIS} (\vec{z}_m ) =
\pi^{RAIS}(\emptyset)\left(\frac{\lambda^a_{3}({\emptyset})}{\mu_3}\right)
\left ( \frac{\lambda_3}{\mu_3} \right )
\left ( \frac{\lambda^a_2(3)}{\mu_3+\mu_2} \right )
\left ( \frac{\lambda_2}{\mu_3+\mu_2} \right )
\left ( \frac{\lambda_3}{\mu_3+\mu_2}\right )
\left ( \frac{\lambda_2}{\mu_3+\mu_2}\right ).
\]

We are now ready to prove Theorem~\ref{thm:noncollab_rais_detailed}.

\begin{proof}
	Fix $\vec{z}_{m}\in {\cal X}^{RAIS}$, with corresponding $\vec{b}_{l}$. We will show that partial balance holds in three steps:
	\begin{enumerate}
		\item The rate out of state $\vec{z}_m$ due to a service completion equals the rate into state $\vec{z}_m$ due to an arrival.
		\item The rate out of state $\vec{z}_m$ due to server $b$ becoming busy equals the rate into state $\vec{z}_m$ due to server $b$ becoming idle.
		\item The rate out of state $\vec{z}_m$ due to a class-$c$ job arrival to the job queue equals the rate into state $\vec{z}_m$ due to a class-$c$ departure from the job queue.
	\end{enumerate}
	
	1. First suppose $z_{m}=b_{l}$, so $\vec{z}_{m}=(\vec{z}_{m-1},b_{l})$. Then our
	product form immediately satisfies $\mu (\vec{z}_{m})\pi ^{RAIS}(\vec{z}%
	_{m})=\lambda _{b_{l}}^{a}(\vec{b}_{l-1})\pi ^{RAIS}(\vec{z}_{m-1})$, i.e.,
	the rate of transitions out of state $\vec{z}_{m}$ due to a service
	completion equals the rate into $\vec{z}_{m}$ due to a new server arrival,
	i.e., of server $b_{l}$ going from idle to busy and serving the most
	recently arriving job. If $z_{m}\neq b_{l}$ then it is not possible to enter
	state $\vec{z}_{m}$ with an idle server becoming busy. Now suppose $%
	z_{m}=c_{n}$. In this case we have, for our product form, $\mu (\vec{z}%
	_{m})\pi ^{RAIS}(\vec{z}_{m})=\lambda _{c_{n}}\pi ^{RAIS}(\vec{z}_{m-1})$,
	i.e., the rate of transitions out of state $\vec{z}_{m}$ due to a service
	completion equals the rate into $\vec{z}_{m}$ due to a new arrival to the
	job queue. Note that $c_{n}$ is such that $S(c_{n})\subseteq \vec{b}_{l}$.
	
	2. We now show that under the product-form probabilities above, the rate out of
	state $\vec{z}_{m}$ due to the (external) arrival of any server $b\notin 
	\vec{b}_{l}$ to the busy server queue equals the rate into the state due to server $b$'s departure
	from the busy server queue. Note that because $\vec{z}_{m}\in {\cal X}^{RAIS}$ and $%
	b\notin \vec{b}_{l}$, none of the jobs in the job queue are compatible with
	server $b$, so a job completion at server $b$ in state $(z_{1},\ldots
	,z_{j},b,z_{j+1},\ldots ,z_{m})$ will result in server $b$ leaving the busy
	server queue. Using the OI\ properties of $\mu (\vec{z}%
	_{m})$, and that $\lambda_m ^{z}(\vec{z}_{m})$ is the same for any permutation
	of $\vec{z}_{m-1}$, we need to show that the given product form satisfies%
	\begin{align}
	\pi ^{RAIS}(\vec{z}_{m})
	\lambda_{m+1}^{z}(\vec{z}_{m},b)
	&= \sum_{j=0}^{m}\pi
	^{RAIS}(z_{1},\ldots ,z_{j},b,z_{j+1},\ldots ,z_{m})\Delta _{j+1}(\vec{z}%
	_{j},b) \nonumber \\
	&= \frac{\lambda_{m+1}^z (\vec{z}_{m-1},b,z_{m})}{\mu (\vec{z}_{m},b)}%
	\sum_{j=0}^{m-1}\pi ^{RAIS}(z_{1},\ldots ,z_{j},b,z_{j+1},\ldots
	,z_{m-1})\Delta _{j+1}(\vec{z}_{j},b) \nonumber \\
	&\qquad \qquad +
	\frac{\lambda_{m+1} ^{z}(\vec{z}_{m},b)}
	{\mu(\vec{z}_{m},b)}\pi ^{RAIS}(\vec{z}_{m})\Delta _{m}(\vec{z}_{m},b). \label{eq:RAIS1}
	\end{align}
	We use induction on $m$; the
	induction hypothesis is that 
	\[
	\pi ^{RAIS}(\vec{z}_{m-1})
	\lambda_m ^{z}(\vec{z}_{m-1},b)
	=\sum_{j=0}^{m-1}\pi
	^{RAIS}(z_{1},\ldots ,z_{j},b,z_{j+1},\ldots ,z_{m-1})\Delta _{j+1}(\vec{z}%
	_{j},b). 
	\]%
	Then the RHS of (\ref{eq:RAIS1}) is%
	
	\begin{align*}
	&\frac{\lambda_{m+1} ^{z} (\vec{z}_{m-1},b,z_{m})
	}{\mu (\vec{z}_{m},b)}\pi ^{RAIS}(\vec{z}%
	_{m-1})
	\lambda_m ^{z}(\vec{z}_{m-1},b)
	+\frac{\lambda_{m+1} ^{z}(\vec{z}_{m},b)}{%
		\mu (\vec{z}_{m},b)}\pi ^{RAIS}(\vec{z}_{m})[\mu (\vec{z}_{m},b)-\mu (\vec{z}%
	_{m})] \\
	&=\frac{
		\lambda_{m+1} ^{z} (\vec{z}_{m-1},b,z_{m})
	}{\mu (\vec{z}_{m},b)}\pi ^{RAIS}(\vec{z}%
	_{m-1})
	\lambda_m ^{z}(\vec{z}_{m-1},b)
	+
	\lambda_{m+1} ^{z}(\vec{z}_{m},b)
	\pi
	^{RAIS}(\vec{z}_{m}) \\
	&\qquad \qquad
	-\frac{\lambda_{m+1} ^{z}(\vec{z}_{m},b)}{\mu (\vec{z}_{m},b)}%
	\frac{\lambda_m ^{z}(\vec{z}_{m})}{\mu (\vec{z}_{m})}
	\pi ^{RAIS}(\vec{z}_{m-1})\mu (\vec{z}_{m}) \\
	&=\pi ^{RAIS}(\vec{z}_{m})
	\lambda_{m+1} ^{z}(\vec{z}_{m},b)
	\end{align*}
	where $\lambda_m ^{z}(\vec{z}_{m-1},b)\lambda_{m+1} (\vec{z}_{m-1},b,z_{m})=\lambda_m
	^{z}(\vec{z}_{m})\lambda_{m+1} ^{z}(\vec{z}_{m},b)$ from the assignment condition.
	
	3. Finally, we show that the rate out of $\vec{z}_{m}$ due to a class-$c$
	arrival to the job queue equals the rate in to state $\vec{z}_{m}$ due to a
	class-$c$ job queue departure, for each $c$ such that $S(c)\subseteq \vec{b}_{l}$.
	Fix $c$
	and $\vec{z}_m$ and call the class-$c$ job whose departure causes the system to enter state $\vec{z}_m$ the tagged job. Let $\vec{z}'_{m+1}$ denote the system state just before the tagged job leaves the job queue. The 
	transition from $\vec{z}'_{m+1}$ to $\vec{z}_m$ is triggered by a service completion at some server $b\in S(c)$. 
	In $\vec{z}'_{m+1}$ it must be the case that $b$, and all other servers
	in $S(c)$, are serving jobs that arrived earlier than the tagged job. At the service completion on server $b$, the job it is working on leaves, and server $b$ takes the position of the tagged job. Therefore,
	server $b$'s position in $\vec{z}_m$, after the service completion,
	must be after all the other servers in $S(c)$. Call this position $\kappa$. Before the transition, in state $\vec{z}'_{m+1}$, the tagged
	job must be in position $\kappa +1$, and server $b$ must be in position $j+1 \leq \kappa$.
	Thus, we need to show that 
	\begin{equation*}
	\pi ^{RAIS}(\vec{z}_{m})\lambda _{c}=\sum_{j=0}^{\kappa -1}\pi
	^{RAIS}(z_{1},\ldots ,z_{j},b,z_{j+1},..,z_{\kappa -1},c,z_{\kappa
		+1},\ldots ,z_{m})\Delta _{j+1}(\vec{z}_{j},b).
	\end{equation*}%
	First suppose $z_{m}=b$, i.e., $\kappa =m$. Then we want to show that 
	\begin{equation}
	\label{eq:RAIS2}
	\pi ^{RAIS}(\vec{z}_{m-1},b)\lambda _{c}=\sum_{j=0}^{m-1}\pi
	^{RAIS}(z_{1},\ldots ,z_{j},b,z_{j+1},..,z_{m-1},c)\Delta _{j+1}(\vec{z}%
	_{j},b).
	\end{equation}%
	Suppose, using induction on $m$, that 
	\[
	\pi ^{RAIS}(\vec{z}_{m-2},b)\lambda _{c}=\sum_{j=0}^{m-2}\pi
	^{RAIS}(z_{1},\ldots ,z_{j},b,z_{j+1},..,z_{m-2},c)\Delta _{j+1}(\vec{z}%
	_{j},b).
	\]%
	Note that for $j<m-1$,%
	\begin{align*}
	\pi ^{RAIS}(z_{1},\ldots ,z_{j},b,z_{j+1},..,z_{m-1},c) &= \frac{\lambda _{c}%
	}{\mu (\vec{z}_{m-1},b)}\frac{\lambda^z_{m-1}(\vec{z}_{m-1})}{\mu (\vec{z}_{m-1},b)}\pi
	^{RAIS}(z_{1},\ldots ,z_{j},b,z_{j+1},..,z_{m-2}) \\
	&= \pi ^{RAIS}(z_{1},\ldots ,z_{j},b,z_{j+1},..,z_{m-2},c,z_{m-1}) \\
	&= \frac{\lambda^z_{m-1}(\vec{z}_{m-1})}{\mu (\vec{z}_{m-1},b)}\pi ^{RAIS}(z_{1},\ldots
	,z_{j},b,z_{j+1},..,z_{m-2},c).
	\end{align*}%
	Thus, the RHS of (\ref{eq:RAIS2}) is%
	\begin{align*}
	&\frac{\lambda^z_{m-1}(\vec{z}_{m-1})}{\mu (\vec{z}_{m-1},b)}\sum_{j=0}^{m-2}\pi
	^{RAIS}(z_{1},\ldots ,z_{j},b,z_{j+1},..,z_{m-2},c)\Delta _{j+1}(\vec{z}%
	_{j},b)+\frac{\lambda _{c}}{\mu (\vec{z}_{m-1},b)}\pi ^{RAIS}(\vec{z}%
	_{m-1},b)\Delta _{m}(\vec{z}_{m-1},b) \\
	&=\frac{\lambda^z_{m-1}(\vec{z}_{m-1})}{\mu (\vec{z}_{m-1},b)}\pi ^{RAIS}(\vec{z}%
	_{m-2},b)\lambda _{c}+\frac{\lambda _{c}}{\mu (\vec{z}_{m-1},b)}\pi ^{RAIS}(%
	\vec{z}_{m-1},b)\left[ \mu (\vec{z}_{m-1},b)-\mu (\vec{z}_{m-1})\right]  \\
	&=\frac{\lambda^z_{m-1}(\vec{z}_{m-1})}{\mu (\vec{z}_{m-1},b)}\pi ^{RAIS}(\vec{z}%
	_{m-2},b)\lambda _{c}+\lambda _{c}\pi ^{RAIS}(\vec{z}_{m-1},b) - \frac{\lambda
		_{c}}{\mu (\vec{z}_{m-1},b)} \frac{\lambda^z_{m-1}(\vec{z}_{m-1})}{\mu (\vec{z}_{m-1})}%
	\pi^{RAIS}(\vec{z}_{m-2},b) \mu(\vec{z}_{m-1}) \\
	&= \lambda _{c}\pi ^{RAIS}(\vec{z}_{m-1},b).
	\end{align*}
	
	Now suppose $\kappa <m$, so $\vec{z}_{m}=(z_{1},...,z_{\kappa
		-1},b,z_{\kappa +1},...,z_{m})$. Then we want to show that 
	\begin{equation}
	\pi ^{RAIS}(z_{1},...,z_{\kappa -1},b,z_{\kappa +1},...,z_{m})\lambda
	_{c}=\sum_{j=0}^{\kappa -1}\pi ^{RAIS}(z_{1},\ldots
	,z_{j},b,z_{j+1},..,z_{\kappa -1},c,z_{\kappa +1},\ldots ,z_{m})\Delta
	_{j+1}(\vec{z}_{j},b), \nonumber
	\end{equation}%
	i.e.,%
	\begin{equation}
	\prod\limits_{i=\kappa +1}^{m}\frac{\lambda _{i}^{z}(\vec{z}_{i})}{\mu (\vec{%
			z}_{i})}\pi ^{RAIS}(\vec{z}_{\kappa -1},b)\lambda
	_{c}=\prod\limits_{i=\kappa +1}^{m}\frac{\lambda _{i}^{z}(\vec{z}_{i})}{\mu (%
		\vec{z}_{i})}\sum_{j=0}^{\kappa -1}\pi ^{RAIS}(z_{1},\ldots
	,z_{j},b,z_{j+1},..,z_{\kappa -1},c)\Delta _{j+1}(\vec{z}_{j},b). \nonumber
	\end{equation}%
	From our previous argument we have 
	\[
	\pi ^{RAIS}(\vec{z}_{\kappa -1},b)\lambda _{c}=\sum_{j=0}^{\kappa -1}\pi
	^{RAIS}(z_{1},\ldots ,z_{j},b,z_{j+1},..,z_{\kappa -1},c)\Delta _{j+1}(\vec{z%
	}_{j},b),
	\]%
	and the result follows.
\end{proof}

\subsection{Relationship between Collaborative and Noncollaborative Models}
\label{sec:c_nc_samplepaths}

The product form stationary probabilities for the collaborative model and the ALIS noncollaborative model both include the term $\prod_{i=1}^{n} \frac{\lambda_{c_i}}{\mu(\vec{c}_n)}$.
Given the similarities in the stationary distributions, it is natural to ask whether the two systems also are similar in their more detailed evolution.
Indeed, Adan et al.\ observed that when all servers are busy (i.e., the idle-server queue is empty, $\vec{s}_k = \emptyset$ under ALIS) the path-wise evolution of the state $\vec{c}_n$ (i.e., the jobs in queue) in the noncollaborative model is the same as the evolution of $\vec{c}_{n}$ (jobs in system) in the collaborative model~\cite{AKRW}.
We generalize this observation to relate the path-wise evolution of the two systems conditioned on the set of idle servers. Note that while
the set of idle servers is fixed, we need not worry about how jobs
are assigned to idle servers.

\begin{observation}
	\label{obs:samplepaths}
	Conditioned on the set of idle servers, $\vec{s}_{k}$, and while those servers remain idle, the path-wise evolution of $\vec{c}_{n}$ (jobs in queue) for the noncollaborative model (under either RAIS or ALIS) is the same as that of $\vec{c}_{n}$ (jobs in system) for the truncated collaborative model with the servers in $\vec{s}_{k}$ removed. 
\end{observation}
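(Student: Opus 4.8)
The plan is to show that, during any time interval on which the idle-server set is held fixed at $\vec{s}_k$, the two processes are continuous-time Markov chains on the \emph{same} state space of admissible $\vec{c}_n$ and are governed by \emph{identical} transition rates, so that they may be coupled to trace out the same sample path. The first step is therefore to enumerate, in the noncollaborative model, exactly which transitions preserve $\vec{s}_k$ and how each acts on $\vec{c}_n$. Any transition that turns an idle server busy (an arriving job routed to a server in $\vec{s}_k$) or turns a busy server idle (a service completion that finds no compatible waiting job) changes the idle set and hence ends the epoch; these are precisely the transitions excluded by the conditioning, so they need not be considered.

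Next I would handle arrivals. A class-$c$ arrival with $S_c\cap\vec{s}_k\neq\emptyset$ would be assigned to an idle server and so ends the epoch; thus the only arrivals that keep $\vec{s}_k$ fixed are those of classes with $S_c\subseteq\{1,\dots,M\}\setminus\vec{s}_k$, each occurring at rate $\lambda_c$ and appended at the tail of $\vec{c}_n$. This is exactly the set of classes retained in the truncated collaborative system ${\cal C}_{\vdash\vec{s}_k}$, with the same rates $\lambda_c$. I would emphasize here that the assignment rule is never invoked on this epoch, since it governs only the routing of a job to an idle server; this is the precise reason the statement holds identically under both RAIS and ALIS.

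The crux is the departure rates, so I would compute the rate at which the $j$th queued job leaves $\vec{c}_n$. Under FCFS, a busy server $m$ that completes (rate $\mu_m$) immediately takes up the oldest waiting job it is compatible with, and the $j$th job is that job precisely when $m\in S_{c_j}$ but $m\notin S(\vec{c}_{j-1})$, i.e.\ $m\in S(\vec{c}_j)\setminus S(\vec{c}_{j-1})$. Summing $\mu_m$ over these servers gives total rate $\Delta_j(\vec{c}_n)=\mu(\vec{c}_j)-\mu(\vec{c}_{j-1})$, which is exactly the marginal service rate job $j$ receives in the collaborative model. I expect this to be the main obstacle: one must argue carefully from FCFS and from the nesting $S(\vec{c}_{j-1})\subseteq S(\vec{c}_j)$ that each queued job is removed at exactly its collaborative marginal rate, and that this holds regardless of the assignment rule.

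Finally I would conclude by comparison. In the truncated collaborative model ${\cal C}_{\vdash\vec{s}_k}$, whose servers are $\{1,\dots,M\}\setminus\vec{s}_k$ and whose surviving classes are those with $S_c\cap\vec{s}_k=\emptyset$, the vector $\vec{c}_n$ of jobs in system grows by appending class $c$ at rate $\lambda_c$ and shrinks by removing job $j$ at rate $\Delta_j(\vec{c}_n)$. These are precisely the rates derived above for the noncollaborative queue while $\vec{s}_k$ stays idle, and the admissible states coincide as well. Since the two chains have the same generator on the relevant transitions, they can be coupled to produce identical trajectories of $\vec{c}_n$, which is the claimed path-wise equivalence; the only semantic difference (\emph{jobs in queue} versus \emph{jobs in system}) does not affect the evolution of the vector itself. $\Box$
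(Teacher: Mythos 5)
Your argument is correct, and it formalizes exactly the reasoning the paper gives only informally around Observation~\ref{obs:samplepaths}: the assignment rule is never invoked while the idle set is fixed, arrivals of surviving classes (those with $S_c\cap\vec{s}_k=\emptyset$) append at rate $\lambda_c$, and FCFS forces the $j$th queued job to be removed at rate $\Delta_j(\vec{c}_n)=\mu(\vec{c}_j)-\mu(\vec{c}_{j-1})$, matching the truncated collaborative generator. The only detail worth stating explicitly is that every server in $S(\vec{c}_j)$ is indeed busy (this follows from the validity of the state, $s_i\notin S(\vec{c}_n)$), which is what licenses summing $\mu_m$ over all of $S(\vec{c}_j)\setminus S(\vec{c}_{j-1})$.
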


Observation~\ref{obs:samplepaths} tells us that, with coupled arrivals and service completions and the same initial $\vec{c}_n$, a service completion removes a job from the system for the collaborative model and removes the corresponding job from the job queue in the noncollaborative model. 
In the noncollaborative model, another job that does not appear in $\vec{c}_n$ will also leave the system (and will be replaced at the server by the job leaving the job queue). 

The path-wise correspondence between the collaborative and noncollaborative models will be useful when we move from the stationary distribution to performance metrics such as per-class response time distributions.
In Section~\ref{sec:nested}, we will see that these performance metrics often are more straightforward to derive in the collaborative model.
The path-wise relationship between the two models allows us to apply our results in the collaborative model to the noncollaborative model.

We note that the path-wise coupling still holds for general (coupled) arrival processes, not just Poisson processes. 

Our path-wise equivalence between the job queue in the noncollaborative model conditioned on the set of busy servers and the system queue for the collaborative model with the idle servers removed, is somewhat analogous to the observation of Borst et al. of the equivalence between the jobs in system in a processor sharing model with the jobs in queue for a nonpreemptive random-order-of-service model \cite{BBMN}.

\subsection{Token Models}

Two generalizations, combining aspects of the collaborative
and noncollaborative models, have recently been introduced
using the notion of ``tokens''~\cite{ABDV},~\cite{C}. In these models tokens generalize the notion of servers in the noncollaborative
model. There is a bipartite compatibility matching between
job classes and tokens, jobs must have tokens to enter service, and a token 
can be assigned to only one job at a time. Jobs of class $i$ arrive
according to a Poisson process at rate $\lambda _{i}$, and can be matched to
tokens in set $S_{i}$. 
Ayesta et al. allow jobs to wait for tokens and 
assume that when an arriving job sees multiple idle compatible tokens, it is assigned a token according to RAIS (or RAIT: Random Assignment to Idle Tokens)~\cite{ABDV}.
Comte assumes a loss model, in which jobs that arrive when
no compatible tokens are available are lost, and that idle tokens are assigned according to ALIS (or ALIT: Assign Longest Idle Token)~\cite{C}. We describe these models in more detail below.

\subsubsection{Token Model under RAIS}
In the model of Ayesta et al.~\cite{ABDV}, 
given the set of busy tokens $\vec{b}_{l}$, listed in the order of the arrival times of the jobs they are serving, and idle token $s$, the activation rate $\lambda _{s}^{a}(%
\vec{b}_{l})$ (i.e., the rate at which $s$ will be assigned to an arriving compatible
job) satisfies the same assignment condition as in the noncollaborative RAIS
model. The service process, given ordered busy tokens $\vec{b}_{l}$, is
generalized from the skill-based collaborative model to the OI\ queue. That
is, defining $\Delta _{j}(\vec{b}_{l})$ as the (marginal) rate of service
given to the job with the $j$'th busy token and $\mu (\vec{b}%
_{l})=\sum_{k=1}^{m}\Delta _{k}(\vec{b}_{l})$, the following OI conditions
are assumed, as in Definition~\ref{def:OI}:

(i) \ $\Delta _{k}(\vec{b}_{l})=$ $\Delta _{k}(\vec{b}_{k})$ for $j\leq l$,

(ii) $\mu (\vec{b}_{l})$ is the same for any permutation of $b_{1},\ldots
,b_{l}$ (order independence),

(iii) $\mu (b)>0$ for any busy token $b$.

Like Krzesinski \cite{K}, Ayesta et al. also allow the service rate to be
multiplied by a factor that is a function of the total number of tokens in
service. We continue to omit that factor for simplicity.

Let us define the state, as we did for the noncollaborative RAIS model, as $\vec{z}%
_{m}$ where $z_{i}$ is associated with the $i$'th arrival in the
system, $z_{i}=c$ if the arrival is of class $c$ and does not yet have a token, and $z_{i}=b$ if it has token $b$. We also define 
$\lambda_i(\vec{z}_i )$ as we did for the noncollaborative RAIS model.
Note that our proof of Theorem
\ref{thm:noncollab_rais_detailed} did not use the particular form of $\mu (\vec{b%
}_{l})$, only its OI\ properties. (In particular, we 
showed the result for general $\mu(\vec{z}_m)$ and $\Delta_j (\vec{z}_m)$).
Hence, Theorem \ref{thm:noncollab_rais_detailed} also holds for the token model.

As Ayesta et al. note~\cite{ABDV}, the noncollaborative model is recovered
when tokens correspond to the servers of the noncollaborative model, and the
original OI\ queue (including the collaborative model) is recovered when
each arriving job immediately obtains a token directly corresponding to its
class (so there is an infinite supply of tokens, and activation rates need
not be included).

\subsubsection{Token Loss Model under ALIS}

Comte introduced a related, multi-layered, token loss model that generalizes the
noncollaborative model operating under ALIS \cite{C}. The terminology and
notation used in Comte's model are a bit different from ours; Comte refers to job ``type'' where we use job ``class,'' and to token ``classes'' where we use ``tokens''. (We allow distinct 
tokens to have the same job class compatibilities and speeds.)
In Comte's model, and in contrast to
that of Ayesta et al., a job that arrives when there is no available
compatible token is lost, and a job that arrives to find multiple compatible
tokens takes the token that has been idle longest. Hence, for Comte, the
state is $(\vec{b}_{l},\vec{s}_{k})$ where $\vec{b}_{l}$ is the set of busy
tokens listed in the order of the arrival times of their corresponding jobs
and $\vec{s}_{k}$ is the set of idle tokens in the order in which they
became idle. Note that, because jobs cannot wait for tokens, there is no $\vec{c}_n$ component of the state, so $\vec{b}_{l}$ corresponds directly to $\vec{z}_m$ of the noncollaborative RAIS model. Also, tokens alternate between being busy and idle, and therefore $\vec{b}_{l}$ lists the busy tokens in the order in which they became busy, i.e., their order of arrival to the busy token queue.

Instead of assuming a generic OI service process $\mu (\vec{b}%
_{l})$ for serving tokens, as in Ayesta et al.'s model, Comte assumes a
collaborative service model. 
That is, there is another bipartite matching
layer between tokens and servers that defines the total service rate $\mu (%
\vec{b}_{l})$ when the ordered set of busy tokens is $\vec{b}_{l}$, and such
that $\mu (\vec{b}_{l})$ satisfies the OI conditions. Note that when the
idle token queue is in state $\vec{s}_{k}$, the rate at which tokens leave, $%
\lambda (\vec{s}_{k})$, is the rate at which jobs compatible with one of the
idle tokens arrive, and, as in the noncollaborative ALIS model, $\lambda (%
\vec{s}_{k})$ also satisfies the OI\ conditions \ref{def:OI}. Because there
is a finite set of tokens, we have a closed network of two OI queues, and
because OI\ queues are quasi-reversible, the closed token (CT) network also
has a product-form distribution. In particular, for $(\vec{b}_{l},\vec{s}%
_{k})\in {\cal X}^T$, where ${\cal X}^T$ is the set of states such that each
token appears exactly once, i.e., $l+k$ equals the total number of tokens
and $(\vec{b}_{l},\vec{s}_{k})$ is an arbitrary permutation of the set of
tokens, we have 
\[
\pi ^{CT}(\vec{b}_{l},\vec{s}_{k})=G^{CT}\prod\limits_{i=1}^{l}\frac{1}{\mu (%
	\vec{b}_{i})}\prod\limits_{i=1}^{k}\frac{1}{\lambda (\vec{s}_{i})}
\]%
where $G^{CT}$ is a normalizing constant.
This result would also hold assuming a general OI process for
``serving'' busy tokens rather than the collaborative service
model.

\subsection{Discrete-time OI Queues and Matching Models}

Adan et al.~\cite{AKRW} introduced a matching model, called the
directed bipartite matching (DBM) model, with a bipartite
matching between servers and jobs, and in which both servers and jobs arrive
according to Poisson processes, but only jobs can queue to wait for servers. 
Servers of type $j$ arrive according to an independent Poisson process with
rate $\mu _{j}$, and the other parameters of the model are the same as for
the collaborative model. The state is again $\vec{c}_{n}$. An arriving
server matches with the first compatible job in the queue, if any, and the
server, along with its job if there is one, immediately leaves. 
The DBM model captures important features of organ transplant waitlists, where patients
wait for organs, but unmatched organs are lost, and where compatibilities
are determined by biological factors such as blood types, as well as
the locations of the patients and organs.
As Adan et
al. show, the Markov chain for this model is sample-path
equivalent to that of the collaborative model; in particular, the departure
rate from the queue in state $\vec{c}_{n} $ is $\mu (\vec{c}_{n})$ as
defined earlier. Therefore the matching model has the same, product-form,
stationary distribution given in Theorem \ref{thm:collaborative pi}. The result
also holds for a more general, OI matching, i.e., when there are no server
types, but a job will be matched to a server at rate $\mu (\vec{c}_{n})$ when the state
is $\vec{c}_{n}$ and $\mu (\vec{c}_{n})$ satisfies the OI\ conditions
(i)-(iii). The state process for the matching model is also 
equivalent to the queue process of a variant of the
noncollaborative model in which we keep all the servers busy by assigning a
server that becomes idle and that does not find a waiting compatible job a
\textquotedblleft dummy job.\textquotedblright\ This might be
appropriate in a call center context in which servers that would
be otherwise idle handle outgoing calls or email.

If we ignore the timing between arrivals and departures in
the matching model described above, we have an
equivalent discrete-time model, in which at most one event (a job arrival or
a server arrival/job departure) can occur in any time slot. Now $\lambda
_{c} $ is the probability of a class-$c$ arrival and, when the state is $%
\vec{c}_{n}$, $\mu (\vec{c}_{n})$ represents the probability of a job
completion (or a job-server matching) in the next time slot, $\mu (\vec{c}%
_{n})\leq 1$. Also, we need not assume a set of servers with a
bipartite-matching graph, just that $\mu (\vec{c}_{n})$ satisfies the OI
conditions. Then the transitions of the Markov chain $\vec{c}_{n}$ for the
discrete-time queue will be sample-path equivalent to the transitions of the
embedded Markov chain for the continuous-time OI queue, and, again, the same
product form will hold for the steady-state distribution. The DBM special case, 
with server/job compatibilities, is considered by Weiss \cite{W}. 
Here a server of type $s$
arrives with probability $\mu _{s}$ and matches with the earliest compatible
job if there is one; the server immediately departs (along with any matching
job).

The DBM model discussed in~\cite{AKRW} does not allow unmatched servers to wait for jobs.
We now show that the DBM model can be extended to include a ``server queue'' in which  unmatched servers wait in FCFM (first-come-first-matched) order.
This yields something somewhat analogous to
the noncollaborative ALIS model. For stability, we must have an upper bound, 
$K$, on the server queue. Let us call this (new) model the DBM($K$) model. Then
the stability condition for the DBM($K$) model will be the same as for the
DBM =DBM($0$) model, i.e., $\lambda (A)\leq \mu (A)$ for all subsets of job
classes $A$, where $\mu (A)$ is the rate of arrivals of servers compatible
with job classes in $A$ in the continuous-time model, and is the probability
of such an arrival in the discrete-time version. The state is $(\vec{c}_{n},%
\vec{s}_{k})\in {\cal X}^{DBM(K)}$, where $\vec{c}_{n}$ is the set of waiting jobs in
arrival order, $\vec{s}_{k}$ is the set of waiting servers in arrival order,
and the set of valid states, ${\cal X}^{DBM(K)}$, comprises those states $(\vec{c}%
_{n},\vec{s}_{k})$ such that $k\leq K$ and $s_{i}\notin S(\vec{c}_{n})$, $%
i=1,\ldots ,k$. Again, both the job queue and the server queue are order
independent, so the steady-state distribution will have the same form as
that of the noncollaborative ALIS model,
though the latter includes a particular
loss model for the idle-server queue, so its set of valid states is
restricted to $\vec{s}_{k}$ such that each server appears in $\vec{s}_{k}$
at most once.

\begin{theorem}
	For the stable directed bipartite matching model with a finite buffer for
	servers $K$, DBM($K$), 
	\[
	\pi ^{DBM(K)}(\vec{c}_{n},\vec{s}_{k})
	=\pi ^{DBM(K)} (\emptyset,\emptyset)\prod_{i=1}^{n}\frac{\lambda _{c_{i}}}{\mu (\vec{c}_{i})}%
	\prod_{j=1}^{k}\frac{\mu _{s_{j}}}{\lambda (\vec{s}_{j})},\forall (\vec{c}_{n},\vec{s}_{k})\in {\cal X}^{DBM(K)}. 
	\]
\end{theorem}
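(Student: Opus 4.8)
**

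The plan is to show that the claimed product form satisfies partial balance, exactly mirroring the structure of the proof of Theorem~\ref{ALIS product form} for the noncollaborative ALIS model. The key observation, stated in the excerpt's setup, is that the DBM($K$) model decomposes into two order-independent queues that evolve in a decoupled fashion: the \emph{job queue} $\vec{c}_n$ and the \emph{server queue} $\vec{s}_k$. An arrival or departure event affecting one queue leaves the other queue's state unchanged. This decoupling is what makes the ``product of product forms'' go through, and it is inherited directly from the ALIS analysis.

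First I would verify that both queues satisfy the OI loss-queue conditions in the DBM($K$) setting. For the job queue, a server arrival of type $s$ plays the role of a ``service completion,'' removing the oldest compatible job at rate $\mu(\vec{c}_n)$; the truncation property (Definition~\ref{def:truncation}) holds because a job of class $c$ is accepted into the job queue precisely when no waiting compatible server is present, and this acceptance region is order independent and monotone. For the server queue, an unmatched server of type $s$ waits (up to the buffer bound $K$), and it leaves when a compatible job arrives, at rate $\lambda(\vec{s}_j)$; the only new feature relative to ALIS is the finite buffer $K$, which simply further truncates the server-queue state space. Since the truncation property is preserved under the additional constraint $k \le K$ (removing a server from a valid state keeps it valid), Corollary~\ref{loss} still applies to the server queue.

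Next I would check partial balance in the two familiar directions, for a fixed state $(\vec{c}_n,\vec{s}_k) \in \mathcal{X}^{DBM(K)}$. For the job queue (holding $\vec{s}_k$ fixed), the rate out due to a matching/departure equals the rate in due to a job arrival, which is immediate from $\mu(\vec{c}_n)\pi^{DBM(K)}(\vec{c}_n,\vec{s}_k) = \lambda_{c_n}\pi^{DBM(K)}(\vec{c}_{n-1},\vec{s}_k)$; the complementary balance (rate out due to a class-$c$ arrival equals rate in due to a class-$c$ matching) follows by the same induction on $n$ using Property~(i) and the truncation property, exactly as in Theorem~\ref{thm:collaborative pi}. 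For the server queue (holding $\vec{c}_n$ fixed), the analogous argument applies with the roles of $\lambda$ and $\mu$ interchanged, as in Theorem~\ref{ALIS product form}. Because events never simultaneously affect both queues, summing these partial balance equations yields global balance.

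The main obstacle, and the only genuinely new ingredient, is handling the finite buffer $K$ at the boundary where the server queue is full, i.e.\ states with $k = K$. When $k = K$, an arriving server that would otherwise join the queue is blocked. I would argue that this blocking is exactly the loss mechanism already covered by the truncation property: the server-queue acceptance region $\{k \le K\}$ is coordinately convex, so by Corollary~\ref{loss} the product form restricted to $\mathcal{X}^{DBM(K)}$ remains stationary, with the partial balance equations involving blocked transitions reducing to $0 = 0$ (since $\pi^{DBM(K)}$ assigns zero mass to states with $k > K$). I would therefore conclude that the given product form satisfies partial balance on all of $\mathcal{X}^{DBM(K)}$, and hence is the stationary distribution up to the normalizing constant $\pi^{DBM(K)}(\emptyset,\emptyset)$, which is determined by normalization over the (finite-buffer-truncated) state space.
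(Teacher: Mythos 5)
Your proposal is correct and takes essentially the same route as the paper, which offers no explicit proof for this theorem beyond the remark that both the job queue and the server queue are order independent (with the finite buffer $K$ handled as a truncation), so that the partial-balance argument of Theorem~\ref{ALIS product form} carries over. Your elaboration of the decoupling of the two queues and of the coordinate convexity of the region $k\le K$ fills in precisely the details the paper leaves implicit.
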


By symmetry, a similar result holds if the server buffer is infinite, but
the job buffer is bounded by some $N$. \ Now the stability condition is $\mu
(B)\leq \lambda (B)$ for all subsets of server types $B$. 

Note that the continous-time DBM(K) model also models a make-to-stock
inventory system with a bipartite graph representing preferences of
customer classes for certain types of items. Customers of class $i$
are willing to purchase any of the items in $S_i$. Items of type $j$ are produced according to a Poisson process at rate $\mu_j$ as long as the total number of items is less than the overall base-stock level $K$. Queueing customers represent back orders. Also, from \ref{loss}, the result holds when we have different base-stock levels for different types of items.

Our results also extend to DBM models with abandonments and finite or infinite buffers. These models are appropriate for car sharing applications and other two-sided queues, where, for example, classes of jobs and types of servers correspond to location preferences. Suppose jobs (riders) of class $i$ arrive (request rides) according to a Poisson process with rate $\lambda_i$, and will wait for an exponential time at rate $\gamma_i$ before abandoning their request. Servers (drivers) of type $j$ arrive according to a Poisson process at rate $\mu_j$ and will wait an exponential time at rate $\nu_j$ for a rider before leaving the platform. We assume a bipartite matching graph as defined earlier. Because of the abandonments, stability will not be an issue, even for infinite buffers. We have the following. 

\begin{theorem}
	For the directed bipartite matching model with abandonments (DBMA) and finite or infinite buffers for jobs and servers,
	\[
	\pi ^{DBMA}(\vec{c}_{n},\vec{s}_{k})
	=\pi ^{DBMA} (\emptyset,\emptyset)\prod_{i=1}^{n}\frac{\lambda _{c_{i}}}{\mu (\vec{c}_{i})}
	\prod_{j=1}^{k}\frac{\mu _{s_{j}}}{\lambda (\vec{s}_{j})},\forall (\vec{c}_{n},\vec{s}_{k})\in {\cal X}^{DBMA}, 
	\]
	where
	\[
	\mu (\vec{c}_{j})=\sum_{i=1}^{j}\gamma _{c_{i}}+\sum_{m\in S(\vec{c}%
		_{j})}\mu _{m}\text{, \ }
	\lambda (\vec{s}_{j})=\sum_{i=1}^{j}\nu _{s_{i}}+\sum_{m\in C(\vec{s}%
		_{j})}\lambda _{m},
	\]
	and ${\cal X}^{DBMA}$ is the set of states $(\vec{%
		c}_{n},\vec{s}_{k})$ such that $s_{j}\notin S(\vec{c}_{n})$, $j=1,\ldots ,k$ and $c_{i}\notin S(\vec{s}_{k})$, $i=1,\ldots ,n$.
	
\end{theorem}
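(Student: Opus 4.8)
The plan is to recognize the DBMA model as the two-sided, abandonment-augmented analogue of the noncollaborative ALIS model, so that its product form follows by combining the ``product of product forms'' structure of Theorem~\ref{ALIS product form} with the abandonment and loss corollaries (Corollaries~\ref{abandon} and~\ref{loss}) and the job/server symmetry of the matching model. The crucial first step is to verify that every transition of the DBMA Markov chain alters at most one of the two queues $\vec{c}_n$ and $\vec{s}_k$. A class-$c$ job arrival either finds no compatible waiting server (i.e.\ $S(c)\cap\vec{s}_k=\emptyset$) and joins the job queue, or finds one and matches the oldest compatible waiting server in FCFM order, in which case the arriving job departs immediately and exactly one server leaves $\vec{s}_k$; symmetrically, a type-$s$ server arrival either joins the server queue or matches and removes one waiting job from $\vec{c}_n$. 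Abandonments remove a single waiting job (rate $\gamma_{c_j}$) or a single waiting server (rate $\nu_{s_j}$). In every case only one of $\vec{c}_n$, $\vec{s}_k$ changes, because the matched partner never enters its own queue. This decoupling is precisely the property that drove the proof of Theorem~\ref{ALIS product form}.

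Next I would identify each queue, conditioned on the other, as an OI loss queue with abandonments. Fixing $\vec{s}_k$, jobs of class $c$ arrive at rate $\lambda_c$ and are accepted into $\vec{c}_n$ exactly when $S(c)\cap\vec{s}_k=\emptyset$, while the $j$th waiting job departs (by abandoning or by being matched by a newly arriving compatible server) at rate $\Delta_j(\vec{c}_j)=\gamma_{c_j}+\sum_{m\in S(\vec{c}_j)\setminus S(\vec{c}_{j-1})}\mu_m=\mu(\vec{c}_j)-\mu(\vec{c}_{j-1})$, with $\mu(\vec{c}_j)=\sum_{i=1}^j\gamma_{c_i}+\sum_{m\in S(\vec{c}_j)}\mu_m$. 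One checks immediately that this abandonment-augmented $\mu(\cdot)$ satisfies the OI properties (i)--(iii) of Definition~\ref{def:OI} (the abandonment sum is order independent and positive, and the matching part is the union-set rate already shown to be OI), so this is exactly the OI-with-abandonments queue of Corollary~\ref{abandon}, restricted to the state-dependent acceptance region $\{c:S(c)\cap\vec{s}_k=\emptyset\}$, which satisfies the truncation property of Definition~\ref{def:truncation}. By the job/server symmetry ($\lambda\leftrightarrow\mu$, $\gamma\leftrightarrow\nu$, $S\leftrightarrow C$), fixing $\vec{c}_n$ makes the server queue an OI loss queue with abandonments in which $\lambda(\vec{s}_j)=\sum_{i=1}^j\nu_{s_i}+\sum_{m\in C(\vec{s}_j)}\lambda_m$ plays the role of the service rate.

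Finally I would verify partial balance for the claimed product form, which factorizes as $\prod_i \lambda_{c_i}/\mu(\vec{c}_i)$ times $\prod_j \mu_{s_j}/\lambda(\vec{s}_j)$. Because transitions decouple, it suffices to check partial balance separately for the job-queue transitions (holding $\vec{s}_k$ fixed) and for the server-queue transitions (holding $\vec{c}_n$ fixed); each check is verbatim the OI partial-balance induction of Theorem~\ref{thm:collaborative pi}, now with the abandonment rates folded into $\mu(\vec{c}_j)$ (respectively $\lambda(\vec{s}_j)$) as in Corollary~\ref{abandon} and with the matching/rejection boundary handled by the truncation argument of Corollary~\ref{loss}. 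At that boundary --- a class-$c$ arrival with $S(c)\cap\vec{s}_k\neq\emptyset$, or a type-$s$ arrival whose compatibilities meet $\vec{c}_n$ --- the defining constraints $s_j\notin S(\vec{c}_n)$ and $c_i\notin S(\vec{s}_k)$ of $\mathcal{X}^{DBMA}$ force the relevant balance equations either to reduce to $0=0$ or to coincide with the standard OI equation, exactly as in the proof of Corollary~\ref{loss}. Summing the job-queue and server-queue partial balance relations yields global balance, proving stationarity; finite job or server buffers add only a further coordinate-convex truncation, again covered by Corollary~\ref{loss}. I expect the main obstacle to be bookkeeping rather than novelty: one must track carefully that a matching event initiated by an arrival on one side is correctly charged as a departure from the other side's queue, and confirm that the two-sided abandonments preserve order independence on both sides simultaneously so that the two single-queue arguments can be invoked without interference.
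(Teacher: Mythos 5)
Your proposal is correct and matches the paper's intended argument: the paper states this theorem without an explicit proof, relying (as it does for the DBM($K$) model just above it) on exactly the observations you make --- that transitions decouple the job and server queues, that each queue conditioned on the other is an OI loss queue with the abandonment rates folded into $\mu(\vec{c}_j)$ and $\lambda(\vec{s}_j)$ as in Corollary~\ref{abandon}, and that the matching boundary is handled by the truncation property as in Corollary~\ref{loss} and Theorem~\ref{ALIS product form}. Your write-up simply makes explicit the steps the paper leaves implicit.
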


Moyal, Bu\v{s}i\'{c}, and Mairesse show reversibility and a product-form
stationary distribution for a FCFM matching model with a General (not
necessarily bipartite) Matching (GM) graph, and with sequential individual
(non-paired) arrivals, under a given stability condition \cite{MBM}. For
this model, instead of jobs and servers we have \textquotedblleft
agents\textquotedblright\ of $J$ different classes, with agent classes
corresponding to nodes in the compatibility graph; the set of agent classes
compatible with class $c$, $S(c)$, is its set of neighbors in the
compatibility graph. The set of valid states, ${\cal C}^{GM}$, are those states $%
\vec{c}_{n}$ such that $c_{i}\notin S(\vec{c}_{n})$, $i=1,\ldots ,n$. Among
the arrival processes Moyal et al.\ consider is i.i.d.\ arrivals where the
probability of a class-$c$ arrival is $\mu _{c}$.  Given the classes of
unmatched agents ordered by their arrival times, $\vec{c}_{n}$, let $\mu (%
\vec{c}_{n})$ be the probability the next arrival is compatible with one of
those agents. Again, $\mu (\vec{c}_{n})$ satisfies the OI\ conditions (now
in discrete time), so the stationary distribution for the GM model, assuming stability, is%
\[
\pi ^{GM} (\vec{c}_{n})=\pi ^{GM} (\emptyset )\prod_{i=1}^{n}\frac{\mu _{c_{i}}%
}{\mu (\vec{c}_{i})}=\frac{\mu _{c_{n}}}{\mu (\vec{c}_{n})}\pi (\vec{c}%
_{n-1})\text{ \ for }\vec{c}_{n}\in {\cal C}.
\]

Adan and Weiss consider the Paired Bipartite Matching (PBM) model in which
server-job pairs arrive sequentially and where the job is type $i$ and the
server is type $j$ independently and with respective probabilities $\lambda
_{i}/\lambda $ and $\mu _{i}/\mu $, and both unmatched jobs and unmatched
servers wait for matches \cite{AW}. An arriving job (server) is matched to
the first compatible waiting server (job) if there is one and they both
immediately leave, otherwise the job (server) waits for a match. Adan and
Weiss show that the associated Markov chain satisfies partial balance and
has a product-form stationary distribution, under the stability condition.
That is,

\[
\pi ^{PBM} (\vec{c}_{n},\vec{s}_{n})=\pi ^{PBM} (\emptyset )\prod_{i=1}^{n}\frac{%
	\lambda _{c_{i}}}{\mu (\vec{c}_{i})}\frac{\mu _{s_{i}}}{\lambda (\vec{s}_{i})%
}. 
\]%
Note that for the PBM model, there are always the same number of unmatched
jobs and servers. Adan et al. show that there exists a unique FCFM
(first-come first-matched) matching for the PBM model, and that the process
is reversible under an \textquotedblleft exchange
transformation\textquotedblright\ that interchanges matching servers and
customers \cite{ABMW}. 

\section{Nested Systems and Response Time Distributions}
\label{sec:nested}

In the previous section we developed product forms for the stationary
distributions of the detailed states for variants of OI\ queues, but these product forms do not readily yield other important performance measures, such as response time distributions. It turns out that we will get simple, elegant
results for response times in the collaborative model for a particular system structure called a nested system (see Figure~\ref{fig:nested} for an example). As noted in Observation~\ref{obs:samplepaths}, conditioned on the set of busy servers the noncollaborative
queue state has the same sample-path evolution as the collaborative system state for a system with only the busy servers available.
A consequence of this result is that our results for collaborative response times (Section~\ref{sec:nested_collab}) carry over to noncollaborative queueing times (Section~\ref{sec:nested_noncollab}). 

Formally, a nested system is one in which, for any two job classes $i\neq j$, the sets of servers with
which they are compatible, $S_i$ and $S_j$, are such that $S_{i}\subset
S_{j}$ or $S_{j}\subset S_{i}$ or $S_{i}\cap S_{j}=\emptyset $. This means
that nested systems can be recursively defined, starting with their most
flexible job class, as follows.

\begin{figure}
	\centering
	\includegraphics[scale=0.6]{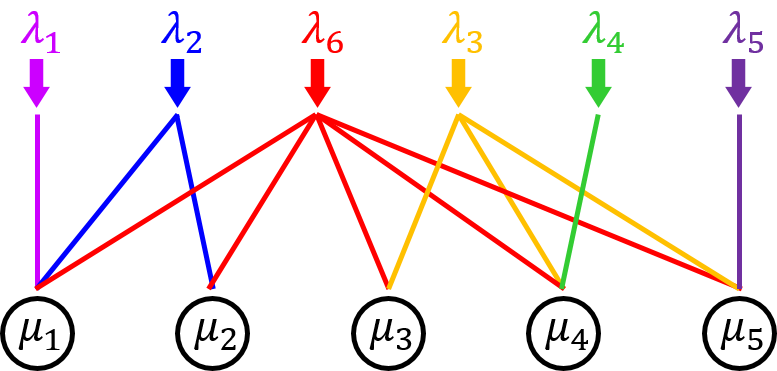}
	\caption{An example of a nested system.}
	\label{fig:nested}
\end{figure}

All nested systems have a most flexible job class, $i$, that is compatible
with all the servers in the system, and if we remove class $i$ from the
system it decomposes into two or more nonoverlapping nested subsystems, each
with its own fully flexible job class. These in turn can be decomposed by
removing the fully flexible class until we get down to systems consisting of
single job classes. Figure~\ref{fig:nested} shows an example of a nested system; if the fully flexible class 6 is removed, the system decomposes into one nested system consisting of servers 1 and 2 and job classes 1 and 2, and another nested system consisting of servers 3, 4, and 5 and job classes 3, 4, and 5.

We begin our response time derivations with the collaborative model, and first determine the response time of a class that is fully flexible, which, as we will see, has an exponential
distribution. The derivation for the fully flexible class does not require
the system to be nested, but later it will help us to develop general
response times in nested systems. We note that the results for nested systems were first derived by Gardner et al.~\cite{GHHR} using an alternative state descriptor specific to nested systems; here we provide a new derivation that follows directly from the detailed states used in Section~\ref{sec:detailed_states}.

\subsection{Collaborative Model}
\label{sec:nested_collab}

\subsubsection{Fully flexible class}

Let ${\cal C}=\{\vec{c}_{n},n=0,1,...\}$ be the set of all states $\vec{c}%
_{n}$ for the original collaborative model, and let $\vec{C}$ be a random
variable representing the state of the collaborative system in steady state,
i.e., $\vec{C}\sim \pi $. We use the subscript $-i$ to represent a reduced
system without class $i$, $i=1,...,J$. From
Corollary \ref{loss}, we have that for $\vec{c}_{n}\in {\cal C}_{-i}$, 
\[
P\{\vec{C}=\vec{c}_{n}|\vec{C}\in {\cal C}_{-i}\}=\pi^C_{-i}(\vec{c}_{n})=\pi^C
_{-i}(\emptyset )\prod\limits_{i=1}^{n}\frac{\lambda _{c_{i}}}{\mu (\vec{c%
	}_{i})}, 
\]%
where $\pi^C_{-i}(\emptyset )=$ $\pi^C(\emptyset )/P\{\vec{C}\in {\cal C}%
_{-i}\}$.

Suppose there is one class, call it class $J$, that is fully flexible in the
bipartite compatibility matching, i.e., $S_{J}=\{1,\ldots ,M\}$. We
condition on there being at least one class-$J$ job in the system, $\vec{C}%
\in {\cal C}\backslash {\cal C}_{-J}$, so we know all servers will be busy.
A possible state is $(\vec{c}_{n},J,\vec{a}_{m})$ where the first class-$J$
job is in position $n+1$, $\vec{c}_{n}\in {\cal C}_{-J}$ represents the
classes of jobs ahead of the first class-$J$ job in order of arrival, and $%
\vec{a}_{m}\in {\cal C}$ represents the classes of jobs after the first
class-$J$ job in order of arrival. Then, because the denominator for all the
terms corresponding to the first class-$J$ job and the jobs after it is the
total service rate $\mu $, we have 
\[
\pi ^{C}(\vec{c}_{n},J,\vec{a}_{m})=\pi ^{C}(\emptyset
)\prod\limits_{i=1}^{n}\frac{\lambda _{c_{i}}}{\mu (\vec{c}_{i})}\frac{%
	\lambda _{J}}{\mu }\prod\limits_{k=1}^{m}\frac{\lambda _{a_{k}}}{\mu }.
\]%
Let $\vec{C}_{before}$ be the conditional state of the jobs before the first
class-$J$ job, given there is such a job. Then, for $\vec{c}_{n}\in {\cal C}%
_{-J}$, 
\[
P\{\vec{C}_{before}=\vec{c}_{n}\}=\frac{\pi ^{C}(\emptyset
	)\prod\limits_{i=1}^{n}\frac{\lambda _{c_{i}}}{\mu (\vec{c}_{i})}\left( 
	\frac{\lambda _{J}}{\mu }\sum_{m,\vec{a}_{m}\in {\cal C}}\prod%
	\limits_{k=1}^{m}\frac{\lambda _{a_{k}}}{\mu }\right) }{\left( \frac{\lambda
		_{J}}{\mu }\sum_{m,\vec{a}_{m}\in {\cal C}}\prod\limits_{k=1}^{m}\frac{%
		\lambda _{a_{k}}}{\mu }\right) \sum_{j,\vec{c}_{j}\in {\cal C}_{-J}}\pi
	^{C}(\emptyset )\prod\limits_{i=1}^{n}\frac{\lambda _{c_{i}}}{\mu (\vec{c}%
		_{i})}}=\frac{\pi ^{C}(\emptyset )}{P\{\vec{C}\in {\cal C}_{-J}\}}%
\prod\limits_{i=1}^{n}\frac{\lambda _{c_{i}}}{\mu (\vec{c}_{i})}=\pi
_{-J}^{C}(\vec{c}_{n}).
\]%
That is, the first class-$J$ job \textquotedblleft sees\textquotedblright\
the steady-state distribution for the collaborative model with class $J$
removed. Similarly, letting $\vec{C}_{after}$ be the conditional state for
the jobs after the first class-$J$ job, given there is one, we have 
\[
P\{\vec{C}_{after}=\vec{a}_{m}\}=A\prod\limits_{k=1}^{m}\frac{\lambda
	_{a_{k}}}{\mu },
\]%
where the normalizing constant is 
\[
A=\left( \sum_{m,\vec{a}_{m}\in {\cal C}}\prod\limits_{k=1}^{m}\frac{\lambda
	_{a_{k}}}{\mu }\right) ^{-1}=\left( \sum_{m=0}^{\infty
}\prod\limits_{k=1}^{m}\sum_{i=1}^{J}\frac{\lambda _{i}}{\mu }\right)
^{-1}=\left( \sum_{m=0}^{\infty }\frac{\lambda }{\mu }^{m}\right)
^{-1}=(1-\rho )
\]%
with $\rho =\frac{\lambda }{\mu }$. Also, given there is at least once class-%
$J$ job, $\vec{C}_{before}$ and $\vec{C}_{after}$ are independent. Finally,
letting $N^{J}$ be the total number of class-$J$ jobs in the system in
steady state, we have  
\begin{align*}
	P\{N^{J} \geq 1\}&=\frac{\lambda _{J}}{\mu }\sum_{j,\vec{c}_{j}\in {\cal C}%
		_{-J}}\pi ^{C}(\emptyset )\prod\limits_{i=1}^{n}\frac{\lambda _{c_{i}}}{\mu (%
		\vec{c}_{i})}\sum_{m,\vec{a}_{m}\in {\cal C}}\prod\limits_{k=1}^{m}\frac{%
		\lambda _{a_{k}}}{\mu } \\
	&=\frac{\lambda _{J}}{\mu }P\{N^{J}=0\}\frac{1}{1-\rho }.
\end{align*}%
Solving for $P\{N^{J}\geq 1\} = 1 - P\{N^{J} = 0\}$, we obtain $P\{N^{J}=0\} = \rho _J$ where $\rho _{J}=\frac{\lambda _{J}}{\mu -(\lambda -\lambda
	_{J})}$.

Note that $P\{\vec{C}_{after}=\vec{a}_{m}\}$ is the same as the probability
of state $\vec{a}_{m}$ in a multiclass M/M/1 queue with service rate $\mu $.
Let $\hat{N}$ be total number of jobs after the first class-$J$ job in
steady state. From standard results for the M/M/1 queue, we have that $\hat{N%
}\sim geom(1-\rho )$, where $Y\sim geom(p)$ means $P\{Y=n\}=p(1-p)^{n}$, $%
n=0,1,...$. We can also obtain this result by summing the product form
result above: $P\{\hat{N}=n\}=\sum_{\vec{a}_{n}\in {\cal C}}P\{\vec{C}%
_{after}=\vec{a}_{n}\}$. Each of the $\hat{N}$ jobs is independently class $i
$ with probability $\lambda _{i}/\lambda $, so $\hat{N}^{J}$, the number of
class-$J$ jobs after the first class-$J$ job, is also geometrically
distributed, $\hat{N}^{J}\sim geom(1-\rho _{J})$. 
More generally, $\hat{N}^{i}$,
the number of class-$i$ jobs after the first class-$J$ job has a geometric
distribution, $\hat{N}^{i}\sim geom(1-\frac{\lambda _{i}}{\mu -\lambda
	+\lambda _{i}})$. This is a consequence of the following simple lemma
regarding Bernoulli splitting of geometric random variables, with $p=\rho
=\lambda /\mu $ and $q_{i}=\lambda _{i}/\mu $; we include the proof for
completeness.

\begin{lemma}
	\label{split}Let $Y\sim geom(1-p)$, i.e., $Y$ is the number of failures
	before the first success in i.i.d. Bernoulli trials with failure probability 
	$p$. Let $Y_{i}$ be the number of type-$i$ failures before the first success
	in i.i.d.\ Bernoulli trials with success probability $1-p$ and type-$i$
	failure probability $q_{i}$, with $\sum q_{i}=p$, so $Y_{i}|Y\sim
	Binomial(Y,q_{i}/p)$. Then $Y_{i}\sim geom(1-q_{i}/(q_{i}+1-p))$.
\end{lemma}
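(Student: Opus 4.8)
The plan is to sketch two routes and then settle on the cleaner one. My first instinct is to bypass the conditional-binomial description and argue directly by thinning the Bernoulli sequence. Picture the underlying i.i.d.\ trials in which each trial is, independently, a success with probability $1-p$, a type-$i$ failure with probability $q_i$, and some other type of failure with probability $p-q_i$ (this is consistent since $\sum_j q_j = p$). The variable $Y_i$ counts only type-$i$ failures occurring strictly before the first success, and the ``other'' failures are irrelevant to this count. So I would restrict attention to the subsequence of trials that are either a success or a type-$i$ failure.

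By independence of the trials, this subsequence is again i.i.d., and within it each trial is a success with conditional probability $(1-p)/(q_i+1-p)$ and a type-$i$ failure with the complementary probability $q_i/(q_i+1-p)$. Now $Y_i$ is exactly the number of type-$i$ failures before the first success in this reduced i.i.d.\ sequence, which by definition is geometric with success parameter $(1-p)/(q_i+1-p) = 1-q_i/(q_i+1-p)$. In the paper's convention that $geom(s)$ has $P\{Y_i=k\}=s(1-s)^k$, this is precisely $Y_i\sim geom\!\left(1-\frac{q_i}{q_i+1-p}\right)$, as claimed.

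Alternatively, to honor the stated description $Y_i\mid Y\sim Binomial(Y,q_i/p)$, I would compute the marginal directly. Using $P\{Y=n\}=(1-p)p^n$, I would write
\[
P\{Y_i=k\}=\sum_{n\geq k}(1-p)p^n\binom{n}{k}\left(\frac{q_i}{p}\right)^{k}\left(1-\frac{q_i}{p}\right)^{n-k},
\]
substitute $m=n-k$ and simplify (the powers of $p$ cancel cleanly, leaving $(1-p)q_i^k\sum_{m\geq 0}\binom{m+k}{k}(p-q_i)^m$), then collapse the sum with the negative-binomial identity $\sum_{m\geq 0}\binom{m+k}{k}x^m=(1-x)^{-(k+1)}$ at $x=p-q_i$. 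This yields $P\{Y_i=k\}=\frac{1-p}{q_i+1-p}\left(\frac{q_i}{q_i+1-p}\right)^{k}$, the same geometric law.

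I expect the only real obstacle to be bookkeeping rather than ideas. In the thinning argument the one subtle point is justifying that the relevant subsequence is again i.i.d.\ with the stated conditional probabilities; this is the standard fact that conditioning i.i.d.\ trials on membership in a fixed outcome subset preserves independence, but it deserves a sentence. In the direct computation the only nonroutine step is recognizing the negative-binomial series and checking convergence, i.e.\ that $0\le x=p-q_i<1$, which holds since $0\le q_i\le p<1$ (using $\sum_j q_j=p$). I would present the thinning proof as the main argument for its transparency and mention the direct sum as a verification.
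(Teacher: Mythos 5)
Your main (thinning) argument is exactly the paper's proof: restrict to the subsequence of trials that are either a success or a type-$i$ failure and read off the conditional failure probability $q_i/(q_i+1-p)$. The proposal is correct, and the direct negative-binomial computation you add is a valid (if unnecessary) verification.
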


\begin{proof}
	When we are counting the number of type-$i$ failures before the first
	success, we can ignore the other types of failures. That is, we can just
	look at the trials that result in either type-$i$ failures or success.
	Conditioned on the trial being either a success or a type-$i$ failure, the
	probability that it is a type-$i$ failure is $q_{i}/(q_{i}+1-p)$.
\end{proof}

Because $N^{J}=I\{N^{J}>0\}(\hat{N}^{J}+1)$, and $\hat{N}^{J}\sim
geom(1-\rho _{J}$), and, as we showed above, $P\{N^{J}=0\}=\rho _{J}$, we
have the following.

\begin{corollary}
	$N^{J}\sim geom(1-\rho _{J})$.
\end{corollary}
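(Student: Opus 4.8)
The plan is to verify directly that $N^{J}$ has the probability mass function of a $\mathrm{geom}(1-\rho_{J})$ variable, namely $P\{N^{J}=n\}=(1-\rho_{J})\rho_{J}^{\,n}$ for $n=0,1,2,\ldots$, by assembling the three ingredients established above: the decomposition $N^{J}=I\{N^{J}>0\}(\hat{N}^{J}+1)$, the conditional law $\hat{N}^{J}\sim\mathrm{geom}(1-\rho_{J})$, and the empty-system probability computed from the tail relation.

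First I would settle the base case $n=0$. Rearranging the tail relation derived above, $P\{N^{J}\ge 1\}=\frac{\lambda_{J}}{\mu(1-\rho)}\,P\{N^{J}=0\}$, and using $\mu(1-\rho)=\mu-\lambda$, the equation $1-P\{N^{J}=0\}=\frac{\lambda_{J}}{\mu-\lambda}P\{N^{J}=0\}$ solves to $P\{N^{J}=0\}=\frac{\mu-\lambda}{\mu-\lambda+\lambda_{J}}=1-\rho_{J}$, which is exactly the $n=0$ term, and hence $P\{N^{J}\ge 1\}=\rho_{J}$. For $n\ge 1$ I would invoke the decomposition: on the event $\{N^{J}\ge 1\}$ there is a first class-$J$ job and $N^{J}=\hat{N}^{J}+1$, while conditioned on this event $\hat{N}^{J}\sim\mathrm{geom}(1-\rho_{J})$. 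Therefore
\[
P\{N^{J}=n\}=P\{N^{J}\ge 1\}\,P\{\hat{N}^{J}=n-1\}=\rho_{J}\,(1-\rho_{J})\rho_{J}^{\,n-1}=(1-\rho_{J})\rho_{J}^{\,n},
\]
which coincides with the $n=0$ formula evaluated there. Thus the mass function matches $\mathrm{geom}(1-\rho_{J})$ at every $n$, establishing the corollary.

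The entire argument rests on the algebraic coincidence $P\{N^{J}\ge 1\}=\rho_{J}=P\{Y\ge 1\}$ for $Y\sim\mathrm{geom}(1-\rho_{J})$; this is precisely the discrete memorylessness that characterizes the geometric law, so once the empty probability is pinned down the tail takes care of itself. There is no genuine analytic obstacle here. The only point demanding care is the conditioning built into $\hat{N}^{J}$, which is defined \emph{given} that a class-$J$ job is present, so it must be paired with the factor $P\{N^{J}\ge 1\}$ rather than an unconditional weight. All the substantive work was done upstream, in identifying the jobs after the first class-$J$ job as an independent M/M/1-type queue and applying the Bernoulli-splitting Lemma~\ref{split} to extract the class-$J$ count.
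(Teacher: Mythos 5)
Your argument is correct and is essentially the paper's own: the paper obtains the corollary from exactly the same three ingredients (the tail relation giving the empty probability, the decomposition $N^{J}=I\{N^{J}>0\}(\hat{N}^{J}+1)$, and $\hat{N}^{J}\sim\mathrm{geom}(1-\rho_{J})$ conditioned on $N^{J}\geq 1$), and you have merely written out the pmf computation it leaves implicit. Note also that your value $P\{N^{J}=0\}=1-\rho_{J}$ is the right one — the paper's line ``we obtain $P\{N^{J}=0\}=\rho_{J}$'' is a typo (it should read $P\{N^{J}\geq 1\}=\rho_{J}$), as confirmed by the paper's later use of $P\{N^{J}=0\}=1-\rho_{J}$.
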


Summarizing our observations so far, we have the following.

\begin{theorem}
	\label{fully flexible}For the collaborative model with a fully flexible job
	class $J$,
	
	\begin{description}
		\item[(i)] The steady-state distribution for the system conditioned on there
		being no class-$J$ job is the same as that of a reduced system where there
		are no class-$J$ jobs, $\pi^C_{-J}$.
		
		\item[(ii)] The distribution of the state of the system ahead of the first
		class-$J$ job given there is one is also $\pi^C_{-J}$.
		
		\item[(iii)] The distribution of the state of the system after the first
		class-$J$ job given there is one is the same as the distribution of a
		multiclass M/M/1 queue with arrival rate $\lambda $ and service rate $\mu $.
		
		\item[(iv)] The number of class-$J$ jobs in the system in steady state, $%
		N^{J}$, satisfies $N^{J}\sim geom(1-\rho _{J})$, i.e., it is the same as in
		an M/M/1 queue with arrival rate $\lambda _{J}$ and service rate $\hat{\mu}%
		_{J}=\mu -(\lambda -\lambda _{J})$.
	\end{description}
\end{theorem}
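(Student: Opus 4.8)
The plan is to exploit the defining feature of full flexibility: since $S_J=\{1,\ldots,M\}$, any prefix $\vec{c}_i$ that contains a class-$J$ job has $S(\vec{c}_i)=\{1,\ldots,M\}$ and hence $\mu(\vec{c}_i)=\mu$. Consequently, in the product form of Theorem~\ref{thm:collaborative pi}, every denominator associated with the first class-$J$ job and with every job arriving after it collapses to the constant $\mu$. The strategy is therefore to condition on $\vec{C}\in{\cal C}\backslash{\cal C}_{-J}$ (at least one class-$J$ job present), write such a state as $(\vec{c}_n,J,\vec{a}_m)$ with the first class-$J$ job in position $n+1$, and observe that the product form factors as
\[
\pi^C(\vec{c}_n,J,\vec{a}_m)=\pi^C(\emptyset)\left(\prod_{i=1}^n\frac{\lambda_{c_i}}{\mu(\vec{c}_i)}\right)\frac{\lambda_J}{\mu}\left(\prod_{k=1}^m\frac{\lambda_{a_k}}{\mu}\right),
\]
a product of a ``before'' factor depending only on $\vec{c}_n\in{\cal C}_{-J}$ and an ``after'' factor depending only on $\vec{a}_m$.

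Part (i) follows immediately from Corollary~\ref{reduced}(i) applied with $A=\{J\}$, since conditioning on the absence of class $J$ is exactly conditioning on $\vec{C}\in{\cal C}_{-J}$. For parts (ii) and (iii), I would normalize each factor of the displayed factorization separately. Summing the ``before'' factor over all $\vec{c}_n\in{\cal C}_{-J}$ and the ``after'' factor over all $\vec{a}_m\in{\cal C}$, the two cross-normalizing sums cancel in the conditional probabilities, which (a) shows $\vec{C}_{before}$ and $\vec{C}_{after}$ are independent given a class-$J$ job is present, (b) identifies $P\{\vec{C}_{before}=\vec{c}_n\}$ with $\pi^C_{-J}(\vec{c}_n)$, matching the normalizing constant $\pi^C(\emptyset)/P\{\vec{C}\in{\cal C}_{-J}\}$, and (c) identifies $P\{\vec{C}_{after}=\vec{a}_m\}=(1-\rho)\prod_{k=1}^m(\lambda_{a_k}/\mu)$ as the stationary law of a multiclass M/M/1 queue with arrival rate $\lambda$ and service rate $\mu$, where the normalizer evaluates to $1-\rho$ by summing the geometric series $\sum_m(\lambda/\mu)^m$.

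For part (iv), I would first compute $P\{N^J\geq 1\}$ by summing the factored product form over all before- and after-states, obtaining $P\{N^J\geq 1\}=(\lambda_J/\mu)P\{N^J=0\}/(1-\rho)$; solving $P\{N^J\geq 1\}=1-P\{N^J=0\}$ yields $P\{N^J=0\}=\rho_J$ with $\rho_J=\lambda_J/(\mu-(\lambda-\lambda_J))$. Since the ``after'' state is distributed as an M/M/1 queue, its total occupancy $\hat{N}$ is $geom(1-\rho)$ and each of its jobs is independently class $J$ with probability $\lambda_J/\lambda$; by the Bernoulli-splitting Lemma~\ref{split} the number $\hat{N}^J$ of class-$J$ jobs after the first is $geom(1-\rho_J)$. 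Combining with $N^J=I\{N^J>0\}(\hat{N}^J+1)$ and $P\{N^J=0\}=\rho_J$ gives $N^J\sim geom(1-\rho_J)$, the occupancy of an M/M/1 queue with rate $\lambda_J$ and effective service rate $\hat{\mu}_J=\mu-(\lambda-\lambda_J)$.

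The main obstacle is conceptual rather than computational: recognizing and justifying that full flexibility forces $\mu(\vec{c}_i)=\mu$ from the first class-$J$ job onward, which is precisely what decouples the chain into an ``upstream'' copy of the reduced system and a ``downstream'' memoryless M/M/1 tail. Once that factorization is in hand, the remaining work is bookkeeping with normalizing constants (to confirm the before-distribution is exactly $\pi^C_{-J}$ and the two parts are independent) together with the elementary geometric-splitting argument of Lemma~\ref{split}.
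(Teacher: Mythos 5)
Your proposal follows essentially the same route as the paper: the same decomposition of a state containing a class-$J$ job into $(\vec{c}_n, J, \vec{a}_m)$, the same observation that full flexibility collapses every denominator from the first class-$J$ job onward to $\mu$, the same separate normalization yielding independence of $\vec{C}_{before}$ and $\vec{C}_{after}$ together with the identifications with $\pi^C_{-J}$ and the multiclass M/M/1 law, and the same appeal to Lemma~\ref{split} for part (iv). One small slip you share with the paper's own text: solving $1 - P\{N^J=0\} = \frac{\lambda_J}{\mu-\lambda}\,P\{N^J=0\}$ gives $P\{N^J=0\} = 1-\rho_J$ (equivalently $P\{N^J>0\}=\rho_J$), not $P\{N^J=0\}=\rho_J$; the former is the value actually consistent with, and needed for, the conclusion $N^J\sim geom(1-\rho_J)$ via $N^J = I\{N^J>0\}(\hat{N}^J+1)$.
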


Let $T^{i}$ be the response time (total time in system) for a class-$i$ job
in steady state for our collaborative model, and let $T^{M/M/1}(\lambda ,\mu
)$ be the steady-state response time of a job in a standard M/M/1 queue with
arrival rate $\lambda $ and service rate $\mu $, i.e., $T^{M/M/1}(\lambda
,\mu )$ is exponentially distributed with rate $\mu -\lambda $ as long as $%
\lambda <\mu $. Let $T_{Q}^{i}$ and $T_{Q}^{M/M/1}(\lambda ,\mu )$ be
similarly defined for steady-state time in queue.

\begin{corollary}
	\label{fully flexible response}For the collaborative model with a fully
	flexible job class $J$,
	
	\begin{description}
		\item[(i)] $\pi^C (\emptyset )=\pi^C _{-J}(\emptyset )(1-\rho _{J}),$
		
		\item[(ii)] $T^{J}\sim T^{M/M/1}(\lambda _{J},\hat{\mu}_{J})\sim
		T^{M/M/1}(\lambda ,\mu )$, and $T_{Q}^{J}\sim T_{Q}^{M/M/1}(\lambda _{J},\hat{\mu%
		}_{J})$.
	\end{description}
\end{corollary}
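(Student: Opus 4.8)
The plan is to handle the two parts of the corollary separately: part (i) follows directly from the reduction corollaries together with the geometric law for $N^J$, while part (ii) rests on identifying the class-$J$ jobs as a FCFS subsystem to which a distributional form of Little's law applies. For part (i), I would invoke Corollary~\ref{loss}, specialized to $A=\{J\}$ as in Corollary~\ref{reduced}, which gives $\pi^C_{-J}(\emptyset)=\pi^C(\emptyset)/P\{\vec{C}\in{\cal C}_{-J}\}$. Since $\{\vec{C}\in{\cal C}_{-J}\}$ is exactly the event $\{N^J=0\}$, and Theorem~\ref{fully flexible}(iv) gives $N^J\sim geom(1-\rho_J)$ so that $P\{N^J=0\}=1-\rho_J$, substitution yields $\pi^C(\emptyset)=\pi^C_{-J}(\emptyset)(1-\rho_J)$ at once. (This also fixes the empty-system probability consistently with the geometric law.)

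For the response time, the crux is a structural observation. Because class $J$ is fully flexible, once the oldest class-$J$ job sits at position $n+1$ we have $\mu(\vec{c}_j)=\mu$ for every $j\ge n+1$, hence $\Delta_j(\vec{c}_n)=0$ for all $j\ge n+2$. Thus every job strictly behind the first class-$J$ job receives zero marginal service, and in particular class-$J$ jobs depart in their FCFS arrival order, one at a time. With Poisson($\lambda_J$) arrivals and order-preserving class-$J$ departures, I would apply the distributional form of Little's law: the class-$J$ jobs that a departing class-$J$ job leaves behind are precisely those that arrived during its sojourn $T^J$, and by PASTA together with a level-crossing argument this left-behind count is distributed as the time-stationary $N^J\sim geom(1-\rho_J)$. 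Hence the number of Poisson($\lambda_J$) arrivals during $T^J$ is $geom(1-\rho_J)$.

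I would then invert this relation by transforms. Writing $A$ for the number of arrivals during $T^J$, one has $E[x^A]=\tilde{T}^J(\lambda_J(1-x))$, where $\tilde{T}^J$ is the Laplace transform of $T^J$; equating this to the $geom(1-\rho_J)$ generating function $\tfrac{1-\rho_J}{1-\rho_J x}$ and using $\rho_J=\lambda_J/\hat{\mu}_J$ forces $\tilde{T}^J(s)=\tfrac{\mu-\lambda}{\mu-\lambda+s}$, i.e.\ $T^J\sim Exp(\mu-\lambda)$. Since $\hat{\mu}_J-\lambda_J=\mu-\lambda$, this law is simultaneously $T^{M/M/1}(\lambda_J,\hat{\mu}_J)$ and $T^{M/M/1}(\lambda,\mu)$, giving both distributional equalities. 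For the queueing time, $P\{T_Q^J=0\}=P\{N^J=0\}=1-\rho_J$ follows from PASTA (a tagged arrival waits exactly when it finds an earlier class-$J$ job), matching the atom of $T_Q^{M/M/1}(\lambda_J,\hat{\mu}_J)$, and an analogous counting/transform argument on the waiting period supplies the $Exp(\mu-\lambda)$ conditional part.

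The main obstacle is the queueing-time claim, precisely because the instantaneous service rate of the oldest class-$J$ job is state-dependent, equal to $\mu-\mu(\vec{c}_n)$ for the non-class-$J$ jobs $\vec{c}_n$ ahead of it; consequently the per-job ``service time'' is \emph{not} exponential and the naive decomposition $T^J=T_Q^J+Exp(\hat{\mu}_J)$ fails pointwise (already visible in the single-server case). I would therefore avoid reasoning about service rates entirely and instead establish $T_Q^J$ through counts of class-$J$ jobs and Laplace/generating-function identities, mirroring the response-time argument, so that the state-dependence of the service rate never needs to be tracked directly.
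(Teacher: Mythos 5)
Your proposal is correct, and its core is the same as the paper's: part (i) is exactly the paper's argument (combine $\pi^C_{-J}(\emptyset)=\pi^C(\emptyset)/P\{N^J=0\}$ with $P\{N^J=0\}=1-\rho_J$ from Theorem~\ref{fully flexible}(iv)), and the law of $T^J$ in part (ii) is obtained, as in the paper, by feeding $N^J\sim geom(1-\rho_J)$ into the distributional form of Little's law. What you add is the verification that the paper leaves implicit: that class-$J$ jobs depart one at a time in FCFS order (because $\Delta_j(\vec{c}_n)=0$ at every position behind the first class-$J$ job), that the count left behind by a class-$J$ departure has the time-stationary law of $N^J$ (level crossing plus PASTA), that $T^J$ is independent of later arrivals (OI property (i)), and the explicit transform inversion. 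The paper instead cites distributional Little's law as a black box and disposes of $T_Q^J$ with the one-line assertion that the class-$J$ subsystem is ``stochastically indistinguishable'' from an M/M/1 queue; your separate counting plan for $T_Q^J$ is a more careful treatment of the weakest step in the paper's proof.

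One point deserves emphasis: your parenthetical ``a tagged arrival waits exactly when it finds an earlier class-$J$ job'' is valid only under the paper's implicit definition of $T_Q^J$ as the time until the tagged job has no class-$J$ predecessors --- the definition the paper later formalizes by setting $S_{eff}^J=T^J-T_Q^J$, with the effective service period starting when no class-$J$ jobs are ahead. Under the more common reading ``time until the job first receives positive service,'' both your claim and the corollary itself would fail: in the W model, a class-$3$ arrival finding one class-$1$ and one class-$2$ job has no class-$J$ predecessor yet receives zero service, so the atom at $0$ would be strictly smaller than $1-\rho_J$. Your final paragraph shows you recognize exactly this state-dependence, and your proposed ``analogous counting argument'' does go through under the correct reading: the waiting-phase count is $(N^J-1)^+$, whose generating function matches the M/M/1$(\lambda_J,\hat{\mu}_J)$ queue-length distribution, so the same transform inversion yields $T_Q^J\sim T_Q^{M/M/1}(\lambda_J,\hat{\mu}_J)$. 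It would strengthen the write-up to state this definition of $T_Q^J$ explicitly before invoking PASTA.
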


\begin{proof}
	(i) From (i) and (iv) of Theorem \ref{fully flexible} we have $\pi^C
	_{-J}(\emptyset )=$ $\pi^C (\emptyset )/P\{N^{J}=0\}=\pi^C (\emptyset
	)/(1-\rho _{J})$.
	
	(ii) Distributional Little's law tells us, for any $\lambda _{a}$ and $L$,
	that if the number of jobs in a queueing system is geometrically distributed
	with mean $L$, jobs arrive at rate $\lambda _{a}$, and jobs are served in
	FCFS order, then the the response time is exponentially distributed with
	mean $L/\lambda _{a}$. The result follows from (iv) of Theorem \ref{fully
		flexible} with arrival rate $\lambda _{a}=\lambda _{J}$ and mean number in
	system $L=\frac{1-\rho _{J}}{\rho _{J}}=\frac{\lambda _{J}}{(\mu -(\lambda
		-\lambda _{J}))-\lambda _{J}}=\frac{\lambda _{J}}{\mu -\lambda }$. Thus, the
	queueing system for class-$J$ jobs in steady state is stochastically
	indistinguishable from a single-class M/M/1 queue with only class-$J$
	jobs and with effective service rate $\hat{\mu} = \mu -(\lambda -\lambda _{J})$.
\end{proof}

Our results for a fully flexible class in the collaborative model can be
extended to general OI queues. Suppose we have an OI queue, so the service
rate as a function of the ordered list of job classes, $\mu (\vec{c}_{n})$,
satisfies conditions (i)-(iii) of Section \ref{OI}, and suppose there is a
maximal service rate $\mu $, such that $\mu (\vec{c}_{n})\leq \mu $ for any
state $\vec{c}_{n}$. Also suppose there is a job class $J$ such that for any
state $\vec{c}_{n}$ in which the first class-$J$ job is in position $k$, $%
k\leq n$, $\mu (\vec{c}_{n})=\mu (\vec{c}_{k})=\mu $. Then a class-$J$ job
will \textquotedblleft block\textquotedblright\ jobs behind it in the OI
queue in the same way a fully flexible job blocks jobs behind it in the
skill-based collaborative queue, and Theorem \ref{fully flexible} and
Corollary \ref{fully flexible response} still hold.

\subsubsection{Other classes in nested systems}

Recall that a nested system has a fully flexible job class, $J$, and if
class $J$ is removed, it decomposes into two or more nonoverlapping nested
subsystems. Thus, each job class $i$, by removing job classes $j$ such that $%
S_{i}\subset S_{j}$ or $S_{i}\cap S_{j}=\emptyset $, defines a nested
subsystem with servers $S_{i}$ and job classes $j$ that {\em require}
servers $S_{j}\subseteq S_{i}$, and where class $i$ is fully flexible. That is, for a subset $S$ of servers, let $%
R(S)=\overline{C(\overline{S})}=\{1,\ldots ,N\}\backslash C(\{1,\ldots
,M\}\backslash S)$ be the job classes that require (i.e., that are only compatible with) servers in $S$.
The nested subsystem defined by job class $i$
consists of servers $k\in S_{i}$ and job classes $j\in R(S_{i})$, i.e., the reduced system
$\vdash \{1,...,M\}\backslash{S_{i}}$.
Let $\hat{\mu}_{i}=\mu (S_{i})-\lambda (R(S_{i}))+\lambda _{i}$ be the effective service capacity for class $i$ in this subsystem, and let $\rho _{i}=\lambda _{i}/\hat{\mu}_{i}$. 
We will show that the overall response time for class-$i$ jobs is the sum of the queueing times for classes $j$ with $S_i \subset S_j$, plus the response time
for class $i$ given those classes are gone (so it is the most flexible class
in its subsystem). Note that, as we observed for class $J$, $%
T^{M/M/1}(\lambda _{i},\hat{\mu}_{i})\sim T^{M/M/1}(\lambda (R(S_{i})),\mu
(S_{i}))$.

\begin{theorem}
	\label{nested}In a nested collaborative system, for any job class $i$,
	\[
	T^{i}\sim T^{M/M/1}(\lambda _{i},\hat{\mu}_{i})+\sum_{j:S_{i}\subset
		S_{j}}T_{Q}^{M/M/1}(\lambda _{j},\hat{\mu}_{j}), 
	\]%
	where all the terms are independent. Also, $\pi^C(\emptyset)=\prod\limits_{j=1}^{J}(1-\rho _{j})$.
\end{theorem}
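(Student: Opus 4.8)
The plan is to prove both claims by induction, peeling off the unique fully flexible class $J$ (with $S_J=\{1,\dots,M\}$) and recursing on the nonoverlapping nested subsystems into which the system decomposes once $J$ is removed. Throughout I rely on the fact that $R(S_j)$, and hence $\hat\mu_j$ and $\rho_j$, are unchanged when $J$ is deleted: for $j\neq J$ we have $S_J\not\subseteq S_j$, so $J\notin R(S_j)$, and removing $J$ leaves each subsystem's parameters intact.

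For the empty-state probability, Corollary~\ref{fully flexible response}(i) gives $\pi^C(\emptyset)=\pi^C_{-J}(\emptyset)(1-\rho_J)$. Removing $J$ decomposes the system into nonoverlapping nested subsystems sharing neither servers nor classes, so by Corollary~\ref{reduced} each has a product-form stationary distribution and they are mutually independent; the inductive hypothesis applied to each subsystem gives $\pi^C_{-J}(\emptyset)=\prod_{j\neq J}(1-\rho_j)$, and hence $\pi^C(\emptyset)=\prod_{j=1}^J(1-\rho_j)$. The base case of a single (necessarily fully flexible) class is the M/M/1 identity $\pi^C(\emptyset)=1-\rho$.

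For the response time, first note that under FCFS in the collaborative model a tagged class-$i$ job is never delayed by jobs arriving after it: a younger job cannot seize a server in $S_i$ ahead of the older tagged job. Hence $T^i$ is determined entirely by the jobs present at the tagged job's arrival, which by PASTA are distributed as the stationary state $\vec C$, together with the tagged job's own service clocks. Jobs ahead that are incompatible with $S_i$ (those $k$ with $S_k\cap S_i=\emptyset$) never touch the servers in $S_i$ and may be ignored; by nestedness the remaining relevant jobs are the ancestors ($S_k\supsetneq S_i$) and the same-subsystem classes ($S_k\subseteq S_i$, i.e.\ $k\in R(S_i)$). The induction step is then the decomposition
\[
T^i \;\sim\; \tilde T^i \;+\; T_Q^{M/M/1}(\lambda_J,\hat\mu_J),
\]
where $\tilde T^i$ is the response time of class $i$ in the reduced nested system $-J$ and the two terms are independent. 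Granting this, the inductive hypothesis applied to the subsystem containing $i$ gives $\tilde T^i\sim T^{M/M/1}(\lambda_i,\hat\mu_i)+\sum_{j:\,S_i\subset S_j,\,j\neq J}T_Q^{M/M/1}(\lambda_j,\hat\mu_j)$ with all terms independent; since $J$ is the unique additional ancestor of $i$ (because $S_i\subset S_J$), adjoining its independent term produces exactly the stated sum, and the base case where $i$ is already fully flexible is Corollary~\ref{fully flexible response}(ii).

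The heart of the argument, and the step I expect to be the main obstacle, is justifying the additive, independent decomposition above. The intuition is that, because class $J$ is fully flexible, whenever a class-$J$ job is the oldest job present it receives the full rate $\mu$ and blocks every server, so class-$J$ jobs act as barriers: the tagged job can only complete once all older class-$J$ jobs have departed, and between the passages of successive class-$J$ jobs the non-$J$ jobs evolve as in the system $-J$. Making this precise requires a sample-path coupling of the full system with the reduced system $-J$ (sharing the non-$J$ arrivals and the unblocked service epochs) that expresses the tagged job's sojourn as its $-J$ sojourn plus the total class-$J$ blocking time it incurs, followed by a proof that this aggregate blocking time is independent of the $-J$ dynamics and distributed as $T_Q^{M/M/1}(\lambda_J,\hat\mu_J)$. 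The tools are already in place: Theorem~\ref{fully flexible} parts (ii)--(iii) give that, conditioned on the presence of a class-$J$ job, the jobs ahead of the first class-$J$ job form an independent copy of $\pi^C_{-J}$ while the jobs behind form an independent M/M/1 queue of rate $\mu$, and the identity $T_Q^J\sim T_Q^{M/M/1}(\lambda_J,\hat\mu_J)$ identifies the blocking contribution. Assembling these into a clean independence statement for a tagged class-$i$ job, rather than for a class-$J$ job, is the delicate part of the proof.
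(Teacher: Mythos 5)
Your treatment of $\pi^C(\emptyset)$ is complete and matches the paper's argument (peel off the fully flexible class via Corollary~\ref{fully flexible response}(i), decompose into independent subsystems, recurse). The response-time part, however, has a genuine gap: you correctly isolate the decomposition $T^i \sim \tilde T^i + T_Q^{M/M/1}(\lambda_J,\hat\mu_J)$ as the crux, but you do not prove it --- you explicitly defer ``the delicate part'' --- and the route you sketch for filling it in would not work as stated. You propose a sample-path coupling in which ``between the passages of successive class-$J$ jobs the non-$J$ jobs evolve as in the system $-J$.'' That is false: while a class-$J$ job is present, every server not occupied by a job older than it is seized by that class-$J$ job (it is compatible with all servers), so every job younger than the first class-$J$ job --- including the tagged job and the same-subsystem jobs between them --- receives zero service. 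The reduced-system dynamics are therefore not reproduced pathwise during the blocking period, and the identity ``sojourn $=$ $(-J)$-sojourn $+$ blocking time'' cannot be established by the coupling you describe.

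The paper's proof avoids this by arguing distributionally rather than pathwise. By PASTA and Theorem~\ref{fully flexible}(iv) the tagged job sees $N^J$ class-$J$ jobs with $N^J$ geometric; since class $J$ behaves as an M/M/1 queue with rates $(\lambda_J,\hat{\mu}_J)$, the time until those $N^J$ jobs have all departed is distributed as $T_Q^{M/M/1}(\lambda_J,\hat{\mu}_J)$. The key independence step is then supplied by quasi-reversibility (Theorem~\ref{thm:collaborative pi}): the state left behind by a departing class-$J$ job has the same distribution as that seen by an arrival, so conditioned on its being the last class-$J$ job it leaves behind a state distributed as $\pi^C_{-J}$, independently of how long the blocking took. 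At that instant the system decomposes into independent stationary subsystems, and Corollary~\ref{fully flexible response}(ii) applied to the subsystem in which the tagged class is fully flexible yields the $T^{M/M/1}(\lambda_i,\hat{\mu}_i)$ term; recursing down the nesting produces the full sum. If you replace your proposed coupling with this quasi-reversibility argument, the rest of your induction goes through.
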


\begin{proof}
	We start with the response time result. Let class $G$ be fully redundant in
	one of the subsystems obtained when class $J$ is removed. That is, $G$ is
	such that $S_{j}\subset S_{G}$ or $S_{j}\cap S_{G}=\emptyset $ for all $%
	j\neq G,J$. We will show that $T^{G}\sim T_{Q}^{J}(\lambda _{J},\hat{\mu}%
	_{J})+T^{M/M/1}(\lambda _{G},\hat{\mu}_{G})$.
	The result will follow by repeating the argument.
	
	From PASTA and (iv) of Theorem \ref{fully flexible}, an arriving (tagged)
	class-$G$ job in steady state will \textquotedblleft see\textquotedblright\ $%
	N^{J}$ class-$J$ jobs in the system, and it will not be able to start
	service until all of those $N^{J}$ class-$J$ jobs have left the system. That
	is, if there are class-$J$ jobs in the system, the tagged job must wait
	until the end of a class-$J$ busy period, which, for an M/M/1 queue, is the
	same as the class-$J$ response time. Thus, the time the tagged job must wait
	until the system is empty of class-$J$ jobs is%
	\[
	I\{N^{J}>0\}T^{J}\sim T_{Q}^{J}\sim T_{Q}^{M/M/1}(\lambda _{J},\hat{\mu}_{J})%
	\text{.} 
	\]
	
	If $N^{J}=0$ when the tagged class-$G$ job arrives, then from (i) of Theorem \ref%
	{fully flexible}, it will \textquotedblleft see\textquotedblright\ the
	reduced system in steady state, with distribution $\pi^C _{-J}$. If $N^{J}>0$,
	then, from quasi-reversibility, the state left behind by a class-$J$ job
	will have the same distribution as that seen upon arrival. Therefore, given
	it is the last class-$J$ job, i.e., it leaves behind no class-$J$ jobs, then
	the state it leaves behind has the distribution $\pi^C _{-J}$, again from (i)
	of Theorem \ref{fully flexible}. Thus, once there are no class-$J$ jobs, the
	tagged job sees independent subsystems defined by the fully flexible class
	in each. The subsystems that do not include class $G$ will have no effect on
	our tagged job. Hence, applying Corollary \ref{fully flexible response} to
	the subsystem with $G$ instead of $J$ as the most flexible class, we have
	that the class-$G$ response time given there are no class-$J$ jobs is $%
	T^{G}|N^{J}=0$ $\sim T^{M/M/1}(\lambda _{G},\hat{\mu}_{G})$, and the overall
	response time result follows.
	
	From our earlier observations, $\pi^C _{-J}(\emptyset )=$ $\pi^C (\emptyset
	)/P\{\vec{C}\in {\cal C}_{-J}\}=\pi^C (\emptyset )/P\{N^{J}=0\}=\pi^C
	(\emptyset )/(1-\rho _{J})$, so $\pi^C (\emptyset )=(1-\rho _{J})\pi^C
	_{-J}(\emptyset )$. If there are no class $J$ jobs, the system decomposes
	into $K$ independent subsystems, each with its own fully flexible class, $%
	G_{k}$, $k=1,...,K$, so%
	\[
	\pi^C _{-J}(\emptyset )=\prod\limits_{k=1}^{K}\pi^C _{-(J,G_{k})}(\emptyset
	)=\prod\limits_{k=1}^{K}(1-\rho _{G_{k}})\pi^C _{-J}(\emptyset ).
	\]%
	Repeating the argument within each subsystem we get $\pi^C
	(\emptyset)=\prod\limits_{i=1}^{J}(1-\rho _{i})$.
\end{proof}

We have already established that the effective service time of the fully
flexible class $J$, $S_{eff}^{J}=T^{J}-T_{Q}^{J}$, is exponentially
distributed with rate $\hat{\mu}_{J}=\mu -(\lambda -\lambda _{J})$. We can
also see this from our result above. Define the effective service time of a
(tagged) class-$J$ job as the time from which it first has no class-$J$ jobs
ahead of it until it completes service. At the time this effective service
period starts, the system the tagged job sees will decompose into $K$
independent subsystems, each with its own fully flexible class, $G_{k}$, $%
k=1,...,K$, and the tagged job will join each of those subsystems as a fully
flexible job for the subsystem (viewing the collaborative model as a cancel-on-completion redundancy system). From
Corollary \ref{fully flexible response} applied to $G_{k}$ in subsystem $k$,
the response time of the fully flexible class within the subsystem will have
the same distribution as the response time in the corresponding M/M/1 queue, so 
\[
S_{eff}^{J}=\min_{k=1,...,K}\{T^{M/M/1}(\lambda _{k},\mu (S_{k})-(\lambda
(R(S_{k}))-\lambda _{k})\}\sim \Exp (\mu (S_{k})-\lambda
(R(S_{k}))\ \sim \Exp (\mu -(\lambda -\lambda _{J})),
\]%
using the fact that the minimum of exponentials is exponential with the sum
of the rates.

As an example, consider the W model in which class-$i$ jobs can only be
served by server $i$, $i=1,2$, and class-3 jobs can be served by either
server. Then 
\[
T^{3}\sim T^{M/M/1}(\lambda _{3},\mu -\lambda _{1}-\lambda _{2})\text{ and }%
T^{i}\sim T_{Q}^{M/M/1}(\lambda _{3},\mu -\lambda _{1}-\lambda
_{2})+T(\lambda _{i},\mu _{i})\text{, }i=1,2.
\]

\subsection{Noncollaborative Model}
\label{sec:nested_noncollab}

Let $T_{Q|B}^{i}$ be the stationary time in the job queue for a class-$i$ job in the noncollaborative model, given that the set of busy servers is $B=\{1,...,M\}$ (i.e., all servers are busy). 
Then, from Observation~\ref{obs:samplepaths}, we know $T_{Q|B}^{i}$ has the same distribution as the response time for class-$i$ jobs in the collaborative model. Therefore, from Theorem \ref{nested}, we have 

\begin{theorem}
	In a nested noncollaborative system, for any job class $i$, given busy
	servers $B=\{1,...,M\}$,  
	\[
	T_{Q|B}^{i}\sim T^{M/M/1}(\lambda _{i},\hat{\mu}_{i})+\sum_{j:S_{i}\subset
		S_{j}}T_{Q}^{M/M/1}(\lambda _{j},\hat{\mu}_{j}),
	\]%
	where $\hat{\mu}_{j}=\mu (S_{j})-\lambda (R(S_{j}))+\lambda _{j}$ and all
	the terms are independent.
\end{theorem}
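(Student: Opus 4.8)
The plan is to reduce the claim to the collaborative response-time formula of Theorem~\ref{nested} via the path-wise coupling in Observation~\ref{obs:samplepaths}, applied in the special case $\vec{s}_k=\emptyset$ (all servers busy). The conditioning $B=\{1,\dots,M\}$ is exactly this case, and I would first note that it makes the starting states of the two systems agree in law: under ALIS, setting $\vec{s}_k=\emptyset$ in Theorem~\ref{ALIS product form} collapses the idle-server factor and leaves $\pi(\vec{c}_n,\emptyset)\propto\prod_{i=1}^n\lambda_{c_i}/\mu(\vec{c}_i)$, which is precisely the collaborative product form of Theorem~\ref{thm:collaborative pi} (the analogous statement for RAIS is what the observation supplies directly). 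By PASTA, a class-$i$ job that arrives to find all servers busy therefore sees the same job-queue configuration $\vec{c}_n$ as a class-$i$ job arriving to the collaborative system in steady state.

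Next I would run the two systems under the coupled arrivals and service completions of Observation~\ref{obs:samplepaths}, started from this common state. The observation gives that the noncollaborative job queue $\vec{c}_n$ evolves identically to the collaborative system vector $\vec{c}_n$; in particular the tagged class-$i$ job occupies the same position in both and exits $\vec{c}_n$ at the same instant. In the noncollaborative model exiting $\vec{c}_n$ means entering service, which ends the queueing time $T_{Q|B}^i$; in the collaborative model it means leaving the system, which ends the response time $T^i$. Hence $T_{Q|B}^i\sim T^i$, and substituting the expression from Theorem~\ref{nested},
\[
T^{i}\sim T^{M/M/1}(\lambda _{i},\hat{\mu}_{i})+\sum_{j:S_{i}\subset S_{j}}T_{Q}^{M/M/1}(\lambda _{j},\hat{\mu}_{j}),
\]
with independent terms and $\hat{\mu}_{j}=\mu (S_{j})-\lambda (R(S_{j}))+\lambda _{j}$, yields precisely the claimed law for $T_{Q|B}^i$; the rates $\hat{\mu}_j$ coincide verbatim, so no recomputation is needed.

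The main obstacle is that Observation~\ref{obs:samplepaths} only guarantees the equivalence \emph{while the conditioned idle set stays idle}, i.e.\ while all servers remain busy, whereas the tagged job's wait may span an interval over which some server idles. The clean part of the resolution is that every server in $S_i$ stays busy throughout: a server idles only upon completing with no compatible queued job, so when $m$ idles we have $m\notin S(\vec{c}_n)$, and since the tagged class-$i$ job is still queued this forces $m\notin S_i$. At that instant $m$ also contributes nothing to the collaborative rate $\mu(\vec{c}_n)$, so the rates driving $\vec{c}_n$ still agree; the only genuine divergence is a later arrival compatible with the now-idle $m$, which bypasses the queue in the noncollaborative model. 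I expect the real work to lie here, and I would close it using the nested structure: by nestedness any class $c$ with $m\in S_c$, $m\notin S_i$ is either disjoint from class $i$ (and so never affects the tagged job) or strictly dominates it ($S_i\subset S_c$), in which case its influence is already captured by the blocking term $T_{Q}^{M/M/1}(\lambda_c,\hat{\mu}_c)$. Invoking quasi-reversibility (Theorem~\ref{thm:collaborative pi}) to identify the state a departing dominating-class job leaves behind, exactly as in the proof of Theorem~\ref{nested}, I would argue that the class-$i$ queueing-time distribution is insensitive to these idling servers, so that the transfer $T_{Q|B}^i\sim T^i$ holds in full.
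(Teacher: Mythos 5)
Your proposal is correct and follows essentially the same route as the paper: the paper's proof consists precisely of invoking Observation~\ref{obs:samplepaths} in the case where all servers are busy to identify $T_{Q|B}^{i}$ with the collaborative response time $T^{i}$, and then citing Theorem~\ref{nested}. Your extra care about servers idling during the tagged job's wait addresses a subtlety the paper leaves implicit, and the cleanest closure there is simpler than your nestedness/quasi-reversibility sketch: any server $m$ that idles satisfies $m\notin S(\vec{c}_n)\supseteq S_i$, and any job that bypasses the queue at $m$ arrived after the tagged job, so by OI property (i) it can never influence the tagged job's exit from $\vec{c}_n$ in either model.
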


The result can be generalized for class $i$, if some servers are idle but
all the servers in $S_{i}$ are busy, as follows. Fix $i$ and the set of busy
servers $B\supseteq S_{i}$, and let $y$ be such that $S_{i}\subseteq
S_{y}\subseteq B$ and $\nexists j\neq y$ such that $S_{y}\subset
S_{j}\subseteq B$. That is, class $y$ determines a nested subsystem
of busy servers in which class $y$ is fully flexible, and there are no jobs of class $j$ such that $S_{y}\subset S_{j}$ in the job queue. 
Therefore, an arriving class-$i$ job sees a reduced system, $\vdash \{1,...,M\}\backslash S_{y}$, consisting only of the servers in $S_{y}$ and job classes $j\in R(S_{y})$, and in which all the servers in $S_{y}$ are busy. We have the following.

\begin{corollary}
	In a nested noncollaborative system, for any job class $i$, given the
	servers in $B\supseteq S_{i}$ are busy, 
	\[
	T_{Q|B}^{i}\sim T^{M/M/1}(\lambda _{i},\hat{\mu}_{i})+\sum_{j:S_{i}\subset
		S_{j}\subseteq S_{y}}T_{Q}^{M/M/1}(\lambda _{j},\hat{\mu}_{j}),
	\]%
	where $\hat{\mu}_{j}=\mu (S_{j})-\lambda (R(S_{j}))+\lambda _{j}$ and all
	the terms are independent.
\end{corollary}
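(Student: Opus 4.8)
The plan is to reduce the conditioned noncollaborative queueing time to a collaborative response time in the single nested branch that contains $S_i$, and then apply Theorem~\ref{nested}. By PASTA, a tagged arriving class-$i$ job sees the stationary distribution conditioned on the servers in $B$ being busy, and by Observation~\ref{obs:samplepaths}, conditioned on this set of busy servers and while they remain busy, the evolution of the job queue in the noncollaborative model (under either ALIS or RAIS) coincides with the evolution of the system state in the truncated collaborative model on the busy servers $B$. Thus $T_{Q\mid B}^{i}$ has the same distribution as the response time of a class-$i$ job in that reduced collaborative system.

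First I would argue that, for this tagged job, only the branch $\vdash\{1,\dots,M\}\setminus S_{y}$ matters. Because the system is nested, the server sets form a laminar family. Any class $j$ with $S_i\subset S_j$ and $S_j\subseteq B$ has $S_j$ comparable to $S_y$ (both contain $S_i$), and by maximality of $S_y$ this forces $S_j\subseteq S_y$; conversely, any class $j$ with $S_y\subset S_j$ has $S_j\not\subseteq B$, so not all of its compatible servers are busy and, by the FCFS/ALIS/RAIS dynamics, no such job can be waiting in the queue. Finally, classes $j$ with $S_j\cap S_i=\emptyset$ lie in disjoint branches that share no server with $S_i$ and therefore cannot block the tagged job. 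Hence, within the collaborative model on $B$, the class-$i$ response time depends only on the subsystem on servers $S_y$, in which every server is busy and class $y$ is fully flexible.

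It then remains to apply Theorem~\ref{nested} to this subsystem $\vdash\{1,\dots,M\}\setminus S_{y}$. There class $y$ plays the role of the fully flexible class, the classes more flexible than $i$ are exactly those $j$ with $S_i\subset S_j\subseteq S_y$, and the effective rates $\hat\mu_j=\mu(S_j)-\lambda(R(S_j))+\lambda_j$ for $S_j\subseteq S_y$ are intrinsic to each nested subsystem and hence unchanged by the restriction. This yields
\[
T_{Q\mid B}^{i}\sim T^{M/M/1}(\lambda_i,\hat\mu_i)+\sum_{j:\,S_i\subset S_j\subseteq S_y}T_Q^{M/M/1}(\lambda_j,\hat\mu_j),
\]
with all terms independent, as claimed. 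The main obstacle is the reduction in the second paragraph: rigorously showing that restricting from $B$ to the single branch $S_y$ leaves the tagged job's queueing-time distribution unchanged. This step is where the laminar structure of the nested server sets and the FCFS service discipline must be combined---the former to guarantee comparability of $S_y$ with every $S_j$ meeting $S_i$, the latter to rule out any waiting job that requires a server outside $S_y$. Once this reduction is in place, no further calculation is needed beyond citing Theorem~\ref{nested} and Observation~\ref{obs:samplepaths}.
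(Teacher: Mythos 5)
Your proposal is correct and follows essentially the same route as the paper: identify the maximal class $y$ with $S_i \subseteq S_y \subseteq B$, note that no job of a class more flexible than $y$ can be waiting (an idle compatible server would have absorbed it), conclude that the tagged class-$i$ arrival sees only the reduced subsystem $\vdash \{1,\dots,M\}\setminus S_y$ with all servers in $S_y$ busy and class $y$ fully flexible, and then apply Observation~\ref{obs:samplepaths} together with Theorem~\ref{nested}. Your added detail on the laminar structure forcing $S_j \subseteq S_y$ is a useful elaboration of the paper's one-line justification, not a different argument.
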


We can use our results for queueing times to obtain response time
distributions for the special case in which the service rate is the same at
all servers; that is, $\mu _{j}=\mu /M$ for all servers $j=1,\dots ,M$. 
We do this by conditioning on the set of busy servers seen by an arriving
job. Define $I_{i}$ as an indicator that the all the servers in $S_{i}$ are busy (other servers may also be busy). 
If an arriving class-$i$ job finds an idle compatible server, it will
immediately enter service; otherwise it must wait in the job queue before
entering service. Hence we have the following, where $T^{i}$ is the class-$i$
response time.

\begin{corollary}
	In a nested noncollaborative system, for any job class $i$, given the
	servers in $B\supseteq S_{i}$ are busy, 
	\[
	T^{i}\sim \Exp(\mu /M)+I_{i} \left( T^{M/M/1}(\lambda _{i},\hat{\mu}%
	_{i})+\sum_{j:S_{i}\subset S_{j}\subseteq S_{y}}T_{Q}^{M/M/1}(\lambda _{j},%
	\hat{\mu}_{j}) \right),
	\]%
	where $\hat{\mu}_{j}=\mu |S_{j}|/M-\lambda (R(S_{j}))+\lambda _{j}$ and all
	the terms are independent.
\end{corollary}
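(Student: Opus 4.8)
The plan is to condition on the state of the system seen by an arriving class-$i$ job and decompose its response time into a queueing component and a service component, reusing the conditional queueing-time formula from the preceding corollary. By PASTA, an arriving class-$i$ job sees the time-stationary distribution, so we may condition on the set $B$ of busy servers it observes, and in particular on whether any compatible server is idle. The indicator $I_i$ records whether all servers in $S_i$ are busy. The two pieces of the claimed identity correspond to the service time, which is always incurred, and the queueing time, which is incurred only when $I_i=1$.

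First I would treat the case $I_i=0$, in which the arriving job finds an idle compatible server and enters service immediately. Since every server has rate $\mu_j=\mu/M$, its service time is $\Exp(\mu/M)$ irrespective of which idle server it is assigned to, and there is no queueing, so $T^i=\Exp(\mu/M)$. Next, in the case $I_i=1$, the job must wait in the job queue; conditioning on the busy set $B\supseteq S_i$, which determines the distinguished class $y$, the preceding corollary gives that the queueing time has the distribution $T^{M/M/1}(\lambda_i,\hat\mu_i)+\sum_{j:S_i\subset S_j\subseteq S_y}T_Q^{M/M/1}(\lambda_j,\hat\mu_j)$ with independent terms. After the queueing period the job enters service at a server in $S_y$, again with an $\Exp(\mu/M)$ service time. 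Combining the two cases yields exactly $T^i\sim \Exp(\mu/M)+I_i\big(T^{M/M/1}(\lambda_i,\hat\mu_i)+\sum_{j:S_i\subset S_j\subseteq S_y}T_Q^{M/M/1}(\lambda_j,\hat\mu_j)\big)$, and the substitution $\mu(S_j)=\sum_{m\in S_j}\mu_m=\mu|S_j|/M$ turns the $\hat\mu_j=\mu(S_j)-\lambda(R(S_j))+\lambda_j$ of the preceding corollary into the stated $\hat\mu_j=\mu|S_j|/M-\lambda(R(S_j))+\lambda_j$.

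The remaining work is to verify the independence assertion. The queueing-time terms are mutually independent by the preceding corollary, which ultimately rests on Theorem~\ref{nested} and the sample-path coupling of Observation~\ref{obs:samplepaths}, so it only remains to check that the service time $\Exp(\mu/M)$ is independent of the queueing time and of $I_i$. This is where homogeneity of the server rates is essential: the service time is a fresh exponential clock started at the instant the job enters service, hence by the memoryless property it is independent of everything that happened before entering service, and because all servers share the rate $\mu/M$ its distribution does not depend on which server the job ultimately occupies. Thus the service time is unconditionally $\Exp(\mu/M)$ and independent of the (server-dependent) queueing dynamics.

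I expect the main obstacle to be precisely this last independence point. Without the assumption $\mu_j=\mu/M$, the service time would be $\Exp(\mu_j)$ for the specific server $j$ that the job enters, and since that server's identity is correlated with the busy set $B$ and hence with the queueing time, the clean additive decomposition into independent terms would fail; one would instead obtain a mixture over which server is entered. The homogeneity assumption collapses this mixture to a single exponential and decouples it from the queueing time, which is exactly what makes the stated form with independent terms valid.
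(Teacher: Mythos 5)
Your argument is correct and follows essentially the same route as the paper, which states this corollary without a formal proof but motivates it by exactly your decomposition: condition (via PASTA) on the set of busy servers seen by the arrival, use the preceding corollary for the queueing time when $I_i=1$, and add an $\mbox{Exp}(\mu/M)$ service time that the homogeneous rates render independent of the queueing dynamics. Your explicit justification of the independence of the service-time term, and your remark on why the decomposition fails for heterogeneous rates, match the paper's own discussion in the subsequent subsection on challenges in generalizing the noncollaborative model.
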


\subsubsection{Challenges in generalizing the NC model}

Unlike for the collaborative model, the results in the previous section
require $\mu _{j}=\mu /M$ for all servers $j=1,\dots ,M$. This condition
ensures that a job's service time is the same regardless of the server on
which it runs. If we were instead to allow different servers to have
different rates, the analysis would change in several ways. First, for a job
that finds multiple compatible servers idle, we need to further condition on
the server on which the job runs. Under RAIS this is determined
probabilistically according to the assignment rule; the probabilities can be
determined using the process described in~\cite{VAW}. Under ALIS this is
determined by which server has been busy longer. 
Second, for a job that finds all compatible servers
busy we still need to determine the server on which the job ultimately runs.
With the current approach, we would need to determine the probability that a
class-$i$ job completes on server $j$ in the collaborative model; this would
then be equal to the probability that a class-$J$ job runs on server $i$ in
the noncollaborative model. Unfortunately, computing this quantity appears
to be complicated.

A final challenge in the NC model is that it is difficult, in general, to
compute the probability that various subsets of servers are busy. While this
analysis is tractable in certain small nested systems, for example, in the W
model, the form of these probabilities is not particularly clean or
intuitive. 
In larger nested systems, we believe that the probabilities needed to
perform the requisite conditioning are unlikely to have a clean closed form
solution, even for a symmetric nested system.

\section{Partial State Aggregation and Conditional Queueing Times}
\label{sec:partial_agg}

Section~\ref{sec:nested} provides one approach for understanding the form of the per-class response time distributions in nested systems.
In this section, we turn to a second approach that uses an alternative, partially aggregated, state description, which
gives us conditional queueing times, given the busy servers in the
order of the jobs they are serving,
for general, possibly non-nested, systems.
Like the detailed states considered in Section~\ref{sec:detailed_states}, the partially aggregated states also provide a Markov description for the model and also yield a product form stationary distribution.

\subsection{Noncollaborative Model}

Instead of tracking the classes of all jobs in the system, we now track the number of jobs in the queue in between jobs in service, but not their individual classes.
Let $l$ denote the number of jobs currently in service.
The partially aggregated state includes the vector $\vec{n}_l = (n_1,\dots,n_l)$, where $n_i$ denotes the number of jobs in the queue (\emph{not} in service) that arrived after the $i$th job in service and before the $(i+1)$st job in service.
Under both ALIS and RAIS, we track the busy servers in the arrival order of the jobs they are serving, (but \emph{not} the classes of the jobs in service); for the ALIS version we also track the idle servers in the order in which they became idle.

The partially aggregated state description for  noncollaborative models was first introduced by Adan, Visschers, and Weiss~\cite{AW,VAW}. In these papers, the stationary distribution was derived directly using partial balance for the partially aggregated states.
In this section we provide an alternative derivation that involves aggregating the stationary probabilities for the detailed states discussed in Section~\ref{sec:detailed_states}.

Let us first consider the noncollaborative model with the RAIS policy, in which arrivals finding multiple idle compatible servers are assigned to a server at random with appropriate probabilities that depend on the set of busy servers.
The partially aggregated state is $(\vec{b}_{l},\vec{n}_{l})$, where $l$ is the number of busy servers (which in the noncollaborative model is the same as the number of jobs in service), $\vec{b}_{l}$ is the set of busy servers in order of the arrival times of the jobs they are serving, and $n_{i}$, $i=1,...,l$ is the number of jobs waiting for one of busy servers $1,\ldots ,l$. 
Thus, server $b_{1}$ is serving the oldest job, the next $n_{1}$ jobs to have arrived are waiting for (require) server $b_{1}$, i.e., their classes are in $R(b_{1})$, the $n_{1}+2^{nd}$ oldest job is being served by $b_{2}$, the next $n_2$ jobs are only compatible with $b_1$ or $b_2$ or both, i.e. their classes are in $R(\vec{b}_{2})$, and so on. The corresponding detailed state, $\vec{z}_m$, is such that $z_1 = b_1$, $z_{n_{1}+2}=b_2$, etc., and $m=l+\sum_{i=1}^{l}n_{i}$.
Thus, in the partially aggregated state $(\vec{b}_{l},\vec{n_{l}})$ there are $l$ jobs in service and $\sum_{i=1}^{l}n_{i}$ jobs in the queue. 
When the set of busy servers is $\vec{b}_{j}$, let $\lambda ^{a}_{s}(\vec{b}_{j})$ represent the activation rate of idle server $s\notin \{b_{1},...,b_{j}\}$ (the rate of going from state $(\vec{b}_{j},\vec{n}_{j})$ to $((\vec{b}_{j},s),(\vec{n}_{j},0)$).

With this description of the state we defer determining the class of a job until we need it. 
That is, we realize information about a job's class only when a server becomes available and the job under consideration is next in the queue behind the available server.
At this point, we probabilistically determine whether or not the job is compatible with the server; if it is compatible, it enters service.
If not, the server ``skips over'' the job and we have narrowed down the set of possible classes for the job, but we may have not specified its exact class.

Visschers et al.\ found that the above state space exhibits a product form stationary distribution, under the \emph{assignment condition} for routing a compatible job to 
idle server $b_j$ given busy servers $\vec{b}_{j-1}$: $\prod\limits_{j=1}^{l}\lambda ^{a}  _{b_{j}}(\vec{b}%
_{j-1})$ must be the same for any permutation of $b_{1},\ldots ,b_{l}$.
This is the same assignment condition that we use in Section~\ref{sec:detailed_states} for the detailed state description.

Let $\alpha (\vec{b}_{j})=\frac{\lambda (R(\vec{b}_{j}))}{\mu (\vec{b}_{j})}$. We use the notation $\pi^{RAIS'}$ to denote the partially aggregated stationary distribution under RAIS (in contrast with $\pi^{RAIS}$, which denotes the stationary distribution for the detailed state description).

\begin{theorem}
	\label{thm:partialagg_rais}
	(Visschers et al.~\cite{VAW})
	\begin{align*}
	\pi^{RAIS'} (\vec{b}_{l},\vec{n}_{l}) &= \pi^{RAIS}(\emptyset)
	\prod_{j=1}^l \frac{\lambda^{a} _{b_{j}}(\vec{b}_{j-1})}{\mu
		(\vec{b}_{j})} 
	\alpha (\vec{b}_{j})^{n_{j}} 
	=\pi^{RAIS}(\emptyset)  \prod\limits_{j=1}^{l}\alpha (\vec{b}%
	_{j})^{n_{j}}\prod\limits_{j=1}^{l}\frac{\lambda ^{a}_{b_{j}}(\vec{b}_{j-1})}{%
		\mu (\vec{b}_{j})} \\
	&=\pi^{RAIS} (\vec{n}_{l}\vec{|b}_{l})\pi^{RAIS} (\vec{b}_{l})=\prod\limits_{j=1}^{l}(1-%
	\alpha (\vec{b}_{j}))\alpha (\vec{b}_{j})^{n_{j}}\pi^{RAIS} (\vec{b}_{l}).
	\end{align*}
\end{theorem}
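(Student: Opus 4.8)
The plan is to obtain the partially aggregated distribution by summing the detailed stationary probabilities of Theorem~\ref{thm:noncollab_rais_detailed} over all detailed states $\vec{z}_m$ that collapse to a given $(\vec{b}_l,\vec{n}_l)$. First I would identify this fiber of detailed states. A state $\vec{z}_m$ maps to $(\vec{b}_l,\vec{n}_l)$ precisely when its busy-server entries, read in order, equal $\vec{b}_l$, and there are exactly $n_j$ waiting-job entries between $b_j$ and $b_{j+1}$ (with $n_l$ entries after $b_l$). The validity constraint $S(c)\subseteq\vec{z}_{i-1}$ forces each waiting job in the $j$th gap to have all of its compatible servers among $\{b_1,\ldots,b_j\}$, i.e.\ its class lies in $R(\vec{b}_j)$; conversely any class in $R(\vec{b}_j)$ is admissible there. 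Thus the fiber is parameterized by choosing, independently, a class in $R(\vec{b}_j)$ for each of the $n_j$ waiting jobs in gap $j$, for $j=1,\ldots,l$.

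Next I would factor the detailed product form along the gaps. The essential observation is that $\mu(\vec{z}_i)$ depends only on the busy servers up to position $i$, so it is constant and equal to $\mu(\vec{b}_j)$ at every position inside gap $j$, as well as at the position of server $b_j$ itself. Hence the entry for $b_j$ contributes $\lambda^a_{b_j}(\vec{b}_{j-1})/\mu(\vec{b}_j)$, while each waiting job of class $c$ in gap $j$ contributes $\lambda_c/\mu(\vec{b}_j)$. Summing $\prod_{i=1}^m \lambda^z_i(\vec{z}_i)/\mu(\vec{z}_i)$ over the fiber then factors across gaps: the server terms pull out unchanged, and for each gap the sum over the $n_j$ independent class choices gives
\[
\sum_{c_1,\ldots,c_{n_j}\in R(\vec{b}_j)}\prod_{k=1}^{n_j}\frac{\lambda_{c_k}}{\mu(\vec{b}_j)}
=\left(\frac{\lambda(R(\vec{b}_j))}{\mu(\vec{b}_j)}\right)^{n_j}
=\alpha(\vec{b}_j)^{n_j}.
\]
This yields the first displayed equality, and the second is just a reordering of the factors.

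For the marginal/conditional factorization in the last line, I would sum $\pi^{RAIS'}(\vec{b}_l,\vec{n}_l)$ over all $\vec{n}_l\in\{0,1,\ldots\}^l$. Using the stability condition, which guarantees $\alpha(\vec{b}_j)<1$, each geometric series sums to $1/(1-\alpha(\vec{b}_j))$, giving the marginal
\[
\pi^{RAIS}(\vec{b}_l)=\pi^{RAIS}(\emptyset)\prod_{j=1}^l\frac{\lambda^a_{b_j}(\vec{b}_{j-1})}{\mu(\vec{b}_j)}\,\frac{1}{1-\alpha(\vec{b}_j)}.
\]
Dividing, the conditional $\pi^{RAIS}(\vec{n}_l\mid\vec{b}_l)$ becomes the product $\prod_{j=1}^l(1-\alpha(\vec{b}_j))\,\alpha(\vec{b}_j)^{n_j}$ of independent geometrics, which is exactly the claimed third expression.

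I expect the main obstacle to be the first step: cleanly characterizing the fiber and justifying that the contributions across gaps are mutually independent and that within gap $j$ the admissible classes are exactly $R(\vec{b}_j)$, with no additional coupling from the FCFS constraint. Once this bookkeeping is pinned down, the factorization of $\mu(\vec{z}_i)$ by gaps and the resulting geometric sums are routine.
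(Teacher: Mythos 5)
Your proposal is correct and follows essentially the same route as the paper's proof: summing the detailed product form of Theorem~\ref{thm:noncollab_rais_detailed} over the fiber of states $\vec{z}_m$ consistent with $(\vec{b}_l,\vec{n}_l)$, observing that $\mu(\vec{z}_i)=\mu(\vec{b}_j)$ throughout gap $j$ and that the admissible classes there are exactly $R(\vec{b}_j)$, so the sum factors into $\alpha(\vec{b}_j)^{n_j}$ terms. Your additional geometric-series step for the marginal/conditional factorization in the last line is a correct (and slightly more complete) finish than the paper, which leaves that equality implicit.
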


The proof given by Visschers et al.\ involves showing directly that local balance holds for the partially aggregated states.
Below we give an alternative proof, which follows by summing the stationary probabilities (given in Theorem~\ref{thm:noncollab_rais_detailed}) of states $\vec{z}_{m}$ that are consistent with $(\vec{b}_{l},\vec{n_{l}})$.

\begin{proof}
	We begin by recalling that $\vec{z}_m$ is an interleaving of states $\vec{c}_n$ and $\vec{b}_l$, where $m = n + l$.
	That is, letting $k_i = \sum_{j=1}^i n_j$, we can write
	$$\vec{z}_m =  (b_1,c_1,\dots,c_{n_1},b_2,\dots,b_j,c_{k_j +1} ,\dots,c_{k_{j-1}+n_{j}},b_{j+1},\dots,b_l,c_{k_{l-1} +1} ,\dots,c_{k_{l-1}+n_{l}}).$$
	
	Let ${\cal C}(\vec{b}_l,\vec{n}_l)$ denote the set of states $\vec{z_m}$ compatible with $(\vec{b}_l,\vec{n}_l)$.
	We then have
	\begin{align*}
	\pi^{RAIS'}(\vec{b}_l,\vec{n}_l) &= \sum_{\vec{z}_m \in C(\vec{b}_l,\vec{n}_l)} \pi^{RAIS}(\vec{z}_m) \\
	&= \pi^{RAIS}(\emptyset) \sum_{\vec{z}_m \in C(\vec{b}_l,\vec{n}_l)} \prod_{i=1}^{m} \frac{\lambda_i^z(\vec{z}_i)}{\mu(\vec{z}_i)} \\
	&= \pi^{RAIS}(\emptyset) \sum_{\vec{z}_m \in C(\vec{b}_l,\vec{n}_l)} \prod_{j=1}^{l} \left( \frac{\lambda_{b_j}^a(\vec{b}_{j-1})}{\mu(\vec{b}_j)} 
	\prod_{i = k_{j-1} +1}^{k_{j-1}+n_j}
	\frac{\lambda_{c_i}}{\mu(\vec{b}_j)}  \right)  \\
	&= \pi^{RAIS}(\emptyset) \prod_{j=1}^{l} \left(  \frac{\lambda_{b_j}^{a}(\vec{b}_{j-1})}{\mu(\vec{b}_j)} 
	\prod_{i = k_{j-1} +1}^{k_{j-1}+n_j}
	\frac{\sum_{c \in R(\vec{b}_j)} \lambda_c }{\mu(\vec{b}_j)} \right)\\
	&= \pi^{RAIS}(\emptyset) \prod_{j=1}^{l} \left(  \frac{\lambda_{b_j}^{a}(\vec{b}_{j-1})}{\mu(\vec{b}_j)} \left( \frac{\lambda(R(\vec{b}_j)) }{\mu(\vec{b}_j)} \right)^{n_j} \right)\\
	&= \pi^{RAIS}(\emptyset) \prod\limits_{j=1}^{l}\alpha (\vec{b}%
	_{j})^{n_{j}}\prod\limits_{j=1}^{l}\frac{\lambda _{b_{j}^{a}}(\vec{b}_{j-1})}{%
		\mu (\vec{b}_{j})}.
	\end{align*}
\end{proof}

We now turn to the ALIS policy, in which arrivals finding multiple idle compatible servers are assigned to the one that has been idle longest. 
Because the order of the idle servers now affects the system evolution, we now consider the aggregate state $(\vec{s}_{M-l},\vec{b}_{l},\vec{%
	n_{l}})$, where $\vec{s}_{M-l}$ is the set of $M-l$ idle servers in the
order in which they became idle,
and $\vec{b}_l$ and $\vec{n}_l$ are defined as in the RAIS model.

\begin{theorem} (Adan and Weiss~\cite{AW})
	\[
	\pi^{ALIS'}(\vec{s}_{M-l},\vec{b}_{l},\vec{n_{l}})= \pi^{ALIS}(\emptyset,\emptyset) \prod\limits_{j=1}^{l}%
	\alpha (R(\vec{b}_{j}))^{n_{j}}%
	\prod\limits_{j=1}^{l}\frac{1}{\mu (\vec{b}_{j})}
	\prod\limits_{j=1}^{M-l}\frac{1}{\lambda (\vec{s}_{j})} 
	\]%
	where $\pi^{ALIS}(\emptyset,\emptyset)$ is a normalizing constant. 
\end{theorem}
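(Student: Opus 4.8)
The plan is to imitate the aggregation argument used for RAIS (Theorem~\ref{thm:partialagg_rais}), but first I must confront a mismatch. The detailed ALIS state $(\vec{c}_n,\vec{s}_k)$ of Theorem~\ref{ALIS product form} records the ordered idle servers and the classes of the queued jobs, but it does \emph{not} record the order $\vec{b}_l$ of the busy servers, whereas the aggregated state $(\vec{s}_{M-l},\vec{b}_l,\vec{n}_l)$ does. Consequently I cannot simply sum detailed-state probabilities as in the RAIS proof. Instead I would introduce a \emph{finer} ALIS state $(\vec{s}_{M-l},\vec{w}_m)$, where $\vec{w}_m$ interleaves the busy servers and the queued jobs in order of arrival (exactly as $\vec{z}_m$ is defined under RAIS) while $\vec{s}_{M-l}$ retains the idle servers in idleness order (as in the detailed ALIS state). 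Writing $\mu(\vec{w}_i)=\mu(\vec{b}_{k(i)})$ for the total rate of the busy servers among the first $i$ positions, the target product form for this finer state is
\[
\pi(\vec{s}_{M-l},\vec{w}_m)=\pi^{ALIS}(\emptyset,\emptyset)\prod_{i=1}^{m}\frac{\nu_i}{\mu(\vec{w}_i)}\prod_{j=1}^{M-l}\frac{1}{\lambda(\vec{s}_j)},
\]
where $\nu_i=\lambda_c$ if position $i$ holds a queued class-$c$ job and $\nu_i=1$ if position $i$ holds a busy server; the asymmetry reflects that queued jobs form an open OI queue while the servers circulate in a closed network. Setting every $n_j=0$ leaves only the server-dependent factors $\prod_j 1/\mu(\vec{b}_j)\prod_j 1/\lambda(\vec{s}_j)$, which reproduce the form of Comte's closed-token distribution, a useful consistency check.

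Establishing this finer product form is the main work, and I would do it by verifying partial balance, classifying transitions as in the proof of Theorem~\ref{thm:noncollab_rais_detailed}: (a) a queue arrival requiring only already-busy servers, which appends a job to $\vec{w}_m$ and leaves $\vec{s}_{M-l}$ fixed; (b) a completion at a busy server that has a waiting compatible job, which keeps the server busy and deletes a queued job; (c) a completion at a busy server with no waiting compatible job, which moves that server from $\vec{w}_m$ to the tail of $\vec{s}_{M-l}$; and (d) an arrival compatible with an idle server, which under ALIS removes the longest-idle compatible server from $\vec{s}_{M-l}$ and appends it to the busy portion of $\vec{w}_m$. For types (a) and (b), which do not touch the idle servers, the balance equations are verbatim those of Theorem~\ref{thm:noncollab_rais_detailed}, using only the OI properties of $\mu(\vec{w}_i)$, and the idle-server balance is as in Theorem~\ref{ALIS product form}. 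The genuinely new ingredient is the \emph{coupled} pair (c)/(d), in which a server circulates between the busy and idle queues. The main obstacle is to show that this circulation balances with the flat weights $1/\mu(\vec{b}_j)$ and $1/\lambda(\vec{s}_j)$ in place of an explicit activation rate $\lambda^a_{b_j}(\vec{b}_{j-1})$; this is precisely the closed network of two quasi-reversible OI queues treated by Comte~\cite{C} and, in the loss version, the reversible server process of Adan, Hurkens, and Weiss~\cite{AHW}, so I would invoke their quasi-reversibility (equivalently the Kolmogorov-criterion argument already used for the RAIS assignment condition) to close this case.

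With the finer product form in hand, the aggregation step is routine and parallels the RAIS computation. Fixing $(\vec{s}_{M-l},\vec{b}_l,\vec{n}_l)$, I would sum $\pi(\vec{s}_{M-l},\vec{w}_m)$ over all $\vec{w}_m$ consistent with it, that is, over all class assignments to the $n_j$ queued jobs in gap $j$, each of which must lie in $R(\vec{b}_j)$. The factor $\prod_{j=1}^{M-l}1/\lambda(\vec{s}_j)$ is constant across the sum, each busy-server position contributes $1/\mu(\vec{b}_j)$, and summing a gap-$j$ job over its admissible classes gives $\sum_{c\in R(\vec{b}_j)}\lambda_c/\mu(\vec{b}_j)=\alpha(\vec{b}_j)$ per job, hence $\alpha(\vec{b}_j)^{n_j}$ for the whole gap. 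Collecting these factors yields
\[
\pi^{ALIS'}(\vec{s}_{M-l},\vec{b}_l,\vec{n}_l)=\pi^{ALIS}(\emptyset,\emptyset)\prod_{j=1}^{l}\alpha(\vec{b}_j)^{n_j}\prod_{j=1}^{l}\frac{1}{\mu(\vec{b}_j)}\prod_{j=1}^{M-l}\frac{1}{\lambda(\vec{s}_j)},
\]
which is the claimed expression (reading the displayed $\alpha(R(\vec{b}_j))$ as $\alpha(\vec{b}_j)=\lambda(R(\vec{b}_j))/\mu(\vec{b}_j)$).
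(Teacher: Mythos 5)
The paper itself gives no proof of this theorem: it is quoted from Adan and Weiss~\cite{AW}, who, as the opening of Section~\ref{sec:partial_agg} notes, establish it by verifying partial balance directly on the aggregated states $(\vec{s}_{M-l},\vec{b}_l,\vec{n}_l)$. The paper's ``aggregate a detailed product form'' program is carried out only for RAIS (Theorem~\ref{thm:partialagg_rais}) and for the collaborative model, so your proposal is a genuinely different route. Your diagnosis of the obstacle is exactly right: the detailed ALIS state $(\vec{c}_n,\vec{s}_k)$ of Theorem~\ref{ALIS product form} does not record $\vec{b}_l$, so there is nothing to sum over. Introducing the hybrid state $(\vec{s}_{M-l},\vec{w}_m)$ --- the RAIS interleaving together with the ALIS idle-server queue --- is the right fix, and the product form you posit for it is consistent with everything else in the paper: summing it over permutations of $\vec{s}_{M-l}$ recovers the detailed RAIS form via the identity $\pi^{ALIS}(\emptyset,\emptyset)\sum_{{\cal P}(\vec{s}_{M-l})}\prod_j 1/\lambda(\vec{s}_j)=\pi^{RAIS}(\emptyset)\prod_j\lambda^a_{b_j}(\vec{b}_{j-1})$ (Corollary~\ref{cor:agg_alis} in disguise), setting $n_j\equiv 0$ recovers Comte's closed-token form, and your final aggregation over class assignments is a correct computation identical in structure to the proof of Theorem~\ref{thm:partialagg_rais}. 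What each approach buys: the direct verification of Adan and Weiss is shorter because the aggregated chain has fewer transition types to balance; yours produces the finer product form as a by-product and makes the geometric factors $\alpha(\vec{b}_j)^{n_j}$ transparent as sums of $\lambda_c/\mu(\vec{b}_j)$ over $c\in R(\vec{b}_j)$.

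The one place where the argument leans rather than stands is the partial-balance verification of the hybrid product form for the busy--idle circulation, your cases (c) and (d). You justify the flat weights $1/\mu(\vec{b}_j)$ and $1/\lambda(\vec{s}_j)$ by citing quasi-reversibility of Comte's closed network of two OI queues~\cite{C}, but in your setting the two server queues are coupled to an open job queue: a server becomes idle only when it has no compatible waiting job, and the idle-server queue is an OI \emph{loss} queue truncated by the current $\vec{c}_n$ exactly as in Section~\ref{sec:alis}, so the closed-network result does not apply verbatim. The computation that is actually needed is the induction sketched in the last paragraph of the proof of Theorem~\ref{ALIS product form}, rerun on the hybrid state; it does go through --- the numerator $1$ attached to each busy server makes the relevant ``assignment condition'' $\prod_j 1=1$ trivially order-independent, which is why ALIS needs no analogue of the RAIS max-flow construction --- but this is the weight-bearing step of the whole proof and should be written out rather than invoked. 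With that step supplied, the proof is complete.
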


As noted by Adan and Weiss~\cite{AW}, aggregating the stationary distribution under ALIS over all permutations of the idle servers, $\vec{s}_k$, yields the same stationary distribution as under RAIS.

\begin{corollary}
	\label{cor:agg_alis}
	$\sum_{{\cal{P}}(\vec{s}_{M-l})} \pi^{ALIS'}(\vec{s}_{M-l},\vec{b}_l,\vec{n}_l) = \pi^{RAIS'}(\vec{b}_l,\vec{n}_l)$.
\end{corollary}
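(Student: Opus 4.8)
The plan is to verify the identity by directly summing the ALIS' product form over idle-server orderings, reducing everything to a single combinatorial identity about the idle-server factors, and then pinning down the leftover constant by normalization. First I would observe that in the ALIS' formula the factors $\prod_{j=1}^{l}\alpha(\vec b_j)^{n_j}$ and $\prod_{j=1}^{l}1/\mu(\vec b_j)$, together with the constant $\pi^{ALIS}(\emptyset,\emptyset)$, do not depend on the order in which the idle servers became idle; only $\prod_{j=1}^{M-l}1/\lambda(\vec s_j)$ changes as we permute $\vec s_{M-l}$. Pulling the order-independent part out of the sum leaves
\[
\sum_{{\cal P}(\vec s_{M-l})}\pi^{ALIS'}(\vec s_{M-l},\vec b_l,\vec n_l)=\pi^{ALIS}(\emptyset,\emptyset)\prod_{j=1}^{l}\alpha(\vec b_j)^{n_j}\prod_{j=1}^{l}\frac{1}{\mu(\vec b_j)}\,F(S),
\]
where $S=\{1,\dots,M\}\setminus\{b_1,\dots,b_l\}$ is the set of idle servers and $F(S):=\sum_{{\cal P}(\vec s_{M-l})}\prod_{j=1}^{M-l}1/\lambda(\vec s_j)$. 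Since the RAIS' form of Theorem~\ref{thm:partialagg_rais} carries the very same $\alpha(\vec b_j)^{n_j}$ factors and has $\prod_{j}\lambda_{b_j}^{a}(\vec b_{j-1})/\mu(\vec b_j)=\Pi_\lambda(\vec b_l)/\prod_{j}\mu(\vec b_j)$, the corollary reduces to showing that $F(S)$ is proportional to $\Pi_\lambda(\vec b_l)$ with a constant of proportionality independent of the state.

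The core step, which I expect to be the main obstacle, is therefore the lemma: for every busy set $B$ with idle complement $S=\bar B$,
\[
F(S)=r\,\Pi_\lambda(B),\qquad r:=1/\Pi_\lambda(\{1,\dots,M\}),
\]
where $\Pi_\lambda(B)$ is well defined (order-independent) by the assignment condition. I would prove this by induction on $|S|$. The base case $S=\emptyset$ reads $F(\emptyset)=1=r\,\Pi_\lambda(\{1,\dots,M\})$ by the choice of $r$. For the inductive step I would condition on which server is listed last among the idle servers: because the last idle factor is $\lambda(\vec s_{M-l})=\lambda(C(S))$ regardless of which element is placed last, one obtains the recursion $F(S)=\frac{1}{\lambda(C(S))}\sum_{s\in S}F(S\setminus\{s\})$. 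Applying the induction hypothesis to each term, $F(S\setminus\{s\})=r\,\Pi_\lambda(B\cup\{s\})$, together with the multiplicative form of the assignment condition $\Pi_\lambda(B\cup\{s\})=\Pi_\lambda(B)\,\lambda_{s}^{a}(B)$ (valid for every $s\in\bar B$ with the \emph{same} $\Pi_\lambda(B)$, which is exactly why it can be factored out of the sum), turns the right-hand side into $\frac{r\,\Pi_\lambda(B)}{\lambda(C(S))}\sum_{s\in S}\lambda_{s}^{a}(B)$.

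The step that makes everything collapse is the flow-balance property of the RAIS activation rates, $\sum_{s\in S}\lambda_{s}^{a}(B)=\lambda(C(S))$: every arrival compatible with some idle server activates exactly one idle server, so the total activation rate out of busy set $B$ equals the arrival rate $\lambda(C(\bar B))=\lambda(C(S))$ of jobs compatible with at least one idle server. Substituting, the $\lambda(C(S))$ factors cancel and $F(S)=r\,\Pi_\lambda(B)$, completing the induction. Finally I would feed the lemma back into the reduced identity to obtain
\[
\sum_{{\cal P}(\vec s_{M-l})}\pi^{ALIS'}(\vec s_{M-l},\vec b_l,\vec n_l)=\frac{\pi^{ALIS}(\emptyset,\emptyset)\,r}{\pi^{RAIS}(\emptyset)}\,\pi^{RAIS'}(\vec b_l,\vec n_l).
\]
Since both sides are probability distributions over $(\vec b_l,\vec n_l)$ — the left-hand side is the full ALIS' distribution aggregated over all idle orderings, which sums to one, as does $\pi^{RAIS'}$ — the constant must equal one and the corollary follows. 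The only genuinely nontrivial ingredients are the idle-ordering recursion, the multiplicative assignment condition, and the flow-balance identity; the matching of the two normalizing constants is then automatic and need not be computed explicitly.
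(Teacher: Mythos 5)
Your argument is correct, and it is worth noting that the paper itself supplies no proof of this corollary --- it is stated as an observation attributed to Adan and Weiss~\cite{AW}. Your derivation is therefore a genuine, self-contained verification rather than a reproduction of an argument in the text. The reduction is sound: the factors $\prod_j \alpha(\vec{b}_j)^{n_j}\prod_j \mu(\vec{b}_j)^{-1}$ are common to both product forms, so everything hinges on the identity $F(S)=\Pi_\lambda(B)/\Pi_\lambda(\{1,\dots,M\})$ for $S=\bar{B}$, and your three ingredients each check out. The recursion $F(S)=\lambda(S)^{-1}\sum_{s\in S}F(S\setminus\{s\})$ is valid because $\lambda(\vec{s}_{M-l})$ depends only on the set of idle servers (OI property (ii) for the idle-server queue), so the last factor is the same for every permutation; the factorization $\Pi_\lambda(B\cup\{s\})=\Pi_\lambda(B)\,\lambda_s^a(B)$ is exactly what the assignment condition buys you; and the flow-conservation identity $\sum_{s\in S}\lambda_s^a(B)=\lambda(S)$ holds because an arriving job compatible with at least one idle server activates exactly one of them, so the per-class assignment probabilities sum to one over the compatible idle servers. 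One could object that you never compute the ratio $\pi^{ALIS}(\emptyset,\emptyset)\,r/\pi^{RAIS}(\emptyset)$ explicitly, but the normalization argument closing the proof is legitimate: the left-hand side is the full ALIS$'$ distribution marginalized over idle-server orderings, hence sums to one over $(\vec{b}_l,\vec{n}_l)$, as does $\pi^{RAIS'}$, forcing the constant to equal one. The payoff of your route is that it makes explicit \emph{why} the assignment condition is precisely what is needed for the ALIS and RAIS partially aggregated distributions to coincide; the cost is the extra induction, which the original references avoid by establishing each product form separately and comparing.
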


Corollary~\ref{cor:agg_alis} tells us that the conditional stationary distribution of the time in queue, given the set of busy servers, is the same under ALIS as under RAIS.
Under both policies, conditioned on $\vec{b}_l$, the number of jobs waiting in the queue between busy servers $b_j$ and $b_{j+1}$ is geometrically distributed with parameter $1 - \alpha(R(\vec{b}_j)) = 1 - \lambda(R(\vec{b}_j))/\mu(\vec{b}_j)$, from Theorem~\ref{thm:partialagg_rais}. Moreover, each of these jobs is 
of class $c$ with probability $\lambda_c/\lambda(R(\vec{b}_j))$. 
Therefore, from Lemma~\ref{split}, conditioned on $\vec{b}_l$, the number of class-$i$ jobs waiting in the queue between busy servers $b_j$ and $b_{j+1}$ is geometrically distributed with parameter $1 - \frac{\lambda_i}{\mu(\vec{b}_j)-\lambda(R(\vec{b}_j))+\lambda_i}$.
Hence, from distributional Little's law, and because class-$i$ jobs are
served in order, the time a class-$i$ job will spend in the ``subsystem'' behind the servers $\vec{b}_j$ is the same
as the response time for a standard M/M/1 queue with arrival rate 
$\lambda_i$ and service rate $\mu(\vec{b}_j)-\lambda(R(\vec{b}_j))+\lambda_i$. Therefore, depending
on $\vec{b}_l$,
a job arriving in steady state will either start service immediately, or wait a sum of exponential times before entering service.

\begin{theorem}
	The queueing time for a class $i$ job, given busy servers $\vec{b}_{l}$, is%
	\[
	T^i_{Q}(\vec{b}_{l})=I(i\in R(\vec{b}_{l})) \sum_{j=f(i,\vec{b}%
		_{l})}^{l}T^{M/M/1}(\lambda _{i},\mu (\vec{b}_{j})-\lambda (R(\vec{b}_{j}))+\lambda
	_{i}) 
	\]%
	where $I(\cdot)$ is the indicator function, and $f(i,\vec{b}_{l})=\arg
	\min \{j:0\leq j\leq l,i\in R(\vec{b}_{j})\}=\arg \max \{j:0\leq j\leq
	l,b_{j}\in S(i)\}$ is the largest indexed busy machine that is compatible
	with job class $i$.
\end{theorem}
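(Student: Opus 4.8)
The plan is to read the theorem off from the per-segment facts established in the paragraph immediately preceding it, using PASTA and a case split on whether the arriving job finds a compatible idle server. First I would invoke PASTA: a class-$i$ arrival sees the stationary partially aggregated state, and in particular its ordered set of busy servers $\vec{b}_l$. Since $i\in R(\vec{b}_l)$ exactly when $S_i\subseteq\vec{b}_l$, i.e.\ when every server compatible with class $i$ is busy, the condition $i\notin R(\vec{b}_l)$ means some compatible server is idle; then the job enters service at once and $T^i_Q(\vec{b}_l)=0$, which is precisely the indicator $I(i\in R(\vec{b}_l))$ vanishing. The remaining work is the case $i\in R(\vec{b}_l)$.

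For that case I would first identify the segments that can hold class-$i$ jobs. Because $\vec{b}_1\subseteq\vec{b}_2\subseteq\cdots\subseteq\vec{b}_l$, the sets $R(\vec{b}_j)$ are nondecreasing, so $i\in R(\vec{b}_j)$ holds exactly for $j\ge f(i,\vec{b}_l)$; this also reconciles the two descriptions of $f$, since the smallest $j$ with $i\in R(\vec{b}_j)$ is the index of the last server of $S_i$ in the order $\vec{b}_l$, namely $\max\{j:b_j\in S(i)\}$. Hence every class-$i$ job waiting ahead of the tagged job lies in a segment $j\in\{f(i,\vec{b}_l),\dots,l\}$. From Theorem~\ref{thm:partialagg_rais} combined with Lemma~\ref{split}, conditioned on $\vec{b}_l$ the number of class-$i$ jobs in segment $j$ is geometric with mean $\lambda_i/(\mu(\vec{b}_j)-\lambda(R(\vec{b}_j)))$, and distributional Little's law then gives that a class-$i$ job spends time $T^{M/M/1}(\lambda_i,\mu(\vec{b}_j)-\lambda(R(\vec{b}_j))+\lambda_i)$ in segment $j$.

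I would then assemble the queueing time. Since class-$i$ jobs are served FCFS by the servers of $S_i\subseteq\vec{b}_{f(i,\vec{b}_l)}$, the tagged job must clear every older class-$i$ job before entering service, which it does by moving successively through segments $l,l-1,\dots,f(i,\vec{b}_l)$ and accruing in each the per-segment time above. The contributions are independent because, conditioned on $\vec{b}_l$, the product form of Theorem~\ref{thm:partialagg_rais} factors the gap counts $n_j$ (and therefore the per-segment class-$i$ counts and sojourn times) into independent geometrics. Summing over $j=f(i,\vec{b}_l),\dots,l$ and attaching the indicator yields the stated formula; the identical expression holds under ALIS by Corollary~\ref{cor:agg_alis}, since the conditional law of the queue given the busy servers coincides for the two assignment rules.

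The step I expect to be the main obstacle is the assembly above: rigorously justifying that the total wait splits into a sum of \emph{independent} per-segment sojourn times even though the busy-server configuration $\vec{b}_l$ itself changes while the tagged job waits. The clean way around this is to rely on distributional Little's law at the level of each segment's class-$i$ sub-queue, a steady-state identity that does not require freezing $\vec{b}_l$, together with the product-form independence of the segments, rather than attempting a transient sample-path computation of the job's progress.
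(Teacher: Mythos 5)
Your proposal is correct and follows essentially the same route as the paper: conditioning on $\vec{b}_l$, reading off the geometric per-segment job counts from the partially aggregated product form, applying the Bernoulli-splitting lemma (Lemma~\ref{split}) to get the per-segment class-$i$ counts, and then invoking distributional Little's law segment by segment to convert these into a sum of independent $T^{M/M/1}$ terms over $j = f(i,\vec{b}_l),\dots,l$. The additional details you supply --- the PASTA case split behind the indicator, the reconciliation of the two characterizations of $f(i,\vec{b}_l)$, and the explicit caveat about the busy-server configuration evolving during the wait --- are refinements of the paper's argument rather than a departure from it.
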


We note that, unlike the results for nested systems derived in Section~\ref{sec:nested}, here the per-class queueing time distribution is conditioned on the ordered vector of busy servers.
In general it is not straightforward to obtain closed-form expressions for the probability that a job sees a particular $\vec{b}_l$.
Hence, while this form is insightful in terms of interpreting the time in queue as that in a tandem series of M/M/1 queues, the form does not permit an easy derivation of mean time in queue or other exact performance metrics.

\subsection{Collaborative Model}

In this section we note briefly that a similar partially aggregated state can be defined for the collaborative model. Here the partially aggregated state description is $(\vec{d}_{l},\vec{n}_{l})$, where $\vec{d}_l$ gives the classes of all jobs currently in service, and
$\vec{n}_l$ gives the number of jobs in the queue (receiving no service) in between those jobs in service.
This is similar to the $(\vec{b}_l,\vec{n}_l)$ state used for RAIS, except that now we track the classes of the job in service rather than the servers processing these jobs. We define $\mu(\vec{d}_i)$ as the total service rate given to the first $i$ jobs that are receiving service, and $R(\vec{d}_i)$ as the classes of jobs that require one of the servers serving the jobs in $\vec{d}_i$. That is, $c\in R(\vec{d}_i)$ if $S_c \in S(\vec{d}_i)$.
Note that for $d_i$ to be in service, given $\vec{d}_{i-1}$, we
must have $d_i \notin \vec{d}_{i-1}$. Let 
$\alpha(\vec{d}_i) = \frac{\lambda(R(\vec{d}_i))}{\mu(\vec{d}_i) }$.

\begin{proposition}
	\label{dn pi}For $l=0,\ldots ,M$, $\vec{d}_{l}$ such that 
	$d_i \notin \vec{d}_{i-1}$ for $i=2,\ldots ,l$, and $%
	n_{i}=0,1,\ldots $ for $i=1,\ldots ,l$, 
	\[
	\pi^{C'} (\vec{d}_{l},\vec{n}_{l})=\pi^C \left( \emptyset\right) \prod\limits_{j=1}^{l}%
	\frac{\lambda _{d_{j}}}{\mu (\vec{d}_{j})}\alpha (\vec{d}_{j})^{n_{j}}. 
	\]
\end{proposition}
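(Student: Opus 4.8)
The plan is to mirror the aggregation argument used in the proof of Theorem~\ref{thm:partialagg_rais}, obtaining $\pi^{C'}(\vec{d}_{l},\vec{n}_{l})$ by summing the detailed-state probabilities $\pi^{C}(\vec{c}_{n})$ of Theorem~\ref{thm:collaborative pi} over all detailed states $\vec{c}_{n}$ that aggregate to the given $(\vec{d}_{l},\vec{n}_{l})$. First I would characterize exactly which detailed states are consistent with $(\vec{d}_{l},\vec{n}_{l})$. A job in position $i$ of $\vec{c}_{n}$ is ``in service'' in the collaborative model precisely when $\Delta_{i}(\vec{c}_{i})=\mu(\vec{c}_{i})-\mu(\vec{c}_{i-1})>0$, i.e., when it is the oldest job compatible with some server and therefore introduces a new server into $S(\vec{c}_{i})$. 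Hence the in-service jobs form the subsequence $\vec{d}_{l}$, and every job lying strictly between the $j$th and $(j+1)$st in-service jobs must introduce no new server, i.e., its class $c$ must satisfy $S_{c}\subseteq S(\vec{d}_{j})$, which is exactly the condition $c\in R(\vec{d}_{j})$. A consistent detailed state is thus any interleaving
\[
\vec{c}_{n}=(d_{1},\underbrace{\,\cdots\,}_{n_{1}\text{ classes in }R(\vec{d}_{1})},d_{2},\underbrace{\,\cdots\,}_{n_{2}\text{ classes in }R(\vec{d}_{2})},\dots,d_{l},\underbrace{\,\cdots\,}_{n_{l}\text{ classes in }R(\vec{d}_{l})}),
\]
where each group of $n_{j}$ waiting jobs is an arbitrary ordered choice of classes drawn from $R(\vec{d}_{j})$.

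Next I would evaluate $\pi^{C}(\vec{c}_{n})$ for such a state. Using the OI property (Definition~\ref{def:OI}(ii)) that $\mu$ depends only on the set of classes present, the partial product $\mu(\vec{c}_{i})$ equals $\mu(\vec{d}_{j})$ both at the position of $d_{j}$ and at every waiting position between $d_{j}$ and $d_{j+1}$, since adjoining a class in $R(\vec{d}_{j})$ leaves $S$ unchanged. Substituting into the product form of Theorem~\ref{thm:collaborative pi} gives
\[
\pi^{C}(\vec{c}_{n})=\pi^{C}(\emptyset)\prod_{j=1}^{l}\frac{\lambda_{d_{j}}}{\mu(\vec{d}_{j})}\prod_{\text{waiting jobs after }d_{j}}\frac{\lambda_{c}}{\mu(\vec{d}_{j})}.
\]
Summing over all consistent $\vec{c}_{n}$ then amounts to summing independently over the classes of the waiting jobs in each of the $l$ groups, which factorizes so that each group contributes $\big(\sum_{c\in R(\vec{d}_{j})}\lambda_{c}/\mu(\vec{d}_{j})\big)^{n_{j}}=\alpha(\vec{d}_{j})^{n_{j}}$, yielding the claimed expression.

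The hard part will be the first step: justifying the correspondence between the aggregated state and this family of detailed states, and in particular confirming that the in-service classes $d_{1},\dots,d_{l}$ are forced to be distinct. This is where the stated hypothesis $d_{i}\notin\vec{d}_{i-1}$ enters, since if $d_{i}=d_{i'}$ for some $i'<i$ then $S_{d_{i}}\subseteq S(\vec{d}_{i-1})$, so $d_{i}$ would introduce no new server, contradicting its being in service. I would also want to confirm that the classification of each position as in-service or waiting is determined unambiguously by the class sequence through the monotonicity of $\mu(\vec{c}_{i})$, so that the map from detailed states to aggregated states is well defined and its fibers are exactly the interleavings described above. Once this characterization is pinned down, the remaining computation is the same geometric-sum factorization as in the RAIS case and presents no difficulty.
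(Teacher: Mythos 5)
Your proposal is correct and matches the paper's approach exactly: the paper itself omits the details of this proof, stating only that it follows by aggregating the detailed-state probabilities of Theorem~\ref{thm:collaborative pi} over the states $\vec{c}_n$ consistent with $(\vec{d}_l,\vec{n}_l)$, in parallel with the proof of Theorem~\ref{thm:partialagg_rais}. Your characterization of the consistent detailed states (in-service positions are exactly those where $\mu(\vec{c}_i)$ strictly increases, with the intervening classes constrained to lie in $R(\vec{d}_j)$) together with the geometric factorization of the sum supplies precisely the omitted details.
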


The proof follows a similar argument to the proof of Theorem~\ref{thm:partialagg_rais} by aggregating detailed states consistent with $(\vec{d}_l,\vec{n}_l)$; we omit the details. 

As for the noncollaborative system, we can use Proposition~\ref{dn pi} and the distributional form of Little's Law to determine the distribution of $T_Q^i(\vec{d}_l)$, the conditional waiting time until 
a job starts service on at least one server for any class $i$, given $\vec{d}_{l}$.

\begin{corollary}
	\[
	T^i_{Q}(\vec{d}_{l})=I(i\in R(\vec{d}_{l})) \sum_{j=f(i,\vec{d}%
		_{l})}^{l}T^{M/M/1}(\lambda _{i},\mu (\vec{d}_{j})-\lambda (R(\vec{d}_{j}))+\lambda
	_{i}) 
	\]%
	where $I(\cdot)$ is the indicator function, and $f(i,\vec{d}_{l})=\arg
	\min \{j:0\leq j\leq l,i\in R(\vec{d}_{j})\}=\arg \max \{j:0\leq j\leq
	l,S(d_{j}) \cup S(i) \neq \emptyset$ is the largest indexed job in service that is using a server in $S_i$.
\end{corollary}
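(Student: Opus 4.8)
The plan is to derive this result exactly as the corresponding noncollaborative queueing-time theorem is derived, replacing the busy-server vector $\vec{b}_l$ by the in-service class vector $\vec{d}_l$ and using Proposition~\ref{dn pi} in place of Theorem~\ref{thm:partialagg_rais}. The starting observation is that, once we condition on $\vec{d}_l$, the factor $\prod_{j=1}^{l}\lambda_{d_j}/\mu(\vec{d}_j)$ in Proposition~\ref{dn pi} is a constant, so the conditional law of $\vec{n}_l$ is proportional to $\prod_{j=1}^{l}\alpha(\vec{d}_j)^{n_j}$. This factorizes across $j$, so the $n_j$ are conditionally independent with $n_j\sim geom(1-\alpha(\vec{d}_j))$, where $\alpha(\vec{d}_j)=\lambda(R(\vec{d}_j))/\mu(\vec{d}_j)$.

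First I would identify the classes of the waiting jobs. A job waiting in the gap behind $d_j$ must have all of its compatible servers among those already committed to $d_1,\dots,d_j$, i.e.\ its class lies in $R(\vec{d}_j)$; and each position in that gap contributes a factor $\lambda_c/\mu(\vec{d}_j)$ to the detailed product form underlying Proposition~\ref{dn pi}, so conditioned on being one of these waiting jobs its class is $c$ with probability $\lambda_c/\lambda(R(\vec{d}_j))$. Applying Lemma~\ref{split} with $p=\alpha(\vec{d}_j)$ and type-$i$ probability $q_i/p=\lambda_i/\lambda(R(\vec{d}_j))$ then gives that the number of class-$i$ jobs in this gap is geometric with parameter $1-\lambda_i/(\mu(\vec{d}_j)-\lambda(R(\vec{d}_j))+\lambda_i)$. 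Next I would invoke the distributional form of Little's law (as used in the proof of Corollary~\ref{fully flexible response}): since class-$i$ jobs are served in FCFS order and their number in the gap behind $d_j$ is geometric with the above parameter, the time a class-$i$ job spends in that ``subsystem'' is distributed as $T^{M/M/1}(\lambda_i,\mu(\vec{d}_j)-\lambda(R(\vec{d}_j))+\lambda_i)$.

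To assemble these into the stated sum, I would use monotonicity of $R(\vec{d}_j)$: since $S(\vec{d}_{j-1})\subseteq S(\vec{d}_j)$ we have $R(\vec{d}_{j-1})\subseteq R(\vec{d}_j)$, so $\{j:i\in R(\vec{d}_j)\}$ is the up-set $\{f(i,\vec{d}_l),\dots,l\}$, where $f(i,\vec{d}_l)$ is the first index at which all of $S_i$ has been covered (equivalently, the last in-service job that uses a server of $S_i$). An arriving class-$i$ job has to wait precisely when all of its compatible servers are busy, i.e.\ when $i\in R(\vec{d}_l)$, which accounts for the indicator $I(i\in R(\vec{d}_l))$; otherwise it starts immediately and $T_Q^i(\vec{d}_l)=0$. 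When it does wait, it must traverse the subsystems behind $d_{f(i,\vec{d}_l)},\dots,d_l$ in tandem, and summing the independent M/M/1 sojourn times from the previous step yields the claimed expression.

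The main obstacle is the tandem/sum step: turning the per-gap geometric counts into a sum of \emph{independent} M/M/1 sojourn times requires the same path-wise, FCFS-within-class reasoning that underlies the noncollaborative theorem, namely that a tagged class-$i$ job advances through the gaps behind $d_{f(i,\vec{d}_l)},\dots,d_l$ one subsystem at a time and that its delay in each behaves as an independent M/M/1 sojourn time. I would argue that nothing in this reasoning uses the noncollaborative structure: the collaborative feature that a single in-service job may occupy several servers enters only through $\mu(\vec{d}_j)$ and $R(\vec{d}_j)$, which appear in Proposition~\ref{dn pi} in exactly the roles that $\mu(\vec{b}_j)$ and $R(\vec{b}_j)$ play in Theorem~\ref{thm:partialagg_rais}. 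Hence the derivation transfers verbatim, and the only genuinely new verification is that the conditional geometric structure of $\vec{n}_l$ given $\vec{d}_l$ follows from Proposition~\ref{dn pi}, which is immediate.
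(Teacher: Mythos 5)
Your proposal is correct and follows essentially the same route as the paper: the paper derives this corollary by pointing to Proposition~\ref{dn pi} and the distributional form of Little's Law, mirroring verbatim the argument written out for the noncollaborative case (conditional geometric gap counts, Bernoulli splitting via Lemma~\ref{split}, per-gap M/M/1 sojourn times, and the monotonicity of $R(\vec{d}_j)$ to identify the index range and the indicator). Your write-up simply makes explicit the substitution of $\vec{d}_l$ for $\vec{b}_l$ that the paper leaves implicit.
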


\begin{corollary}
	$T_Q^i \sim \sum_{l} \sum_{\vec{d}_l} T_Q^i(\vec{d}_l) I(\vec{d}_l)$.
\end{corollary}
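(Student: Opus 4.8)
The plan is to prove the statement as a straightforward conditioning (law of total probability) argument, unconditioning the per-state queueing times $T_Q^i(\vec{d}_l)$ supplied by the previous corollary over the configuration of jobs in service seen by an arriving class-$i$ job. Here I read $I(\vec{d}_l)$ as the indicator random variable for the event that an arriving class-$i$ job finds the ordered classes of jobs in service to be exactly $\vec{d}_l$, exactly as the indicators $I_i$ and $I(i\in R(\cdot))$ were used as selectors multiplying queueing times in the noncollaborative results above.

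First I would appeal to PASTA: since class-$i$ jobs arrive according to a Poisson process, an arriving class-$i$ job observes the system in its stationary regime, so the probability that it finds jobs-in-service vector $\vec{d}_l$ equals the marginal stationary probability $\sum_{\vec{n}_l}\pi^{C'}(\vec{d}_l,\vec{n}_l)$, which is read off from Proposition~\ref{dn pi} by summing the geometric factors $\alpha(\vec{d}_j)^{n_j}$ over all $n_j\ge 0$. Because the events $\{I(\vec{d}_l)=1\}$, ranging over all $l$ and all admissible $\vec{d}_l$, are mutually exclusive and exhaustive, we have $\sum_{l}\sum_{\vec{d}_l} I(\vec{d}_l)=1$ almost surely, so the right-hand side of the claimed identity selects exactly one term, namely the one corresponding to the $\vec{d}_l$ actually observed.

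Next I would condition on $\{I(\vec{d}_l)=1\}$ and invoke the previous corollary, which already gives the conditional law of the queueing time, given that the observed jobs in service are $\vec{d}_l$, as $T_Q^i(\vec{d}_l)$. That corollary has integrated out $\vec{n}_l$ via its conditional (geometric) distribution given $\vec{d}_l$, so no further averaging over the number of waiting jobs is required here; in particular, when $i\notin R(\vec{d}_l)$ the indicator $I(i\in R(\vec{d}_l))$ appearing in $T_Q^i(\vec{d}_l)$ forces the contribution to be $0$, correctly capturing the case in which the arriving job finds an idle compatible server and begins service at once. Constructing the $T_Q^i(\vec{d}_l)$ as independent copies of these conditional laws, independent of the selection indicators, the sum $\sum_{l}\sum_{\vec{d}_l} T_Q^i(\vec{d}_l)\,I(\vec{d}_l)$ has, on each event $\{I(\vec{d}_l)=1\}$, conditional law equal to that of $T_Q^i(\vec{d}_l)$, with mixing weights $\sum_{\vec{n}_l}\pi^{C'}(\vec{d}_l,\vec{n}_l)$; this is precisely the law-of-total-probability representation of $T_Q^i$.

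The only substantive content is the justification already carried out in the previous corollary, namely that, conditioned on $\vec{d}_l$, the FCFS-within-class discipline together with the geometric segment occupancies of Proposition~\ref{dn pi} make distributional Little's law applicable, yielding the tandem-of-M/M/1 form of $T_Q^i(\vec{d}_l)$. Given that result, the present statement requires no further analysis of the dynamics; I therefore expect the main (and modest) obstacle to be purely expository: stating carefully what $I(\vec{d}_l)$ denotes and confirming that PASTA supplies the correct mixing weights, so that the assembly of conditional distributions into a mixture is rigorous.
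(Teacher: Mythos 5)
Your proposal is correct and matches the paper's (implicit) reasoning: the paper states this corollary without proof as an immediate unconditioning of the preceding corollary, and your PASTA plus law-of-total-probability argument, with $I(\vec{d}_l)$ as the indicator that an arriving class-$i$ job finds jobs-in-service vector $\vec{d}_l$ and mixing weights $\sum_{\vec{n}_l}\pi^{C'}(\vec{d}_l,\vec{n}_l)$, is exactly the intended assembly. You also correctly identify that all substantive content resides in the preceding corollary's derivation of the conditional law $T_Q^i(\vec{d}_l)$.
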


\section{Per-class State Aggregation and Mean Performance Measures}
\label{sec:per_class}

In this section we consider class-based performance measures for the OI
queue (and hence the collaborative model, and for the queue of the noncollaborative model given all servers busy) with general, non-nested, bipartite
structures. Here it is useful to define the per-class aggregated state $%
x=(x_{1},...,x_{N})$, where $x_{i}$ is the total number of type $i$ jobs.
Let $C(x)$ be set of states of the form $\vec{c}_{n}$ that are consistent
with $x$, i.e., $\vec{c}_{n}\in C(x)$ if and only if $x_{i}=\sum_{j=1}^{n}I%
\{c_{j}=i\}$ for all classes $i$. so $n=\sum x_{i}$. Abusing notation, let $%
\mu (x)$ be the total service rate in state $x$; from OI property (ii) (see
Section~\ref{sec:detailed_states}), $\mu (x)=\mu (\vec{c}_{n})$ for all
states $\vec{c}_{n}\in C(x)$. Then $\pi ^{X}(x)=\sum_{\vec{c}_{n}\in
	C(x)}\pi (\vec{c})$, where $\pi ^{X}(x)$ is the steady-state probability of
aggregate state $x$ and, by aggregating the partial balance equations, is given by
\[
\mu (x)\pi ^{X}(x)=\sum_{i:x_{i}>0}\lambda _{i}\pi ^{X}(x-e_{i}),
\]%
where $e_{i}$ is a vector of appropriate length containing a 1 in position $i
$ and 0's elsewhere. Then $\pi ^{X} (x)$ also has a product form. The result
follows from summing the product form characterizations of $\pi ^{X} (\vec{c})$.

\begin{theorem}
	\label{aggregate pi}(Bonald and Comte \cite{BC}, Krzesinski \cite{K})%
	\[
	\pi ^{X} (x)=\pi ^{C}_X(\emptyset )\Phi (x)\prod\limits_{i =1}^{N}\lambda
	_{i}^{x_{i}} 
	\]%
	where 
	\[
	\Phi (x)=\frac{1}{\mu (x)}\sum_{i:x_{i}>0}\Phi (x-e_{i})\text{, }\Phi
	(\emptyset )=1
	\]
	and $\pi^X(\emptyset) = \pi^C(\emptyset)$ is a normalizing constant equal to the probability that the system is empty.
\end{theorem}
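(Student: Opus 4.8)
The plan is to verify that the proposed product form satisfies the aggregated balance recursion $\mu(x)\pi^X(x)=\sum_{i:x_i>0}\lambda_i\pi^X(x-e_i)$ stated just above, together with the boundary value $\pi^X(\emptyset)=\pi^C(\emptyset)$. Since this recursion determines $\pi^X(x)$ uniquely by induction on $|x|=\sum_i x_i$, matching it identifies the aggregated distribution. First I would make the aggregated balance equation itself rigorous. Starting from the detailed relation $\mu(\vec{c}_n)\pi^C(\vec{c}_n)=\lambda_{c_n}\pi^C(\vec{c}_{n-1})$, which is the departure-equals-arrival partial balance established in the proof of Theorem~\ref{thm:collaborative pi}, I would sum over all $\vec{c}_n\in C(x)$. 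On the left, OI property (ii) gives $\mu(\vec{c}_n)=\mu(x)$ throughout $C(x)$, so the sum is $\mu(x)\pi^X(x)$. On the right, I would group the states by their last coordinate: for each class $i$ with $x_i>0$, deleting the trailing $i$ is a bijection between $\{\vec{c}_n\in C(x):c_n=i\}$ and $C(x-e_i)$, so the right side collapses to $\sum_{i:x_i>0}\lambda_i\pi^X(x-e_i)$.

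With the recursion in hand, the second step is a direct substitution. Writing $\pi^X(x)=\pi^C(\emptyset)\Phi(x)\prod_j\lambda_j^{x_j}$ and inserting it into the right-hand side, each summand contributes $\lambda_i\,\pi^C(\emptyset)\Phi(x-e_i)\prod_j\lambda_j^{(x-e_i)_j}$; since $\lambda_i\prod_j\lambda_j^{(x-e_i)_j}=\prod_j\lambda_j^{x_j}$, the common factor $\pi^C(\emptyset)\prod_j\lambda_j^{x_j}$ pulls out of the sum, leaving $\sum_{i:x_i>0}\Phi(x-e_i)$. The defining recursion $\Phi(x)=\mu(x)^{-1}\sum_{i:x_i>0}\Phi(x-e_i)$ converts this into $\mu(x)\Phi(x)$, reproducing exactly $\mu(x)\pi^X(x)$, the left-hand side. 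The boundary is immediate: $\Phi(\emptyset)=1$ and the empty product give $\pi^X(\emptyset)=\pi^C(\emptyset)$, and well-definedness of $\Phi$ follows because its recursion decreases $|x|$.

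An equivalent route I would keep in reserve is to compute $\pi^X(x)=\sum_{\vec{c}_n\in C(x)}\pi^C(\vec{c}_n)$ directly from the product form of Theorem~\ref{thm:collaborative pi}. Because $\prod_{i=1}^n\lambda_{c_i}=\prod_j\lambda_j^{x_j}$ depends only on the multiset of classes, this sum factors as $\pi^C(\emptyset)\prod_j\lambda_j^{x_j}\sum_{\vec{c}_n\in C(x)}\prod_{i=1}^n\mu(\vec{c}_i)^{-1}$, so one identifies $\Phi(x)$ with $\sum_{\vec{c}_n\in C(x)}\prod_{i=1}^n\mu(\vec{c}_i)^{-1}$ and checks that this expression obeys the stated recursion by the same last-coordinate grouping.

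I do not expect a serious obstacle; the single point demanding care is conceptual rather than computational. The per-class process is \emph{not} Markov (the class-$i$ departure rate $\sum_{j:c_j=i}\Delta_j(\vec{c}_n)$ genuinely depends on the order of the jobs), so the equation $\mu(x)\pi^X(x)=\sum_{i:x_i>0}\lambda_i\pi^X(x-e_i)$ must be read as an aggregated consequence of the detailed partial balance, not as an intrinsic balance equation of a Markov chain on the $x$-states. The argument is valid precisely because $\pi^X$ is defined as the marginal of the already-known detailed stationary distribution, and the recursion is used only to identify that marginal with the claimed closed form.
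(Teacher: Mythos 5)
Your proposal is correct and follows essentially the same route as the paper: the paper likewise derives the aggregated balance recursion $\mu(x)\pi^X(x)=\sum_{i:x_i>0}\lambda_i\pi^X(x-e_i)$ by aggregating the detailed partial balance equations, and then obtains the product form by summing the detailed product-form probabilities over $C(x)$. You have merely filled in details the paper leaves implicit (the last-coordinate bijection, the substitution check, and uniqueness by induction on $|x|$), and your closing caveat that the per-class state is not a Markov description matches the paper's own remark immediately following the theorem.
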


Note that the aggregate state description does not capture the dynamics of
the OI queue and is not a Markov description for the original system. While
the order of the jobs, given $x$, does not matter for the {\em total}
service rate, $\mu (x)$, it does matter for the amount of service received
by the $j$'th job in the queue, $\Delta _{j}(\vec{c}_{j})=\Delta _{j}(\vec{c}%
_{n})$. We therefore need to know $\vec{c}_{n}$---or at least the class of
the job in service on each server---to know the rate out of the state due to
a class-$i$ departure.

As observed by Bonald and Comte, the stationary aggregate distribution $\pi^C
_{X}(x)$ given in Theorem \ref{aggregate pi} is also the stationary
distribution of a single-server system consisting of $N$ job classes with
Poisson arrivals at rates $\lambda _{i}$ and state-dependent exponential
service rates such that class-$i$ jobs are served according to processor
sharing at rate 
\[
\phi _{i}(x)=[\Phi (x-e_{i})/\Phi (x)]I(x_{i}>0).
\]%
We call the single-server model with service rates $\phi _{i}(x)$ the {\em %
	aggregate model}. Bonald and Comte also noted the following relationship
between the aggregate model and the collaborative model: 
\[
\phi _{i}(x)=\sum_{\vec{c}_{n}\in C(x)}\frac{\pi^{C} (\vec{c}_{n})}{\pi ^X(x)}%
\mu _{i}^{\prime }(\vec{c}_{n}),
\]%
where $\mu _{i}^{\prime }(\vec{c}_{n})$ is the service rate of the first
class-$i$ job in state $\vec{c}_{n}$ in the collaborative model. (Because of
the FCFS service discipline, no other class-$i$ jobs will be in service.)
Note that from Theorem \ref{aggregate pi}, $\sum_{i}\phi _{i}(x)=\mu (x)$.

The service rates $\phi_i(x)$ also satisfy the following {\em balance
	property}:%
\[
\phi _{i}(x)\phi _{j}(x-e_{i})=\phi _{j}(x)\phi _{i}(x-e_{j}) 
\]%
for $x_i,x_j > 0$. The balance property is analogous to the assignment
condition required for the noncollaborative model with random assignment to
idle servers to have a product-form stationary distribution. Furthermore,
the balance property leads to Kolmogorov's criterion being satisfied, and
therefore it leads to the system being reversible. 
Because the aggregate model is reversible, it is also insensitive to the job
size distributions.

While the stationary distribution of the states $x$ is the same in the
aggregate model and the collaborative model, as noted above, the underlying system
dynamics are very different. Bonald and Comte propose applying a round robin-like
scheduling algorithm to the original collaborative model to
approximate the behavior of the aggregate model~\cite{BC}. Under their
algorithm, server $j$ serves the first compatible job in the queue, as in
the original, collaborative, FCFS model, but, after an exponential time with
rate $\theta _{j}$, server $j$ interrupts the job in service and that job is
moved to the back of the queue. This is analogous to approximating processor
sharing with round robin for a single server and job class. Bonald and Comte
note that the aggregate model, using balanced fair processor sharing, is insensitive in
that the steady-state distribution does not depend on the job size
distribution.

Because the aggregate model and the collaborative model have the same steady-state
distribution for the aggregate states, we can use the aggregate model to
efficiently compute aggregate performance measures for the collaborative
model. In particular, Bonald et al.~\cite{BCM} give a recursion based on successively
removing servers for computing the system idle probability, $\pi^{C} (\emptyset
)=\pi ^{X}(\emptyset )$, as follows.

Let ${\cal C}$ be the set of all (detailed) states $\vec{c}_{n}$ for the
original collaborative model. Recall that the subscript $\vdash k$ represents a
reduced system without server $k$, i.e., in which server $k$ as well as the
job classes in $C_{k}$ are removed. Let $\psi _{k}$ be the probability that
server $k$ is idle in the original collaborative system. Then, from Corollary \ref{reduced}, we have that $\pi ^{C} (\vec{c}_{n}|$ server $k$ is idle$%
)=\pi ^{C} _{\vdash k}(\vec{c}_{n})$ for $\vec{c}_{n}\in {\cal C}_{\vdash k}$, so $\pi ^{C} (%
\vec{c}_{n})=$ $\pi ^{C} _{\vdash k}(\vec{c}_{n})\psi _{k}$. Then $\pi ^{C} (\emptyset
)= \pi ^{X} (\emptyset) = \pi ^{C} _{\vdash k}(\emptyset )\psi _{k}$ (and hence $\psi _{k}$) can be computed recursively:

\begin{proposition}
	(Bonald et al. \cite{BCM}) 
	\[
	\pi ^{C} (\emptyset ) = \pi ^{X} (\emptyset )=(1-\rho )\frac{\mu }{\sum_{k=1}^{M}\frac{\mu _{k}}{\pi ^{C}
			_{\vdash k}(\emptyset )}}, 
	\]%
	where $\rho =\lambda /\mu $ is the system load.
\end{proposition}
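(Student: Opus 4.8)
The plan is to combine the per-server idle probabilities $\psi_k$ with a single throughput identity. The first step is already essentially supplied: since server $k$ is idle exactly when the state lies in $\mathcal{C}_{\vdash k}$ (no job in the system is compatible with server $k$), applying Corollary~\ref{reduced}(ii) with $B=\{k\}$ and $\vec{c}_n=\emptyset$ gives $\pi^{C}(\emptyset)/\psi_k=\pi^{C}_{\vdash k}(\emptyset)$, so
\[
\psi_k=\frac{\pi^{C}(\emptyset)}{\pi^{C}_{\vdash k}(\emptyset)}.
\]
It therefore remains only to produce one additional equation tying the $\psi_k$ together, which I would obtain from a work-conservation argument.

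The key step is to show $\sum_{k=1}^{M}\mu_k\psi_k=\mu-\lambda$, equivalently that the mean total service rate satisfies $E[\mu(\vec{C})]=\lambda$. I would prove $E[\mu(\vec{C})]=\lambda$ directly by summing the aggregated balance equation $\mu(x)\pi^{X}(x)=\sum_{i:x_i>0}\lambda_i\pi^{X}(x-e_i)$ over all aggregate states $x$: the left-hand side is $E[\mu(\vec{C})]$, while re-indexing the right-hand side by $y=x-e_i$ collapses it to $\sum_i\lambda_i\sum_y\pi^{X}(y)=\sum_i\lambda_i=\lambda$. I would then rewrite the left-hand side per server: because $S(\vec{c}_n)$ is precisely the set of busy servers and $\mu(\vec{c}_n)=\sum_{m\in S(\vec{c}_n)}\mu_m$, we have $E[\mu(\vec{C})]=\sum_k\mu_k\,P\{k\text{ busy}\}=\sum_k\mu_k(1-\psi_k)=\mu-\sum_k\mu_k\psi_k$. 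Equating the two expressions yields $\sum_k\mu_k\psi_k=\mu-\lambda$.

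Finally I would substitute $\psi_k=\pi^{C}(\emptyset)/\pi^{C}_{\vdash k}(\emptyset)$ into this identity to obtain $\pi^{C}(\emptyset)\sum_{k=1}^{M}\mu_k/\pi^{C}_{\vdash k}(\emptyset)=\mu-\lambda=\mu(1-\rho)$, which rearranges to the stated formula, using $\pi^{X}(\emptyset)=\pi^{C}(\emptyset)$ from Theorem~\ref{aggregate pi}. The main obstacle is the throughput identity $E[\mu(\vec{C})]=\lambda$; the rest is bookkeeping. The points requiring care are that the interchange of summation in the balance-equation argument is legitimate (it is, since all terms are nonnegative and $\pi^{X}$ is summable under the stability condition, so Tonelli applies), and that ``server $k$ busy'' is correctly identified with the event $k\in S(\vec{C})$, giving $P\{k\text{ busy}\}=1-\psi_k$ with $\psi_k=P\{\vec{C}\in\mathcal{C}_{\vdash k}\}$.
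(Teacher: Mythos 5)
Your proposal is correct and follows essentially the same route as the paper: both reduce the formula, via $\psi_k=\pi^{C}(\emptyset)/\pi^{C}_{\vdash k}(\emptyset)$, to the identity $\sum_{k}\mu_k\psi_k=\mu-\lambda$. The only difference is that the paper justifies that identity with a one-line rate-conservation remark (two ways of counting ``dummy'' potential completions at idle servers), whereas you derive it formally by summing the aggregated balance equations — a welcome, slightly more rigorous filling-in of the same argument.
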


\begin{proof}
	Algebra, using $\pi ^{C} _{\vdash k}(\emptyset )=\pi ^{C} (\emptyset )/\psi _{k}$, gives us
	that the equation above is equivalent to 
	\[
	\sum_{k=1}^{M}\mu _{k}\psi _{k}=\mu -\lambda \text{.} 
	\]%
	This just represents two ways of computing the long-run rate of
	\textquotedblleft dummy\textquotedblright\ transitions, i.e., potential
	service completions at idle servers.
\end{proof}

Recall that the collaborative model is equivalent to the directed
bipartite matching model, in which servers of type $k$ arrive
according to a Poisson process at rate 
$\mu_k$, and arriving servers that do not find compatible jobs
(unmatched servers)
immediately leave the system. Here the interpretation of $\psi_k$
is the probability that an arriving server of type $k$ is unmatched, and ``dummy'' transitions
correspond to arrivals of unmatched servers. See Weiss~\cite{W} for
an alternative algorithm to compute $\psi_k$ and $\pi^C (\emptyset)$,
as well as for computing the long-run matching rates of class $i$ jobs
with class $k$ servers.

The mean number of jobs $L$ and the mean number of class-$i$ jobs $L_{i}$,
with $L_{\vdash k}$ and $L_{i \vdash k}$ similarly defined for the reduced system
without server $k$ and its compatible job classes, can be similarly
recursively calculated. Note that $L_{\vdash k}$ and $L_{i \vdash k}$ are also the
conditional mean number of class-$i$ jobs in the original system, given
server $k$ is idle. Let $\bar{S}_{i}=\{1,\ldots ,M\}\backslash S_{i}$ denote
the set of servers that cannot serve class-$i$ jobs, and let $\rho
_{i}=\lambda _{i}/(\mu -(\lambda -\lambda _{i}))$ be the mean number of
class-$i$ jobs in an M/M/1 queue with arrival rate $\lambda _{i}$ and
service rate $\mu -(\lambda -\lambda _{i})$.

\begin{proposition}
	\begin{align*}
	L_{i}& =\frac{\lambda _{i}+\sum_{k\in \overline{S}_{i}}\mu _{k}\psi
		_{k}L_{i|-k}}{\mu -\lambda }=\frac{\lambda _{i}}{\mu -\lambda }+\sum_{k\in 
		\overline{S}_{i}}\frac{\mu _{k}\psi _{k}}{\mu -\lambda }L_{i|-k} \\
	L& =\sum_{i=1}^{J}L_{i}=\frac{\lambda +\sum_{k=1}^{M}\mu _{k}\psi _{k}L_{\vdash k}%
	}{\mu -\lambda }=\frac{\lambda }{\mu -\lambda }+\frac{\sum_{k=1}^{M}\mu
		_{k}\psi _{k}L_{\vdash k}}{\sum_{k=1}^{M}\mu _{k}\psi _{k}}.
	\end{align*}
\end{proposition}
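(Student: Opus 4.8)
The plan is to derive both identities from a single cross-moment computation, namely evaluating $\mathbb{E}[\mu(\vec{C})\,X_i]$ in two different ways, where $X_i=X_i(\vec{C})$ denotes the number of class-$i$ jobs in steady state (so $L_i=\mathbb{E}[X_i]$) and $N=\sum_i X_i$ is the total number in system (so $L=\mathbb{E}[N]$). First I would note that each stated formula is equivalent, after clearing the denominator $\mu-\lambda$, to a linear relation; for the per-class formula this is $(\mu-\lambda)L_i=\lambda_i+\sum_{k\in\overline{S}_i}\mu_k\psi_k L_{i|-k}$, and the two displayed expressions for $L_i$ are simply this relation divided through by $\mu-\lambda$ and the fraction split. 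Summing the relation over $i$ and reindexing the double sum (classes removed in the reduced system contribute nothing, so the inner sum collapses to $L_{\vdash k}$), together with the identity $\sum_{k=1}^{M}\mu_k\psi_k=\mu-\lambda$ established in the preceding proposition, yields the formula for $L$ and its alternative form. So the whole proposition reduces to establishing the single relation for $L_i$.

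Step~1 (decompose $\mu(\vec{C})$ and condition on idle servers). Since $\mu(\vec{c}_n)=\sum_{m\in S(\vec{c}_n)}\mu_m$ is the combined rate of the busy servers, I would write $\mu(\vec{C})=\mu-\sum_{k}\mu_k\,I\{k\text{ idle}\}$, a server $k$ being idle exactly when $k\notin S(\vec{C})$. Taking expectations against $X_i$ gives
\[
\mathbb{E}[\mu(\vec{C})\,X_i]=\mu L_i-\sum_{k}\mu_k\,\psi_k\,\mathbb{E}[X_i\mid k\text{ idle}].
\]
Two facts prune the sum. If $k\in S_i$ then an idle server $k$ forces $X_i=0$ (a class-$i$ job present would make $k$ busy), so those terms vanish. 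If $k\in\overline{S}_i$ then, by Corollary~\ref{reduced}(ii), conditioning on server $k$ idle yields exactly the reduced system $\vdash k$ (in which class $i$ survives, since $i\notin C_k$), so $\mathbb{E}[X_i\mid k\text{ idle}]=L_{i|-k}$. Hence $\mathbb{E}[\mu(\vec{C})\,X_i]=\mu L_i-\sum_{k\in\overline{S}_i}\mu_k\psi_k L_{i|-k}$.

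Step~2 (a departure/rate-conservation evaluation). Writing $\mu(\vec{C})=\sum_k d_k(\vec{C})$, where $d_k(\vec{C})$ is the class-$k$ departure rate, rate conservation gives $\mathbb{E}[d_k(\vec{C})\,X_i]=\lambda_k\,\mathbb{E}_k^-[X_i]$, where $\mathbb{E}_k^-$ is the average over class-$k$ departure epochs of the pre-departure value and $\lambda_k$ is the long-run class-$k$ departure rate. By quasi-reversibility (Theorem~\ref{thm:collaborative pi}) the state left behind by a departing job is distributed as $\pi^C$—the same property already invoked in the proof of Theorem~\ref{nested}—so the post-departure mean of $X_i$ is $L_i$. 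Since a class-$i$ departure lowers $X_i$ by one while a class-$k$ departure with $k\neq i$ leaves it unchanged, the pre-departure means are $\mathbb{E}_i^-[X_i]=L_i+1$ and $\mathbb{E}_k^-[X_i]=L_i$ for $k\neq i$, whence
\[
\mathbb{E}[\mu(\vec{C})\,X_i]=\lambda_i(L_i+1)+\sum_{k\neq i}\lambda_k L_i=\lambda L_i+\lambda_i.
\]
Equating the two evaluations gives $(\mu-\lambda)L_i=\lambda_i+\sum_{k\in\overline{S}_i}\mu_k\psi_k L_{i|-k}$, as required; the identical argument with $X_i$ replaced by $N$ (every departure lowers $N$ by one) gives $\mathbb{E}[\mu(\vec{C})N]=\lambda(L+1)$ and hence the $L$ formula directly.

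I expect the main obstacle to be the rate-conservation step in Step~2: making precise both the identity $\mathbb{E}[d_k(\vec{C})\,X_i]=\lambda_k\,\mathbb{E}_k^-[X_i]$ and the claim that the state immediately after a departure is $\pi^C$-distributed. Both are standard consequences of stationarity and quasi-reversibility, but this is where the genuine content lies; the rest is the bookkeeping of Step~1, which uses only the conditioning identity of Corollary~\ref{reduced} and the elementary observation that a compatible idle server precludes waiting jobs of that class.
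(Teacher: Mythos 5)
Your proposal is correct, and it supplies a proof where the paper gives none: the paper states this proposition bare, following it only with an interpretive remark about ``dummy transitions'' and an implicit pointer to Bonald et al.\ for the recursive approach, so there is no in-paper argument to compare against step by step. Your two evaluations of $E[\mu(\vec{C})X_i]$ both check out. Step~1 uses exactly the facts the paper itself records in this section: that a server $k\in S_i$ cannot be idle while a class-$i$ job is present, and that conditioning on server $k$ idle reproduces the reduced system $\vdash k$ (Corollary~3.8(ii), restated just before the proposition as $L_{i\vdash k}=E[X_i\mid k\text{ idle}]$). Step~2 is the genuinely load-bearing part, as you say: the Palm identity $E[d_c(\vec{C})X_i]=\lambda_c E_c^-[X_i]$ is the standard stochastic-intensity/rate-conservation formula (with $\lambda_c$ the departure rate by flow balance under stability), and the claim that the post-departure state is $\pi^C$-distributed is the same quasi-reversibility consequence the paper already invokes in the proof of Theorem~4.4 (``the state left behind by a class-$J$ job will have the same distribution as that seen upon arrival''). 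The reindexing $\sum_i\sum_{k\in\overline{S}_i}\mu_k\psi_k L_{i|-k}=\sum_k\mu_k\psi_k L_{\vdash k}$ and the use of $\sum_k\mu_k\psi_k=\mu-\lambda$ from the preceding proposition are routine. As a confirmation, your relation collapses to $L_i=\lambda_i/(\mu-\lambda)$ when $\overline{S}_i=\emptyset$, matching Theorem~4.3(iv) for a fully flexible class, and it returns $L_1=\lambda_1/(\mu_1-\lambda_1)$ for two disjoint single-class/single-server subsystems. The only items I would ask you to make explicit in a polished write-up are the integrability needed for the cross moment ($d_c\le\mu$ and $L_i<\infty$ under the strict stability condition, so this is harmless) and a sentence justifying that the departure rate of class $c$ equals $\lambda_c$; otherwise the argument is complete.
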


Note that $\frac{\lambda _{i}}{\mu -\lambda }$ is the mean number of class-$i
$ jobs in an M/M/1 queue with arrival rate $\lambda _{i}$ and service rate $%
\mu -(\lambda -\lambda _{i})$ (the maximal service rate available to class-$i
$ jobs), as we argued in Section~\ref{sec:nested}. It also represents the mean
number of class-$i$ jobs in the collaborative (not necessarily nested) model
given all the servers are busy. Also, as noted above, $\mu _{k}\psi
_{k}/(\mu -\lambda )$ represents the proportion of dummy transitions due to
server $k$ being idle, so the second set of terms in the above expression
represent the additional expected jobs due to \textquotedblleft
wasted\textquotedblright\ service because of job/server incompatibilities.
Note that servers in $S_{i}$ will not be idle if there are class $i$ jobs in
the collaborative system. 

Bonald et al.\ use the results above to obtain explicit results for special
cases, such as redundancy-$d$, where all jobs are replicated to a randomly
chosen subset of $d$ servers, and line structures in which job classes can
be ordered so that for any server $k$, the classes of jobs it can serve are
consecutive, i.e., classes $i_{1},i_{1}+1,\ldots ,i_{2}$ for some $%
i_{1}<i_{2}$ \cite{BCM}. Nested structures are a special case of line
structures so the above recursions represent an alternative method for
deriving mean performance metrics to the approach given in
Section~\ref{sec:nested}, where, of course, mean response times follow immediately
from Little's Law.

We note that though the results for this section are for the collaborative
model, in light of our observation that the queue process in the
noncollaborative model given all servers are busy has the same distribution
as the overall process for the collaborative model, we can apply the results
above to the noncollaborative case. That is, e.g., $L_{i}$ as computed above
will equal the expected number of class-$i$ jobs in the queue (not receiving
service) in steady state, given all the servers are busy, for the
noncollaborative model.

\section{Related Work}
\label{sec:related_work}

The majority of this paper has focused on surveying results related to product-form stationary distributions and derivations of performance metrics in the collaborative and noncollaborative systems, under a few key assumptions: that service times are exponentially distributed and  i.i.d.\ (across jobs and across replicas of the same job, in the collaborative model), and that the service discipline is FCFS.
There are several lines of work that relax one or more of these assumptions; such relaxations preclude product-form results, and as such fall outside the scope of this paper.
In this section we provide a brief outline of some of the related work.

We begin with related work within the i.i.d.\ exponential model.
Several papers have considered a scheduling policy that gives priority to less flexible jobs over more flexible jobs; this policy is known as ``dedicated customers first'' in the noncollaborative system and as ``least redundant first'' in the collaborative system. Such a policy has been shown to be optimal in the sense that it stochastically maximizes the departure rate, in both the noncollaborative system~\cite{ARW 2013} and the collaborative system~\cite{GHHR}. Furthermore, in the collaborative case mean response time is decreasing and convex under this policy as the proportion of jobs that are more flexible increases~\cite{GHR}. In a similar vein, the effect of increasing the ``degree'' of flexibility (i.e., the number of servers with which each job is compatible) in systems with FCFS scheduling has been studied. In both the noncollaborative~\cite{ARW 2011} and collaborative~\cite{KRW,GHSVZ} systems, mean response time is decreasing and convex as the degree of redundancy increases.
Gurvich and Whitt~\cite{GW1,GW2,GW3} consider other routing and scheduling policies for the noncollaborative model in the many-server heavy-traffic regime.

The system in which all jobs have the same degree of flexibility and are assigned a set of compatible servers uniformly at random is one special case of the system structure considered in this paper. This special case, often referred to as a ``redundancy-$d$'' system (the $d$ indicating the degree of flexibility), has received considerable attention in the literature because the symmetric system structure makes analysis more feasible in many cases. Indeed, the redundancy-$d$ system is another example of a system in which, under the i.i.d\ exponential and FCFS assumptions of this paper, it is feasible to aggregate the product-form stationary distribution to derive performance metrics~\cite{GHSVZ,ABV}. 

Several papers have noted that when the i.i.d.\ exponential assumptions are eliminated, mean response time no longer decreases as $d$ increases in the collaborative model~\cite{GHSV}, hence recently there has been a focus on developing new dispatching and scheduling policies for systems with correlated or general service times~\cite{GHSV,HH,HH2,RBB}. Finally, some work focuses on deriving the stability region under redundancy-$d$, which becomes much more complicated absent the i.i.d.\ exponential assumptions~\cite{AAJV}.

\section{Conclusion}
\label{sec:conclusion}

This paper presents an overview of product-form results in systems with flexible jobs and servers, in which a bipartite graph structure specifies which job classes can be served by which servers.
We primarily focus on two models for service: the collaborative model, in which multiple servers can work together to serve a single job at a faster rate, and the noncollaborative model, in which each job is permitted to enter service on only one server. Both models have been studied extensively in the literature; this survey brings together the two models, as well as several other related systems, using a common language and set of notation. Our hope is that this will allow readers to draw new connections among these similar systems. Along the way, we have presented several new results that highlight the relationships between models and that show how results derived in one model can be used to obtain insights for the other.

One of the primary goals in analyzing queueing systems is to determine response time distributions; in multi-class systems such as those considered in this paper, we wish to derive per-class response time distributions.
Each of the three state descriptors that we consider allows us to make partial progress towards this goal.
Using the detailed state descriptor of Section~\ref{sec:detailed_states}, we can derive per-class response time distributions for the special case of nested systems.
Using the partially aggregated states of Section~\ref{sec:partial_agg}, we can derive per-class queueing time distributions in general (not necessarily nested) systems, but now conditioned on the ordered set of busy servers.
Using the per-class aggregated states of Section~\ref{sec:per_class}, we can derive unconditional per-class mean performance metrics in general systems, but this approach does not yield distributional results.
Each approach has its advantages and disadvantages; we believe that a unifying analysis that provides unconditional per-class response time distributions is likely to be infeasible, but this remains an open question.

\section{Acknowledgements}
We thank Gideon Weiss, Erol Pekoz, and Jan-Pieter Dorsman for their careful reading and valuable feedback.

\end{document}